\newenvironment{notheorembrackets}{%
\csdef{@spopargbegintheorem}##1##2##3##4##5{\trivlist%
      \item[\hskip\labelsep{##4##1\ ##2}]{##4{##3}\@thmcounterend\ }##5}%
    }{%
\csdef{@spopargbegintheorem}##1##2##3##4##5{\trivlist%
      \item[\hskip\labelsep{##4##1\ ##2}]{##4(##3)\@thmcounterend\ }##5}%
    }
\renewcommand\subsection[1]{\@startsection{subsection}{2}{\z@}%
                       {-5\p@ \@plus -2\p@ \@minus -2\p@}%
                       {-0\p@ \@plus 0\p@ \@minus 0\p@}%
                       {\normalfont\normalsize\bfseries\boldmath}{#1}{\normalfont\normalsize\bfseries.\hspace{2mm}}}
\newcommand{\smooth}{smooth\xspace}
\newcommand{\takeout}[1]{\empty}
\newsavebox{\mypullbackcorner}%
\sbox{\mypullbackcorner}{%
\begin{tikzpicture}
    \draw[-] (0,0) -- (.5em,.5em) -- (0,1em);
\end{tikzpicture}%
}
\newcommand{\pullbackangle}[2][]{\arrow[phantom,to path={
                     -- ($ (\tikztostart)!1cm!#2:([xshift=8cm]\tikztostart) $)
                        node[anchor=west,pos=0.0,rotate=#2,
                        inner xsep = 0]
                        {\begin{tikzpicture}[minimum
                        height=1mm,baseline=0,#1]
    \draw[-] (0,0) -- (.5em,.5em) -- (0,1em);
                        \end{tikzpicture}}}]{}}
\tikzstyle{shiftarr}=[
                   \newsavebox{\kleisliarrow}
\savebox{\kleisliarrow}{%
\begin{tikzpicture}[
      baseline=(arrow.base),
      inner sep=8mm,
      outer sep=0mm,
      ]
      \node[draw=none,
      anchor=base,
      inner sep=0,
      outer sep=0,
      ] (arrow) {$\longrightarrow$};
    \draw[fill=white] ($ (arrow.south) !.68! (arrow.north)$) circle (0.15em);
  \end{tikzpicture}}
\newsavebox{\kleislidot}
\savebox{\kleislidot}{%
\begin{tikzpicture}[baseline=0pt,outer sep=0pt]
    \draw[fill=white] (0,0) circle (2pt);
  \end{tikzpicture}}
\tikzstyle{kleisli}=[
\def\arKl[#1]{\ar[#1]|-{\circ}}
\setlist[enumerate,1]{label=(\arabic*),font=\normalfont,align=left,leftmargin=0pt,labelindent=0pt,listparindent=\parindent,labelwidth=0pt,itemindent=!,topsep=3pt,parsep=0pt,itemsep=2pt,start=1}
\setlist[enumerate,2]{label=(\alph*),font=\normalfont,labelindent=*,leftmargin=*,start=1}
\setlist[itemize]{labelindent=*,leftmargin=*}
\setlist[description]{labelindent=*,leftmargin=*,itemindent=-1 em}
\numberwithin{equation}{section}
\spnewtheorem{assumption}[theorem]{Assumption}{\bfseries}{\rmfamily}
\spnewtheorem{notation}[theorem]{Notation}{\bfseries}{\rmfamily}
\spnewtheorem{observation}[theorem]{Observation}{\bfseries}{\rmfamily}
\spnewtheorem{defn}[theorem]{Definition}{\bfseries}{\rmfamily}
\spnewtheorem{expl}[theorem]{Example}{\bfseries}{\rmfamily}
\spnewtheorem{rem}[theorem]{Remark}{\bfseries}{\rmfamily}
\spnewtheorem{construction}[theorem]{Construction}{\bfseries}{\rmfamily}
\spnewtheorem{examples}[theorem]{Examples}{\bfseries}{\rmfamily}
\spnewtheorem*{proofsketch}{Proof sketch}{\itshape}{\rmfamily}
\spnewtheorem*{oproblem}{Open Problem}{\bfseries}{\rmfamily}
\newcommand{\coker}{\ensuremath{\mathop{\mathrm{coker}}}}
\renewcommand{\phi}{\varphi}
\renewcommand{\o}{\cdot}
\def\id{{\mathit{id}}}
\def\Id{{\mathit{Id}}}
\def\inl{{\textsf{inl}}}
\def\inr{{\textsf{inr}}}
\def\andd{\wedge}
\def\orr{\vee}
\newlength{\mathfrwidth}
\newsavebox{\mathfrbox}
\newcommand{\ini}{\iota}
\renewcommand\vec[1]{\overrightarrow{#1}}
\newcommand\cev[1]{\overleftarrow{#1}}
\newcommand{\pback}[1]{\cev{#1}}
\newcommand{\pbackleft}[1]{\vec{#1}}
\newcommand{\pbackright}[1]{{#1}_*}
\def\A{\cat{A}}
\def\C{\cat{C}}
\def\Set{{\mathsf{Set}}}
\def\Ord{{\mathsf{Ord}}}
\def\KVec{\mathsf{Vec}_K}
\def\Coalg{\mathop{\mathsf{Coalg}}}
\def\colim{\mathop{\mathsf{colim}}}
\def\pow{{\mathscr P}}
\def\P{\pow}
\newcommand{\Pow}{\pow}
\newcommand{\pair}[1]{\langle #1 \rangle}
\def\epito{\twoheadrightarrow}
\def\monoto{\rightarrowtail}
\def\subto{\hookrightarrow}
\newcommand{\set}[1]{\{ #1\}}
\def\cat#1{{\mathscr{#1}}}
\newcommand{\pullbackcorner}[1][dr]{\save*!/#1-1pc/#1:(-1,1)@^{|-}\restore}
\newcommand{\nexttime}{\bigcirc}
\DeclareMathOperator{\Sub}{\textnormal{\textsf{Sub}}}
\def\eps{\varepsilon}
\def\Nat{\mathds{N}}
\def\Int{\mathds{Z}}
\def\Real{\mathds{R}}
\def\Reals{\Real}
\def\sol#1{{#1}^{\dag}}
\def\CPO{\ensuremath{\mathsf{CPO}}}
\def\Gra{\ensuremath{\mathsf{Gra}}}
\def\2{\textbf{2}}
\def\qsort{q}%
\def\qsplit{s}%
\def\qmerge{m}%
\title{On Well-Founded and Recursive Coalgebras}
\titlerunning{On Well-Founded and Recursive Coalgebras}
\author{Ji\v{r}\'\i\ Ad\'amek\inst{1}\fnmsep\thanks{Supported by the Grant Agency of the Czech Republic under grant 19-00902S.}
  \and
  Stefan Milius\inst{2}\fnmsep\thanks{Supported by Deutsche Forschungsgemeinschaft
    (DFG) under project MI~717/5-2}
  \and
  Lawrence S.~Moss\inst{3}\fnmsep\thanks{Supported by grant $\#$586136 from the Simons Foundation.}}
\authorrunning{J.~Ad\'amek, S.~Milius, and L.S.~Moss}
\institute{Czech Technical University, Prague,  Czech Republic \\
  \email{j.adamek@tu-braunschweig.de}
  \and
  Friedrich-Alexander-Universit\"at Erlangen-N\"urnberg, Germany\\
  \email{mail@stefan-milius.eu}
  \and
  Indiana University, Bloomington, IN, USA \\
  \email{lmoss@indiana.edu}}
\begin{document}
%
%
\maketitle
\begin{abstract}
    This paper studies fundamental questions concerning category-theoretic models of induction and recursion. We are concerned with the relationship between well-founded and recursive coalgebras for an endofunctor. For monomorphism preserving endofunctors on complete and well-powered categories every coalgebra has a well-founded part, and we provide a new, shorter proof that this is the coreflection in the category of all well-founded coalgebras. We present a new more general proof of Taylor's General Recursion Theorem that every well-founded coalgebra is recursive, and we study under which hypothesis the converse holds. In addition, we present a new equivalent characterization of well-foundedness: a coalgebra is well-founded iff it admits a coalgebra-to-algebra morphism to the initial algebra.
\end{abstract}

\section{Introduction}
\label{sec-wfc}

What is induction?  What is recursion?  In areas of theoretical
computer science, the most common answers are related to \emph{initial
  algebras}.  Indeed, the dominant trend in abstract data types is
initial algebra semantics (see e.g.~\cite{MG82}), and this approach
has spread to other semantically-inclined areas of the subject. The
approach in broad slogans is that, for an endofunctor $F$ describing
the type of algebraic operations of interest, the initial algebra
$\mu F$ has the property that for every $F$-algebra $A$, there is a
unique homomorphism $\mu F \to A$, and this \emph{is}
recursion. Perhaps the primary example is \emph{recursion on $\Nat$,
  the natural numbers}. Recall that $\Nat$ is the initial algebra for
the set functor $FX = X + 1$. If $A$ is any set, and $a\in A$ and
$\alpha\colon A \to A$ are given, then initiality tells us that there is a
unique $f\colon \Nat\to A$ such that for all $n\in \Nat$,
\begin{equation}\label{Nrec}
  f(0) =  a \qquad f(n+1) = \alpha(f(n)).
\end{equation}
Then the first additional problem coming with this approach is that of
how to ``recognize'' initial algebras: Given an algebra, how do we
really know if it is initial? The answer -- again in slogans -- is
that initial algebras are the ones with ``no junk and no confusion.''

Although initiality captures some important aspects of recursion, 
it cannot be a fully satisfactory approach.  One big missing piece 
concerns recursive definitions based on well-founded relations.
%
%
For example, the whole study of termination of rewriting systems
depends on well-orders, the primary example of \emph{recursion on a
  well-founded order}.  Let $(X, R)$ be a well-founded relation, i.e.~one
with no infinite sequences $\cdots x_2 \mathbin{R} x_1 \mathbin{R} x_0$.
Let $A$ be any set, and let $\alpha\colon \pow A \to A$.  (Here and below,
$\pow$ is the power set functor, taking a set to the set of its
subsets.)  Then there is a unique $f\colon X\to A$ such that for all
$x \in X$
\begin{equation}\label{Tf}
  f(x) = \alpha (\set{f(y) : y\ R\ x}). 
\end{equation}
The main goal of this paper is the study of concepts that allow to
extend the algebraic spirit behind initiality in \eqref{Nrec} to the
setting of recursion arising from well-foundedness as we find it in
\eqref{Tf}. The corresponding concepts are those of well-founded and
recursive coalgebras for an endofunctor, which first appear in work by
Osius~\cite{osius} and Taylor~\cite{taylor2,taylor3}, respectively.
In his work on categorical set theory, Osius~\cite{osius}
first studied the notions of well-founded and recursive coalgebras
(for the power-set functor on sets and, more generally, the
power-object functor on an elementary topos). He defined recursive
coalgebras as those coalgebras $\alpha\colon A  \to \pow A$ which have a unique
coalgebra-to-algebra homomorphism into every algebra (see \autoref{def:recoalg}).
\smnote{He \emph{did} kind of have ``recursive $\iff$ well-founded'';
  recursive $\Rightarrow$ well-founded under an additional assumption
  on the coalgebra in his Thm.~6.4.}

Taylor~\cite{taylor2,taylor3} took Osius' ideas much further. He
introduced well-founded coalgebras for a general endofunctor,
capturing the notion of a well-founded relation categorically, and
considered recursive coalgebras under the name `coalgebras obeying the
recursion scheme'.  He then proved the General Recursion Theorem that
all well-founded coalgebras are recursive for every endofunctor on
sets (and on more general categories)\smnote{He clearly has a list of
  properties in~\cite{taylor3}; so this is not just sets!}  preserving
inverse images. Recursive coalgebras were also investigated by
Eppendahl~\cite{eppendahl99}, who called them algebra-initial
coalgebras.  Capretta, Uustalu, and Vene~\cite{cuv06} further studied
recursive coalgebras, and they showed how to construct new ones from
given ones by using comonads. They also explained nicely how recursive
coalgebras allow for the semantic treatment of (functional)
divide-and-conquer programs. More recently, Jeannin et
al.~\cite{JeanninEA17} proved the general recursion theorem for
polynomial functors on the category of many-sorted sets; they also
provide many interesting examples of recursive coalgebras arising in
programming.

Our contributions in this paper are as follows. We start by recalling
some preliminaries in \autoref{S:prelim} and the definition of
(parametrically) recursive coalgebras in \autoref{S:reco} and of
well-founded coalgebras in \autoref{S:wfd} (using a formulation based
on Jacobs’ next time operator~\cite{Jacobs02}, which we extend from
Kripke polynomial set functors to arbitrary functors). We show that
every coalgebra for a monomorphism-preserving functor on a complete
and well-powered category has a well-founded part, and provide a new
proof that this is the coreflection in the category of well-founded
coalgebras (\autoref{P:wfdpart2}), shortening our previous
proof~\cite{amms}. Next we provide a new proof of Taylor's General
Recursion Theorem (\autoref{T:wf-prec}), generalizing this to
endofunctors preserving monomorphisms on a complete and well-powered
category having \smooth monomorphisms (see \autoref{D:constr}).  For
the category of sets, this implies that ``well-founded $\Rightarrow$
recursive'' holds for all endofunctors, strengthening Taylor's
result. We then discuss the converse: is every recursive coalgebra
well-founded? Here the assumption that $F$ preserves inverse images
cannot be lifted, and one needs additional assumptions. In fact, we
present two proofs: one assumes the functor has a pre-fixed point and
universally \smooth monomorphisms (see \autoref{T:rec-wf:1}). Under
these assumptions we also give a new equivalent characterization of
recursiveness and well-foundedness: a coalgebra is recursive if it has
a coalgebra-to-algebra morphism into the initial algebra (which exists
under our assumptions), see \autoref{C:equiv}. This characterization
was previously established for finitary functors on
sets~\cite{alm_rec}. The other proof of the above implication is due
to Taylor~\cite{taylor3} and presented for the convenience of the
reader. Taylor's proof uses the concept of a subobject classifier
(\autoref{T:rec-wf:2}). It implies that `recursive' and `well-founded'
are equivalent concepts for all set functors preserving inverse
images.  We also prove that a similar result holds for the category of
vector spaces over a fixed field (\autoref{C:vec}).

Finally, we show in \autoref{S:closure} that well-founded coalgebras
are closed under coproducts, quotients and, assuming mild assumptions,
under subcoalgebras.

\takeout{
Besides these technical contributions, we believe that our paper (or
rather its full version) provides the most comprehensive study to date
of results on well-founded and recursive coalgebras that may serve as
a reference for the subject.}


\takeout{
What is induction? What is recursion? One often encounters
explanations of these concepts based on the universal property of
initial algebras. However, from the work by Osius~\cite{osius} and
Taylor~\cite{taylor2,taylor3} a different approach appears: induction
and recursion are properties of coalgebras rather than of
algebras. For example, a set equipped with a well-founded binary
relation is equivalently a well-founded coalgebra for the power-set
functor. In his work on categorical set theory, Osius~\cite{osius}
first studied the notions of well-founded and recursive coalgebras
(for the power-set functor on sets and, more generally, the
power-object functor on an elementary topos). He defined recursive
coalgebras as those coalgebras $A$ which have a unique
coalgebra-to-algebra homomorphism into every algebra, and he proved the general
recursion theorem, that well-founded and recursive coalgebras are the
same.\smnote{He had recursive $\Rightarrow$
  well-founded under an additional assumption on the coalgebra in his Thm.~6.4.}

Taylor~\cite{taylor2,taylor3} took Osius' ideas much further. He
introduced well-founded coalgebras for a general endofunctor capturing
the notion of a well-founded relation categorically and considered
recursive coalgebras under the name `coalgebras obeying the recursion
sheme'.  He then proved the General Recursion Theorem for every
functor on sets (and on more general categories)\smnote{He clearly has
  a list of properties in~\cite{taylor3}; so this is not just sets!}
preserving inverse images.  Recursive coalgebras were also
investigated by Eppendahl~\cite{eppendahl99}, who called them
algebra-initial coalgebras.

Capretta, Uustalu, and Vene~\cite{cuv06} further studied recursive
coalgebras, and they showed how to construct new ones from given ones
by using comonads. They also explained nicely how recursive coalgebras
allow for the semantic treatment of (functional) divide-and-conquer
programs.

\takeout{
In our opinion, recursive coalgebras are the ``gist'' of induction: in
order to define a (homo-)morphism from~$A$ to~$X$, one only needs an
algebra structure on~$X$. For example, every initial algebra can be
considered as a coalgebra (recall that by Lambek's Lemma its algebra
structure is invertible), and it is recursive. But there are other
natural coalgebras used in induction and recursion, e.g.~Capretta et
al.~\cite{cuv06} provide nice examples, some of which are elaborated
below. Besides, there are interesting examples of recursive coalgebras
in situations where no initial algebra exists.}

In general, coalgebras are generic state-based systems whose type of
transitions is described by an endofunctor $F$. More precisely, a
coalgebra is a pair formed by an object $A$ (of states) and a morphism
$\alpha\colon A \to FA$ (modelling transisitons). For example, a
deterministic automaton over input alphabet $\Sigma$ on a state set
$A$ has final states and transitions expressed by a function
$\alpha\colon X \to \set{0,1} \times X^\Sigma$. Thus, here we use
$FX = \set{0,1} \times X^\Sigma$. Graphs are  another important source of examples:
 in order to give a graph on a set $A$ of vertices one assigns
to each vertex the sets $\alpha(a)$ of its neighbours, i.e.~a graph is
a coalgebra for the power-set functor $\Pow$ (defined by
$\Pow X = \set{M : M \subseteq X}$). Furthermore, labelled transition
system over the label alphabet $\Sigma$ are coalgebras
$\alpha\colon A \to \Pow(\Sigma \times A)$, thus here one uses the
functor $FX = \Pow(\Sigma \times X)$. Many other types of state-based
systems have such a representation, see e.g.~Rutten~\cite{rutten}. In
addition one may want to use other base categories besides sets if states
carry additional structure preserved by transitions, e.g.~linear
weighted automata over the input alphabet $\Sigma$ with weights from a
field $K$, which are given by a vector space $A$ and a linear map
$\alpha\colon A \to K \times A^\Sigma$ are coalgebras for
$FX = K \times X^\Sigma$ on the category of vector spaces over $K$.

We begin by giving a more compact definition of well-founded
coalgebras based on Jacobs' next time operator~\cite{Jacobs02}. He
defined this operator for so-called Kripke polynomial functors on
sets; intuitively it is the semantic counterpart of the well-known
next time modality in temporal logic (see e.g.~Manna and
Pn\"ueli~\cite{MP92}). We generalize the next time operator to
arbitrary functors, and one then easily sees that a coalgebra is
well-founded iff it has no proper subobject as a fixed point of the
next time operator.

The relevance of the General Recursion Theorem stems from the fact
that verifying whether a given coalgebra is recursive can be much more
difficult than verifying that it is well-founded. We generalize this
result and prove that for every endofunctor preserving monomorphisms on
a complete and well-powered category having \smooth monomorphisms
(see~\cite{takr}), every well-founded coalgebra is recursive
(\autoref{T:wf-prec}). For sets, this implies that the implication
\[
  \text{well-founded} \implies \text{recursive}
\]
hold for all endofunctors, which strengthens Taylor's result in the
case of sets.

We also discuss the converse: is every recursive coalgebra
well-founded? Here the assumption that $F$ preserves inverse images
cannot be lifted, and one needs additional assumptions.
We then presents two proofs for the
implication
\[
  \text{recursive} \implies \text{well-founded}.
\]
One assumes universally \smooth monomorphisms and that the
functor has a pre-fixed point (see \autoref{T:rec-wf:1}). Under these
assumptions we also give a new equivalent characterization of
recursiveness and well-foundedness: a coalgebra is recursive if it has
a coalgebra-to-algebra morphism into the initial algebra (which exists
under our assumptions), see \autoref{C:equiv}.  The other proof of the
above implication is an unpublished proof due to Taylor using the
concept of a subobject classifier (\autoref{T:rec-wf:2}). It implies
that `recursive' and `well-founded' are equivalent concepts for all
set functors preserving inverse images. This condition is actually
quite mild, e.g.~the above functors $\set{0,1} \times X^\Sigma$,
$\Pow$ and $\Pow(\Sigma \times X)$ do, and a composite, product or
coproduct of set functors preserving inverse images also preserves
them.

In our equivalence results, we also consider a stronger version of
recursivity called parametric recursivity, which is dual to Milius'
completely iterativity~\cite{milius}.

Finally, we show that all strong quotients, coproducts
(\autoref{C:wfcolim}) and subcoalgebras of well-founded coalgebras are
well-founded (\autoref{P:wfsub} and \autoref{T:sub}), and that every
coalgebra has a well-founded part, i.e.~the largest subcoalgebra which
is well-founded, and this is the coreflection of the given coalgebra
in the full subcategory of well-founded coalgebras
(\autoref{P:wfdpart2}).
}

\takeout{
Recursive coalgebras  have the 
property that functions out of them may be specified by structured
recursion. This is a desirable property, but it may not be
straightforward to establish in general. In this paper we consider
\emph{well-foundedness} of coalgebras, which captures well-founded
induction and is usually much easier to establish for a given
coalgebra. We will present a proof of Taylor's General
Recursion Theorem\smnote{I think it's better to mention Taylor already
  here and not only burried in the paper later.} that every
well-founded coalgebra is (parametrically) recursive. For set functors
preserving inverse images the converse holds, too. Moreover, if an initial
algebra exists, well-foundedness of a coalgebra $C$ is equivalent to
the existence of a coalgebra-to-algebra morphism from $C$ to the
initial algebra $\mu F$.\smnote{People want to use and quote this new
  result, so we should mention it upfront.}

Finally, we shall see that for every set functor, the initial algebra
is characterized as the terminal well-founded coalgebra. 
}

\section{Preliminaries}
\label{S:prelim}

We start by recalling some background material.  Except for the
definitions of \emph{algebra} and \emph{coalgebra} in
Section~\ref{S:algcoalg}, the subsections below may be read as
needed. We assume that readers are familiar with notions of basic
category theory; see e.g.~\cite{ahs} for everything which we do not detail.

\subsection{Algebras and Coalgebras}
\label{S:algcoalg}
We are concerned throughout this paper with \emph{algebras} and
\emph{coalgebras} for an endofunctor. This means that we have an
underlying category, usually written $\A$; frequently it is the
category of sets or of vector spaces over a fixed field, and that a functor
$F\colon\A \to \A$ is given.  An \emph{$F$-algebra} is a pair $(A,\alpha)$,
where $\alpha\colon FA \to A$.  An \emph{$F$-coalgebra} is a pair
$(A,\alpha)$, where $\alpha\colon A \to FA$. We usually drop the
functor $F$. Given two algebras $(A,\alpha)$ and $(B,\beta)$, an
\emph{algebra homomorphism} from the first to the second is
$h\colon A\to B$ in $\A$ such that the diagram below
commutes:
\[
  \begin{tikzcd}
    FA \arrow{r}{\alpha} \arrow{d}[swap]{Fh} & A \arrow{d}{h} \\
    FB \arrow{r}{\beta} & B
  \end{tikzcd}
\]
That is $h\o \alpha = \beta\o Fh$.  An algebra is \emph{initial} if it
has a unique morphism to every algebra. Recall that by Lambek's
Lemma~\cite{lambek}, whenever an initial algebra
$\ini\colon F(\mu F) \to \mu F$ exists, then $\ini$ is an
isomorphism. Thus, $\mu F$ can always be regarded as a coalgebra
$(\mu F, \iota^{-1})$.

Similarly, given coalgebras $(A,\alpha)$ and $(B,\beta)$, a
\emph{homomorphism of $F$-coalgebras} from the first to the second is
$h\colon A\to B$ in $\A$ such that $Fh\o \alpha = \beta\o h$.
Moreover, a terminal coalgebra is one with the property that every
coalgebra has a unique morphism into it. The category of $F$-coalgebras
is denoted by $\Coalg F$.
  
\begin{expl}\label{E:graph}
  \begin{enumerate}
  \item\label{E:graph:1} The power set functor
    $\pow\colon \Set\to\Set$ takes a set $X$ to the set $\pow X$ of
    all subsets of it; for a morphism $f\colon X\to Y$,
    $\pow f\colon \pow X\to \pow Y$ takes a subset $S\subseteq X$ to
    its direct image $f[S]$.  Coalgebras $\alpha\colon X \to \Pow X$
    may be identified with directed graphs on the set $X$ of vertices,
    and the coalgebra structure $\alpha$ describes the edges:
    $b \in \alpha(a)$ means that there is an edge $a \to b$ in the
    graph.
  
  \item Let $\Sigma$ be a signature, i.e.~a set of operation symbols,
    each with a finite arity. The \emph{polynomial functor} $H_\Sigma$
    associated to $\Sigma$ assigns to a set $X$ the set
    \[
      H_\Sigma X = \coprod_{n \in \Nat} \Sigma_n \times X^n,
    \]
    where $\Sigma_n$ is the set of operation symbols of arity $n$.  This
    may be identified with the set of all terms $\sigma(x_1,\ldots, x_n)$,
    for $\sigma\in \Sigma_n$, and $x_1, \ldots, x_n\in X$.
    Algebras for $H_\Sigma$ are the usual $\Sigma$-algebras.
  
  \item Deterministic automata over an input alphabet $\Sigma$ are
    coalgebras for the functor $FX = \set{0,1}\times X^\Sigma$.
    Indeed, given a set $S$ of states, the next-state map
    $S\times \Sigma\to S$ may be curried to
    $\delta\colon S\to S^\Sigma$. The set of final states yields the
    acceptance predicate $a\colon S\to \set{0,1}$. So the automaton
    may be regarded as
    $\pair{a,\delta}\colon S\to \set{0,1}\times S^\Sigma$.
    
  \item Labelled transitions systems are coalgebras for 
    $FX = \Pow(\Sigma \times X)$.
    
  \item To describe linear weighted automata, i.e.~weighted automata
    over the input alphabet $\Sigma$ with weights in a field $K$, as
    coalgebras, one works with the category $\KVec$ of vector spaces
    over $K$. A linear weighted automaton with the input alphabet
    $\Sigma$ is then a coalgebra for $FX = K \times X^\Sigma$.
  \end{enumerate}  
\end{expl}

\takeout{
\begin{expl}
\label{exFgraphs}
Let $\Gra$ be the category of graphs; the morphisms are map preserving
edges.  (This is not the same as the coalgebra category of $\pow$.)
Here is a simple endofunctor $F$ on $\Gra$ which we shall use at
various points.  On objects $A$ put $FA = 1$ (the terminal graph, it
has a loop) if $A$ has edges. For a graph $A$ without edges, let $FA$
be the graph without edges whose vertices are those of $A$ plus an
additional vertex; again, $FA$ has no edges.  The definition of $F$ on
morphisms $h\colon A \to B$ is as expected: $Fh$ maps the additional
vertex of $A$ to that of $B$ in the case where $B$ has no edges. Then
the initial algebra $\mu F$ is the graph of natural numbers without
edges.  The terminal coalgebra is $1\cong F1$.
\end{expl}
}

\begin{rem}\label{R:compwell}
  \begin{enumerate}
  \item\label{R:compwell:2} Recall that an epimorphism
    $e\colon A \to B$ is called \emph{strong} if it satisfies the
    following \emph{diagonal fill-in property}: given a monomorphism
    $m\colon C \monoto D$ and morphisms $f\colon A \to C$ and
    $g\colon B \to D$ such that $m\o f = g\o e$ (i.e.~the outside of
    the square below commutes) then there exists a unique
    $d\colon B \to C$ such that the diagram below commutes:
    \begin{equation}\label{diag:diag}
      \begin{tikzcd}
        A
        \arrow[->>]{r}{e}
        \arrow{d}[swap]{f}
        &
        B
        \arrow{d}{g}
        \arrow[dashed]{ld}[swap]{d}
        \\
        C
        \arrow[>->]{r}{m}
        &
        D
      \end{tikzcd} 
    \end{equation}

  \item\label{R:compwell:3} A complete and well-powered category $\A$
    has factorizations of morphisms $f$ as $f = m \cdot e$, where $e$
    is a strong epimorphism and $m$ is a monomorphism. It follows from
    Ad\'amek et al.~\cite[Theorem~14.17 and dual of
    Exercise~14C(d)]{ahs} that every complete and well-powered
    category has such factorizations.  We call the subobject $m$ the
    \emph{image} of $f$.

  \item We indicate monomorphisms by $\monoto$ and strong epimorphisms
    by $\epito$.
  \end{enumerate}
\end{rem}
\subsection{Preservation Properties}  
Recall that an intersection of two subobjects
$s_i\colon S_i \monoto A$ ($i = 1,2$) of a given object $A$ is given
by their pullback. Analogously, (general) intersections are given by
wide pullbacks.  Furthermore, the inverse image of a subobject
$s\colon S \monoto B$ under a morphism $f\colon A \to B$ is the
subobject $t\colon T \to A$ obtained by a pullback of $s$ along $f$.

\begin{expl}\label{E:setfunctors}
  The condition that a functor preserves intersections is an extremely
  mild one for set functors:  
  \begin{enumerate}
  \item Every polynomial functor preserves intersections and inverse
    images.
    
  \item The power-set functor $\Pow$ preserves intersections and
    inverse images.
    
  \item The collection of set functors which preserve intersections is
    closed under products, coproducts, and compositions. A subfunctor
    $m\colon G \monoto F$ of an intersection preserving functor $F$
    preserves intersections whenever $m$ is a cartesian natural
    transformation, i.e.~all naturality squares are pullbacks (being a
    pullback is indicated by the ``corner'' symbol):
    \[
      \begin{tikzcd}
        GX
        \pullbackangle{-45}
        \arrow[>->]{r}{m_X}
        \arrow{d}[swap]{Gf}
        &
        FX
        \arrow{d}{Ff}
        \\
        GY \arrow[>->]{r}{m_Y}
        &
        FY
      \end{tikzcd}
    \]
    Similarly, for inverse images.

  \item The functor $C_{01}\colon \Set \to \Set$ is defined by
    $C_{01}\emptyset = \emptyset$ and $C_{01}1 = 1$ for
    $X\neq\emptyset$. $C_{01}$ clearly preserves monomorphisms but it
    does not preserve finite intersections. Indeed, the empty
    intersection of $\set{0}, \set{1} \subto \set{0,1}$ is mapped to
    $\emptyset$; however the intersection of those subsets under
    $C_{01}$ is $1$, not $\emptyset$.
  
  \item\label{E:setfunctors:2} Consider next the set functor $R$
    defined by
    \(
    RX = \set{(x,y) \in X \times X \colon x \neq y} + \set{d}
    \)
    for sets $X$. For a function $f\colon X \to Y$ put
    \[
      Rf(d) = d \quad\text{and}\quad
      Rf(x,y) = \begin{cases}
        d & \text{if $f(x) \neq f(y)$}\\
        (f(x),f(y)) & \text{else.}
      \end{cases}
    \]
    This functor preserves finite intersections, since it
    preserves the above intersection of
    $\set{0}, \set{1} \subto \set{0,1}$, and so it is (naturally
    isomorphic to) its Trnkov\'a hull.
    \smnote{Jirka votes for deleting the argument, but I vote for
      keeping it.}
    However, $R$ does not preserve
    inverse images; consider e.g.~the pullback diagram (under $R$):
    \[
      \begin{tikzcd}[column sep = 10mm]
        \emptyset
        \arrow[hookrightarrow]{d}
        \arrow{r}
        \pullbackangle{-45}
        &
        \set{0}
        \arrow[hookrightarrow]{d}
        \\
        \set{0,1}
        \arrow{r}{\mathsf{const}_1}
        &
        \set{0,1}
      \end{tikzcd}
    \]
    (For $(0,1) \in R\set{0,1}$ and $d \in R\set{0}$ are merged in
    the right-hand $R\set{0,1}$, yet there is no
    suitable element in $R\emptyset$.)
    \smnote{TODO: @Jirka: what's an example of a finite
      intersection preserving functor not preserving intersections?}
    
  \item ``Almost'' all finitary set functors
    preserve intersections. In fact, the Trnkov\'a hull of a finitary
    set functor preserves intersections (see \autoref{P:Trint}).
  \end{enumerate}
\end{expl}

Some of our results require $F$ to preserve finite (or all)
intersection or inverse images. For set functors these are rather mild
requirements, as we now explain.

\begin{notheorembrackets}
\begin{proposition}[\cite{trnkova71}]\label{P:Tr}
  For every set functor $F$ there exists an essentially unique set
  functor $\bar F$ which coincides with $F$ on nonempty sets and
  functions and preserves finite intersections (whence monomorphisms).
\end{proposition}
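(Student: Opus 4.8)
The plan is to modify $F$ only where set functors can misbehave, namely on the empty set, leaving everything else untouched. The starting observation is that every monomorphism $m\colon X \monoto Y$ with $X$ nonempty is split (choose a left inverse), hence preserved by \emph{any} functor; so the sole source of non-preservation of monomorphisms, and the sole obstruction to preserving finite intersections, sits at the maps out of $\emptyset$. Accordingly I set $\bar F X = FX$ and $\bar F f = Ff$ on all nonempty sets and all maps between them, and I define $\bar F\emptyset$ as the equalizer $e\colon \bar F\emptyset \monoto F1$ of the pair $Fc_0, Fc_1\colon F1 \rightrightarrows F2$ induced by the two coprojections $c_0,c_1\colon 1 \to 2$. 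For a nonempty set $X$ the map $\bar F(\emptyset \to X)$ is defined as $Fx \circ e$ for an arbitrary point $x\colon 1\to X$; this is independent of the chosen point since any two points of $X$ factor as $h\circ c_0$ and $h\circ c_1$ through a common $h\colon 2\to X$, and $Fc_0\circ e = Fc_1\circ e$ by definition of $e$.

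Next I verify that $\bar F$ is a functor and preserves monomorphisms. Functoriality only needs checking on composites $\emptyset \to X \to Y$, and this follows immediately from the point-wise definition above. Preservation of monomorphisms reduces, by the splitting observation, to injectivity of the new maps $\bar F(\emptyset \to X) = Fx\circ e$; since $x\colon 1\to X$ is a split monomorphism (for $X$ nonempty), $Fx$ is injective, and $e$ is monic, so the composite is injective.

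The heart of the argument is preservation of finite intersections, for which it suffices to treat binary intersections (the empty intersection is the identity subobject, trivially preserved, and $n$-ary intersections paste from binary ones). Given subobjects $S_1, S_2 \monoto A$ with pullback $S_0 = S_1 \cap S_2$, there are two cases. If $S_0 \neq \emptyset$ then the whole square lives among nonempty sets, where $\bar F = F$; here I use that every set functor preserves nonempty intersections, which I would prove by passing to the pushout $S_1 +_{S_0} S_2 \cong S_1\cup S_2 \monoto A$ and using the two ``asymmetric'' retractions onto $S_1$ and $S_2$, reconciled through the identity $\rho\circ j_1 = \mathrm{id}_{S_0}$, to produce a unique common preimage in $FS_0$. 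If $S_0 = \emptyset$ (the genuinely new case), assume first $S_1,S_2$ nonempty: for any element $(\xi_1,\xi_2)$ of the pullback of $F m_1, Fm_2$, a retraction $A\to S_1$ that crushes all of $S_2$ to a single point exhibits $\xi_1$ as point-supported, and symmetrically for $\xi_2$; the two witnesses coincide at a single $w\in F1$, and since two \emph{distinct} points of $A$ (one in $S_1$, one in $S_2$) equalize $w$ while the induced $h\colon 2\monoto A$ has $Fh$ injective, we get $w\in\bar F\emptyset$. Thus the canonical comparison $\bar F\emptyset \to FS_1\times_{FA}FS_2$ is a bijection. The subcase where some $S_i$ is empty is analogous and easier.

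Finally, essential uniqueness: if $G$ is any finite-intersection-preserving set functor agreeing with $F$ on nonempty sets and maps, then $G$ must preserve the intersection $\{0\}\cap\{1\}=\emptyset$ in $2$, forcing $G\emptyset \cong G1\times_{G2}G1$; the retraction $2\to 1$ collapses this pullback to the equalizer of $Gc_0=Fc_0$ and $Gc_1=Fc_1$, i.e.\ to $\bar F\emptyset$, and the action of $G$ on the maps $\emptyset\to X$ is likewise forced by naturality, so $G\cong\bar F$. I expect the main obstacle to be the empty-corner case of intersection preservation: the content is precisely that compatibility of $(\xi_1,\xi_2)$ over $FA$ forces both components to be point-supported, which is what the crushing retractions are designed to extract.
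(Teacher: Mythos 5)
Your proposal is correct, but be aware that the paper offers no proof of this proposition at all: it cites Trnkov\'a~\cite{trnkova71} and Ad\'amek--Trnkov\'a~\cite{ATbook}, and merely records her construction of $\bar F\emptyset$ in \autoref{R:inter} as the set of natural transformations $C_{01}\to F$. Your equalizer $e\colon \bar F\emptyset\monoto F1$ of $Fc_0,Fc_1\colon F1\rightrightarrows F2$ is that same object in disguise: a natural transformation $C_{01}\to F$ is determined by its component at $1$, and an element $w\in F1$ extends to one precisely when $Fc_0(w)=Fc_1(w)$, since any two points of a nonempty $X$ factor through some $h\colon 2\to X$. So your construction agrees with Trnkov\'a's, and what your write-up buys is a self-contained verification replacing the citation. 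The individual steps check out: well-definedness and functoriality of $\bar F(\emptyset\to X)=Fx\o e$, mono preservation, the disjoint case of binary intersections (where the common witness $w$ is forced into the equalizer because $F[a_1,a_2]$ is injective for distinct points $a_1\in S_1$, $a_2\in S_2$ of $A$, and uniqueness of $w$ follows from the splitting $!_{S_1}\o x_1=\id_1$), and the uniqueness of $\bar F$ via collapsing the pullback $G1\times_{G2}G1$ along the retraction $2\to 1$. The one step you compress heavily is the case $S_0\neq\emptyset$, i.e.\ that \emph{every} set functor preserves nonempty finite intersections; this is itself a separate theorem of Trnkov\'a which the paper cites just before \autoref{C:setsub}, and your sketch does contain the essential point (the two candidate preimages in $FS_0$ produced by the two crushing retractions coincide because the crushing map splits the inclusion $S_0\monoto S_1$), but in a full write-up that lemma deserves its own displayed argument rather than one clause.
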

\end{notheorembrackets}
\noindent
For the proof see Trnkov\'a~\cite[Propositions III.5 and II.4]{trnkova71}; for a
more direct proof see Ad\'amek and Trnkov\'a~\cite[Theorem~III.4.5]{ATbook}.
We call the functor $\bar F$ the \emph{Trnkov\'a hull} of $F$. 
\begin{rem}\label{R:inter}
  In fact, Trnkov\'a gave a construction of $\bar F$: she defined
  $\bar F \emptyset$ as the set of all natural transformations
  $C_{01} \to F$, where $C_{01}$ is the set functor with
  $C_{01} \emptyset = \emptyset$ and $C_{01} X =1$ for all nonempty
  sets $X$. For the empty map $e\colon \emptyset \to X$ with
  $X\neq \emptyset$, $\bar F e$ maps a natural transformation
  $\tau\colon C_{01} \to F$ to the element given by
  $\tau_X\colon 1\to FX$.
\end{rem}
Preservation of all intersections can be achieved for \emph{finitary}
set functors. Intuitively, a functor on sets is
finitary if its behavior is completely determined by its action on
\emph{finite} sets and functions. For a general functor, this
intuition is captured by requiring that the functor preserve filtered
colimits~\cite{ar}. For a set functor $F$ this is equivalent to being
\emph{finitely bounded}, which is the following condition: for each
element $x \in FX$ there exists a finite subset $M \subseteq X$ such
that $x \in Fi[FM]$, where $i\colon M \subto X$ is the inclusion
map~\cite[Rem.~3.14]{amsw19}.

\begin{notheorembrackets}
\begin{proposition}[{\cite[p.~66]{amm18}}]\label{P:Trint}
  The Trnkov\'a hull of a finitary set functor preserves all
  intersections.
\end{proposition}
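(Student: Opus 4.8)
The plan is to deduce preservation of all intersections from two facts about $\bar F$: that it preserves finite intersections (which is part of \autoref{P:Tr}) and that it is finitary. The first is given. For the second I would observe that finite boundedness is inherited from $F$: for nonempty $X$ and $x \in \bar F X = FX$, finitarity of $F$ supplies a finite $M \subseteq X$ with $x \in F i_M[F M]$, and replacing $M$ by $M \cup \{p\}$ for any point $p \in X$ (so that $M$ is nonempty) makes $i_M$ a map of nonempty sets, whence $\bar F i_M = F i_M$ and $\bar F M = FM$; for $X = \emptyset$ the support $M = \emptyset$ works trivially. Thus $\bar F$ is finitely bounded, i.e.\ finitary.

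The heart of the argument is a least-support lemma: for every $x \in \bar F X$ there is a \emph{least} subset $S_0 \subseteq X$ with $x \in \bar F i_{S_0}[\bar F S_0]$, this $S_0$ is finite, and $S_0 \subseteq T$ for every subset $T \subseteq X$ (finite or infinite) with $x \in \bar F i_T[\bar F T]$. Call such a $T$ a \emph{support} of $x$. I would prove this in three steps. First, finitarity gives at least one finite support. Second, preservation of binary intersections shows the supports are closed under intersection: applying $\bar F$ to the pullback exhibiting $S \cap T$ yields $\bar F i_{S \cap T}[\bar F(S \cap T)] = \bar F i_S[\bar F S] \cap \bar F i_T[\bar F T]$ inside $\bar F X$, so if $x$ lies in both images it lies in the image of their intersection. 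Restricting to the finitely many subsets of a single finite support, this produces a least finite support $S_0$. Third, for an \emph{arbitrary} support $T$ the same intersection formula applied to $S_0$ and $T$ shows $S_0 \cap T$ is again a support; it is finite (being contained in $S_0$), so minimality forces $S_0 \subseteq S_0 \cap T \subseteq T$.

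With the lemma in hand, preservation of an intersection $s \colon S \hookrightarrow A$ of a family $(s_i \colon S_i \hookrightarrow A)_{i \in I}$ is immediate. Since $\bar F$ preserves monomorphisms, all $\bar F s_i$ and $\bar F s$ are monic, and $\bar F s$ factors through each $\bar F s_i$, giving $\bar F S \subseteq \bigcap_i \bar F S_i$ as subobjects of $\bar F A$. Conversely, any $x$ lying in every $\bar F s_i[\bar F S_i]$ has each $S_i$ as a support, so its least support $S_0$ satisfies $S_0 \subseteq S_i$ for all $i$, hence $S_0 \subseteq \bigcap_i S_i = S$; as $x \in \bar F i_{S_0}[\bar F S_0]$ and $i_{S_0}$ factors through $s$, we get $x \in \bar F s[\bar F S]$. (The empty family gives $S = A$ and both sides equal $\bar F A$.) I expect the main obstacle to be the third step of the lemma --- controlling an \emph{infinite} support $T$ by intersecting it with the finite $S_0$ --- together with the bookkeeping around the empty set, which is exactly where the Trnkov\'a hull differs from $F$ and where one must ensure that the singleton-enlargement above keeps all the maps in play between nonempty sets.
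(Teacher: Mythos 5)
Your proposal is correct and follows essentially the same route as the paper's proof: both extract a \emph{least finite support} for each element of $\bar F X$ (using finitarity of $\bar F$ together with preservation of finite intersections) and then show that this least support is contained in every member of the given family by intersecting it with each one. You merely spell out details the paper leaves implicit, namely that $\bar F$ inherits finite boundedness from $F$ and the bookkeeping around the empty set.
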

\end{notheorembrackets}
\begin{proof}
  Let $F$ be a finitary set functor. Since $\bar F$ is finitary and
  preserves finite intersections, for every element $x \in \bar F X$,
  there exists a \emph{least} finite set $m\colon Y \subto X$ with $x$
  contained in $\bar F m$. Preservation of all intersections now
  follows easily: given subsets $v_i\colon V_i \subto X$, $i \in I$, with
  $x$ contained in the image of $\bar F v_i$ for each $i$, then $x$
  also lies in the image of the finite set $v_i \cap m$, hence
  $m \subseteq v_i$ by minimality. This proves
  $m \subseteq\bigcap_{i\in I} v_i$, thus, $x$ lies in the image of
  $\bar F(\bigcap_{i\in I} v_i)$, as required.\qed
\end{proof}

\subsection{Factorizations}
Every complete and well-powered category $\A$ has the following
factorizations of morphisms: every morphism $f$ may be written as
$f = m \cdot e$, where $e$ is a strong epimorphism and $m$ is a
monomorphism~\cite[Prop.~4.4.3]{Borceux94}.
We call the subobject $m$ the
\emph{image} of $f$. It follows from a result in Kurz'
thesis~\cite[Prop.~1.3.6]{Kurz00} that factorizations of morphisms
lift to coalgebras:

\begin{proposition}[$\Coalg F$ inherits factorizations from
  $\A$]\label{P:(e,m)}
  Suppose that $F$ preserves monomorphisms. Then the category
  $\Coalg F$ has factorizations of homomorphisms $f$ as
  $f = m \cdot e$, where $e$ is carried by a strong epimorphism and
  $m$ by a monomorphism in $\A$. The diagonal fill-in property holds
  in $\Coalg F$.
\end{proposition}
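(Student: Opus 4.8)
The plan is to reduce the whole statement to the factorization and diagonal fill-in already available in $\A$, the only extra ingredient being that $F$ preserves monomorphisms. First I would take a homomorphism $f\colon (A,\alpha) \to (B,\beta)$ and factor its underlying morphism in $\A$ as $f = m\cdot e$ with $e\colon A \epito C$ a strong epimorphism and $m\colon C \monoto B$ a monomorphism (\autoref{R:compwell}). The real content is to find a coalgebra structure $\gamma\colon C \to FC$ turning both $e$ and $m$ into homomorphisms, that is, satisfying $\gamma\cdot e = Fe\cdot\alpha$ and $Fm\cdot\gamma = \beta\cdot m$.

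I would obtain $\gamma$ from the diagonal fill-in property of $\A$ (\autoref{R:compwell}) applied to the square whose top edge is the strong epimorphism $e\colon A \epito C$ and whose bottom edge is $Fm\colon FC \monoto FB$, with the two remaining edges $Fe\cdot\alpha\colon A \to FC$ and $\beta\cdot m\colon C \to FB$. This is precisely the step where the hypothesis enters: because $F$ preserves monomorphisms, $Fm$ is a monomorphism, so the fill-in is applicable. The one computation to carry out beforehand is that the outer square commutes, $Fm\cdot(Fe\cdot\alpha) = (\beta\cdot m)\cdot e$; the left-hand side equals $Ff\cdot\alpha$ by functoriality, and since $f$ is a homomorphism this equals $\beta\cdot f = \beta\cdot m\cdot e$. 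The fill-in then delivers a unique $\gamma$ with $\gamma\cdot e = Fe\cdot\alpha$ and $Fm\cdot\gamma = \beta\cdot m$, which are exactly the assertions that $e$ and $m$ are coalgebra homomorphisms. This yields the desired factorization $f = m\cdot e$ in $\Coalg F$.

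For the diagonal fill-in property in $\Coalg F$, I would start from a commuting square of homomorphisms $m\cdot f = g\cdot e$ in which $e\colon (A,\alpha)\to(B,\beta)$ is carried by a strong epimorphism and $m\colon (C,\gamma)\to(D,\delta)$ by a monomorphism of $\A$, with $f\colon A\to C$ and $g\colon B\to D$. The fill-in in $\A$ gives a unique $d\colon B \to C$ with $d\cdot e = f$ and $m\cdot d = g$, and since the forgetful functor $\Coalg F \to \A$ is faithful this $d$ is the only possible diagonal; it remains to check that $d$ is a homomorphism, i.e.~$Fd\cdot\beta = \gamma\cdot d$. I would verify this by precomposing with the epimorphism $e$ and cancelling: on one side $Fd\cdot\beta\cdot e = Fd\cdot Fe\cdot\alpha = F(d\cdot e)\cdot\alpha = Ff\cdot\alpha$ (using that $e$ is a homomorphism and $d\cdot e = f$), while on the other $\gamma\cdot d\cdot e = \gamma\cdot f = Ff\cdot\alpha$ (using that $f$ is a homomorphism). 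As $e$ is epic, $Fd\cdot\beta = \gamma\cdot d$ follows.

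The main obstacle is thus not a genuine difficulty but rather the single point where preservation of monomorphisms is indispensable: without $Fm$ being monic one cannot lift the image $C$ to a coalgebra via the fill-in. Everything else—existence and uniqueness of factorizations, and uniqueness of the diagonal—is inherited from $\A$ through the faithful forgetful functor, so no further hypotheses on $F$ or $\A$ are needed.
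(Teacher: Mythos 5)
Your proof is correct and is exactly the standard argument (the one in Kurz's thesis, which the paper cites instead of proving the proposition itself): the coalgebra structure on the image is obtained by diagonal fill-in against $Fm$, which is where preservation of monomorphisms is used, and the diagonal in $\Coalg F$ is checked to be a homomorphism by cancelling the epimorphism $e$. No gaps.
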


\begin{rem}\label{R:subcoalg}
  By a \emph{subcoalgebra} of a coalgebra $(A,\alpha)$ we mean a
  subobject in $\Coalg F$ represented by a homomorphism
  $m\colon (B,\beta) \monoto (A,\alpha)$, where $m$ is monic in
  $\A$. Similarly, by a \emph{strong quotient} of a coalgebra
  $(A,\alpha)$ is represented by a homomorphism
  $e\colon (A,\alpha) \epito (C,\gamma)$ with $e$ strongly epic in
  $\A$.
\end{rem}

\subsection{Subobject Lattices}
\begin{notation}
  For every object $A$ we denote by $\Sub(A)$ the poset of subobjects
  of $A$. The top of this poset is represented by $\id_A$, and the
  bottom $\bot_A$ is the intersection of all subobjects of $A$.
\end{notation}
Now suppose that  $\A$ is a complete and well-powered category.
\begin{rem}\label{R:subcomp}
  Note that $\Sub(A)$ is a complete lattice: it is small since $\A$ is
  well-powered, and a meet of subobjects $m_i\colon A_i \monoto A$,
  $i \in I$, is their intersection, obtained by forming their wide
  pullback. It follows that $\Sub(A)$ has all joins as well.
\end{rem}
We shall need that forming inverse images, i.e.~pulling back along a
morphism, is a right adjoint.
\begin{notation}
  For every morphism $f\colon B \to A$ we have two operators:
  \begin{enumerate}
  \item The inverse image operator
    \[
      \pback f\colon \Sub(A) \to \Sub(B),
    \]
    assigning to every subobject $s\colon S \monoto A$ its inverse image
    under $f$ obtained by the following pullback
    \[
      \begin{tikzcd}
        P
        \arrow{r}
        \arrow[>->]{d}[swap]{\pback f(s)}
        \pullbackangle{-45}
        &
        S\arrow[>->]{d}{s}
        \\
        B \arrow{r}{f} & A
      \end{tikzcd}
    \]
  \item The (direct) image operator
    \[
      \pbackleft f\colon \Sub(B) \to \Sub(A),
    \]
    assigning to every subobject $t\colon T \monoto B$ the image of $f
    \cdot t$:
    \[
      \begin{tikzcd}
        T
        \arrow[>->]{d}[swap]{t}
        \arrow[->>]{r}
        &
        S
        \arrow[>->]{d}{\pbackleft f(t)}
        \\
        B
        \arrow{r}{f}
        &
        A
      \end{tikzcd}
    \]
  \end{enumerate}
\end{notation}
\begin{rem}\label{R:adjoint}
  \begin{enumerate}
  \item A monotone map $r\colon X \to Y$ between posets, regarded as a
    functor from $X$ to $Y$ considered as categories, is a right
    adjoint iff there exists a monotone map $\ell\colon Y \to X$ such
    that
    \[
      \ell(y) \leq x
      \qquad\text{iff}\qquad
      y \leq r(x)\qquad\text{for every $x \in X$ and $y \in Y$}. 
    \]
    
  \item Moreover, a monotone map $r\colon \Sub(B) \to \Sub(A)$ is a
    right adjoint iff it preserves intersections.  Indeed, the
    necessity follows since right adjoints preserve limits. For the
    sufficiency, suppose that $r$ preserves intersections, and define
    $\ell\colon \Sub(A) \to \Sub(B)$ by
    \[
      \ell(m) = \bigwedge_{m \leq r(m)} m
      \qquad
      \text{for every $m \in \Sub(A)$}.
    \]
    Then $\ell$ is clearly monotone, and for every $m'$ in $\Sub(B)$ we have
    \[
      \ell(m) \leq m' \qquad\text{iff}\qquad m \leq r(m').
    \]
    Thus, $\ell$ is the desired left adjoint of $r$.
  \end{enumerate}
\end{rem}
\begin{proposition}\label{P:subadjs}
  If $\A$ is complete and well-powered, then for every morphism
  $f\colon B \to A$ we have an adjoint situation:
  \[
    \begin{tikzcd}
      \Sub(A) \arrow[yshift=-2.5mm]{r}[swap]{\pback f}
      \arrow[phantom]{r}[description]{\bot}
      &
      \Sub(B).
      \arrow[yshift=2mm]{l}[swap]{\pbackleft f}
    \end{tikzcd}
  \]
  In other words: $\pbackleft f(t) \leq s$ iff $t \leq \pback
  f(s)$ for all subobjects $s\colon S \monoto A$ and $t\colon T
  \monoto B$. 
\end{proposition}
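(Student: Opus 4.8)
The plan is to verify the two defining inequalities of the adjunction directly, reducing each to a universal property already in hand, rather than invoking \autoref{R:adjoint}(2) (which would show that $\pback f$ has \emph{some} left adjoint but would still leave the work of identifying it with $\pbackleft f$). Fix subobjects $s\colon S \monoto A$ and $t\colon T \monoto B$.

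First I would record the universal property of the inverse image. Since $\pback f(s)\colon P \monoto B$ arises as the pullback of $s$ along $f$, a subobject $t$ satisfies $t \leq \pback f(s)$ exactly when the composite $f \o t\colon T \to A$ factors through $s$: a factorization of $t$ through $P$ is, by the pullback property, the same datum as a morphism $T \to S$ whose composite with $s$ equals $f \o t$, and uniqueness is automatic since $s$ is monic. This handles both the inverse-image side and the translation of ``$\leq$'' into ``factors through''.

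Next I would invoke the \emph{least-subobject} characterization of the image. Write the factorization $f \o t = \pbackleft f(t) \o e$, where $e\colon T \epito I$ is a strong epimorphism and $\pbackleft f(t)\colon I \monoto A$ is monic (\autoref{R:compwell}). If $\pbackleft f(t) \leq s$, then $f \o t = \pbackleft f(t) \o e$ factors through $s$, whence $t \leq \pback f(s)$ by the previous step. Conversely, if $t \leq \pback f(s)$, then $f \o t = s \o w$ for some $w\colon T \to S$; applying the diagonal fill-in property of \autoref{R:compwell} to the square with the strong epimorphism $e$ on top, the monomorphism $s$ on the bottom, $w$ on the left and $\pbackleft f(t)$ on the right produces a diagonal $d\colon I \to S$ with $s \o d = \pbackleft f(t)$, so that $\pbackleft f(t) \leq s$. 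The two implications together are precisely the claimed adjunction $\pbackleft f \dashv \pback f$.

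The only genuine subtlety is this last least-subobject property of the image, and it is exactly where completeness and well-poweredness enter: they guarantee the existence of the (strong epi, mono) factorization and the diagonal fill-in recalled in \autoref{R:compwell}, without which the image would not be the smallest subobject through which $f \o t$ factors. Everything else in the argument is formal manipulation of the pullback's universal property.
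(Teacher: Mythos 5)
Your proof is correct and takes essentially the same route as the paper's: both arguments reduce each of the two inequalities to the single condition that $f \o t$ factors through $s$, using the pullback's universal property for the $\pback f$ side and the (strong epi, mono) factorization together with the diagonal fill-in for the $\pbackleft f$ side. The paper merely presents this as one diagram with two dashed arrows, whereas you spell the two directions out in prose.
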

\begin{proof}
  In order to see this we consider the following diagram:
  \[
    \begin{tikzcd}
      T
      \arrow[->>]{rrr}{e}
      \arrow[>->,dashed]{rd}
      \arrow[>->,rounded corners,to path=
      {|- (\tikztotarget)}
      ]{rdd}
      \arrow[phantom,xshift=-1.5mm]{dd}{\labelstyle t}
      &&&
      I
      \arrow[>->,dashed]{ld}
      \arrow[>->,rounded corners,to path=
      {|- (\tikztotarget)}
      ]{ldd}
      \arrow[phantom,xshift=4mm]{dd}{\labelstyle\pbackleft f(t)}
      \\
      &
      P
      \arrow{r}
      \arrow[>->]{d}[swap]{\pback f(s)}
      \pullbackangle{-45}
      &
      S
      \arrow[>->]{d}{s}
      \\
      \phantom{ }
      &
      B
      \arrow{r}{f}
      &
      A
      &
      \phantom{ }
    \end{tikzcd}
  \]
  By the universal property of the lower middle pullback square and
  the diagonal fill-in property, we have the dashed morphism on the
  left iff we have the one on the right. Thus, $t \leq \pback f(s)$
  iff $\pbackleft f(t) \leq s$, as desired. \qed
\end{proof}

\subsection{Chains}
By a \emph{transfinite chain} in a category $\A$ we understand a
functor from the ordered class $\Ord$ of all ordinals into
$\A$. Moreover, for an ordinal $\lambda$, a \emph{$\lambda$-chain} in
$\A$ is a functor from $\lambda$ to $\A$. A category \emph{has
  colimits of chains} if for every ordinal $\lambda$ it has a colimit
of every $\lambda$-chain. This includes the initial object $0$ (the
case $\lambda = 0$).

\begin{defn}\label{D:constr}
  \begin{enumerate}
  \item A category $\A$ has \emph{\smooth monomorphisms} if for every
    $\lambda$-chain $C$ of monomorphisms a colimit exists, its
    colimit cocone is formed by monomorphisms, and for every cone of $C$
    formed by monomorphisms, the factorizing morphism from $\colim C$
    is monic. In particuar, every morphism from $0$ is monic.
 
  \item\label{D:constr:2} $\A$ has \emph{universally \smooth
      monomorphisms} if $\A$ also has pullbacks, and for every morphism
    $f\colon X \to \colim C$, the functor $\A/\colim C \to \A/X$ forming
    pullbacks along $f$ preserves the colimit of $C$.  This implies that the initial
    object $0$ is \emph{strict}, i.e.~every morphism $f\colon X \to 0$
    is an isomorphism. Indeed, consider the empty chain ($\lambda =
    0$). 
  \end{enumerate}
\end{defn}
\begin{expl}\label{E:uconstr}
  \begin{enumerate}
  \item $\Set$ has universally \smooth monomorphisms. More
    generally, every Grothendieck topos does.\smnote{TODO: Why is this
      the case; do we have a proof or reference?}
    
  \item\label{E:uconstr:2} $\KVec$ has \smooth monomorphisms, but not universally so
    because the initial object is not strict.

  \item Categories in which colimits of chains and pullbacks are
    formed ``set-like'' have universally \smooth monomorphisms.
    These include the categories of posets, graphs, topological
    spaces, presheaf categories, and many varieties, such as monoids,
    graphs, and unary algebras.

  \item\label{E:uconstr:4} Every locally finitely presentable category
    $\A$ with a strict initial object has \smooth
    monomorphisms. This follows from~\cite[Prop.~1.62]{ar}. Moreover,
    since pullbacks commute with colimits of chains, it is easy to
    prove that colimits of chains are universal. Indeed, suppose that
    $c_i\colon C_i \to C$ is the colimit cocone of some chain of
    objects $C_i$, $i < \lambda$, and let $f\colon B \to C$ be a
    morphism. Form the pullback of every $c_i$ along $f$:
    \[
      \begin{tikzcd}
        B_i \arrow{d}[swap]{b_i}
        \pullbackangle{-45}
        \arrow{r}{f_i}
        &
        C_i
        \arrow{d}{c_i}
        \\
        B
        \arrow{r}{f}
        &
        C
      \end{tikzcd}
    \]
    Then $b_i\colon B_i \to B$ is a colimit cocone. Indeed, in the
    category of commmutative squares in $\A$, the chain of
    the above pullbacks squares has as a colimit the following pullback square
    \[
      \begin{tikzcd}
        \colim B_i \arrow{d}[swap]{\cong}
        \pullbackangle{-45}
        \arrow{r}{\colim f_i}
        &
        \colim C_i = C
        \arrow[equals]{d}
        \\
        B
        \arrow{r}{f}
        &
        C
      \end{tikzcd}
    \]

    Unfortunately, the example of rings demonstrates that the assumption
    of strictness of $0$ cannot be lifted. In fact, the collections of
    monomorphisms is not \smooth in the category of rings since there
    exist non-injective homomorphisms whose domain is the initial ring
    $\Int$.

    \smnote{Some text taken out here!}
\takeout{    
  \item 
  The category of pointed sets does not have    universally \smooth monomorphisms,
  since its initial object is not strict.
  An example of a variety which does not have
    universally \smooth monomorphisms is the category of
    rings: the initial ring is $\mathbb{Z}$, the integers, and it is
    not strict.
The same holds for monoids and for vector spaces over a fixed field.
}    
\takeout{  \item Monoids are another example of a variety which does not have
    universally \smooth monomorphisms.  Indeed, $\Sub(A)$ has continuous
    meets for every monoid $A$, but in general this is not a frame.
    For example, let $A$ be $\Nat$, the additive monoid of natural
    numbers, and let $k\Nat$ be the submonoid $\set{ki \colon i \in N}$.
    Then $2\Nat = (4\Nat \orr 6\Nat ) = 4\Nat \orr 6\Nat$.  This is
    different from
    \[
      (2\Nat \andd 6\Nat ) \orr (2\Nat \andd 6\Nat ) = 2\Nat.
    \]
  }

  \item The category $\CPO$ of complete partial orders (i.e.~partially
    ordered sets with joins of $\omega$-chains) does not have
    \smooth monomorphisms. Indeed, consider the $\omega$-chain of
    linearly ordered sets $A_n = \set{0,\ldots,n} + \set{\top}$
    (where $\top$ is a top element) with inclusion maps $A_n \to A_{n+1}$.  Its
    colimit is the linearly ordered set $\Nat + \set{\top,\top'}$ of
    natural numbers with two added top elements $\top' < \top$.  For
    the sub-cpo $\Nat + \set{\top}$, the inclusions of $A_n$ are monic
    and form a cocone. But the unique factorizing morphism from the
    colimit is not monic.
\end{enumerate}
\end{expl}
\begin{rem}\label{R:constr}
  If $\A$ is a complete and well-powered category, then $\Sub(A)$ is a
  complete lattice. Now suppose that $\A$ has \smooth monomorphisms.

  \begin{enumerate}
  \item\label{R:constr:1} In this setting, the unique morphism
    $\bot_A\colon 0 \to A$ is a monomorphism and therefore the bottom
    element of the poset $\Sub(A)$.
    
  \item\label{R:constr:2}  Furthermore, a join of a chain in $\Sub(A)$
    is obtained by forming a colimit. 
    More precisely, given an
    ordinal $k$ and an $k$-chain $m_i\colon A_i \monoto A$ of
    subobjects $(i < k)$, we have the diagram of objects
    $(A_i)_{i <k}$, where for all $i \leq j < k$ the connecting
    morphisms $a_{ij}\colon A_i \monoto A_j$ are the unique
    factorizations witnessing $m_i \leq m_j$:
    \[
      \begin{tikzcd}
        A_i \arrow{rr}{a_{ij}} \arrow{rd}[swap]{m_i} & & A_j
        \arrow{ld}{m_j} \\
        & A
      \end{tikzcd}
    \]
    The colimit $B$ of this diagram is formed by monomorphisms
    $b_i\colon A_i \to B$, $i < k$, and the unique monomorphism $m\colon B
    \monoto A$ with $m \cdot b_i = m_i$ for all $i <k$ is the
    join of all $m_i$, in symbols: $m = \bigvee_{i < k} m_i$.
        
  \item\label{R:constr:3} If $\A$ has universally \smooth monomorphisms, then
    for every morphism $f\colon A \to B$, the operator $\pback
    f\colon \Sub(B) \to \Sub(A)$ preserves unions of chains.  
    
    Indeed, suppose that $c\colon C \monoto A$ is the union of a chain
    of subobjects $a_i\colon A_i \monoto A$ in $\Sub(A)$. Then $C$ is
    the colimit of the (chain of connecting morphisms between the)
    $A_i$ with colimit injections $c_i\colon A_i\monoto C$, say.  For
    the morphism $p = \pback f (c)\colon P \monoto B$ we paste two
    pullback squares for every $i$ as shown below:
    \[
      \begin{tikzcd}
        B_i
        \arrow[>->]{d}[swap]{p_i}
        \arrow[>->]{r}{f_i}
        \pullbackangle{-45}
        \arrow[>->,shiftarr = {xshift=-20pt}]{dd}[swap]{b_i}
        &
        A_i
        \arrow[>->]{d}{c_i}
        \arrow[>->,shiftarr = {xshift=20pt}]{dd}{a_i}
        \\
        P
        \arrow[>->]{d}[swap]{p}
        \arrow[>->]{r}{f'}
        \pullbackangle{-45}
        &
        C
        \arrow[>->]{d}{c}
        \\
        B
        \arrow[>->]{r}{f}
        &
        A
      \end{tikzcd}
    \]
    The outside is then the pullback square stating that
    $b_i = \pback f (a_i)$. By universality, $P = \colim B_i$ with
    colimit injections $p_i$. Thus, by the constructivity of
    monomorphisms $p$ is the union of the subobjects $b_i$ in
    $\Sub(B)$; in symbols:
    $\bigvee_i \pback f (a_i) = \pback f\big(\bigvee_i a_i\big)$ as
    desired.
  \end{enumerate}
\end{rem}
\begin{rem}\label{R:ini}
  \begin{enumerate}
  \item Suppose that $\A$ has colimits of chains. Recall~\cite{A74}
    that every endofunctor $F\colon \A \to \A$ gives rise to an
    essentially unique chain $W\colon \Ord \to \A$, the
    \emph{initial-algebra chain}, of objects $W_i = F^i 0$,
    $i \in \Ord$ and connecting morphisms
    $w_{ij}\colon F^i 0 \to F^j 0$, $i \leq j \in \Ord$. They are
    defined by transfinite recursion:
    \[
      \begin{array}{l@{\,}l@{\ }ll}
        W_0 & = &  0, \\
        W_{j+1}  & = & FW_j & \mbox{ for all ordinals $j$,}\\
        W_j &=  & \colim_{i<j} W_i & \mbox{ for all limit ordinals $j$,}\\
      \end{array}
    \]
    and
    \[
      \begin{array}{l}
        w_{0,1}\colon 0\to  W_0 \mbox{ is unique },\\
        w_{j+1,k+1} = F w_{j,k} \colon FW_j \to FW_k,\\
        w_{i,j} \; (i<j) \mbox{ is the colimit cocone for limit ordinals } j.
      \end{array}
    \]
  \item\label{R:ini:2} Now suppose that $\A$ has \smooth monomorphisms and that
    $F\colon \A \to \A$ has a \emph{pre-fixed point}, i.e.~an object
    $A$ with a monomorphism $\alpha\colon FA \monoto A$. Then an
    initial algebra exists. This follows from
    results by Trnkov\'a et al.~\cite{takr} as we now briefly recall.
    Let $\alpha\colon FA \monoto A$ be a pre-fixed point.  Then there
    is a unique cocone $\alpha_i\colon W_i \to B$ satisfying
    $\alpha_{i+1} = \alpha \o F\alpha_i$. Moreover, each $\alpha_i$ is monomorphic.
    Since $A$ has only a set of subobjects, there is some $\lambda$
    such that for every $i > \lambda$, all of the morphisms $\alpha_i$ represent the
    same subobject of $A$. Consequently, $w_{\lambda,\lambda +1}$ is
    an isomorphism. Then $\mu F = F^\lambda 0$ with the structure
    $\ini = w^{-1}_{\lambda,\lambda +1}\colon F(\mu F) \to \mu F$ is an
    initial algebra.
  \end{enumerate}
\end{rem}

\takeout{
\smnote[inline]{Remnants of old prelims follow.}
\begin{expl}\label{E:graph}
  Coalgebras $\alpha\colon S \to \Pow S$ for the power-set functor may
  be identified with directed graphs on the set $S$ of vertices, and
  the coalgebra structure $\alpha$ describes the edges: $b \in
  \alpha(a)$ means that the is an edge $a \to b$ in the graph.
  
  Occasionally, it is more convenient to reverse the direction of
  edges: the relation $R \subseteq S \times S$ given by $x \mathbin{R}
  y$ iff $x \in \alpha(y)$ is called the \emph{derived relation} of
  the coalgebra $(S,\alpha)$. 
\end{expl}
}%

\section{Recursive Coalgebras}
\label{S:reco}

\begin{assumption}\label{A:wfd}
  We work with a standard set theory (e.g. Zermelo-Fraenkel), assuming
  the Axiom of Choice. In particular, we use transfinite induction on
  several occasions. (We are not concerned with constructive
  foundations in this paper.)

  Throughout this paper we assume that $\A$ is a complete and
  well-powered category $\A$ and that $F\colon \A\to \A$ preserves
  monomorphisms.
\end{assumption}
\noindent
For $\A = \Set$ the condition that $F$ preserves
monomorphisms may be dropped. In fact, preservation of nonempty monomorphism
is sufficient in general (for a suitable notion of nonempty
monomorphism)~\cite[Lemma~2.5]{mpw19}, and this holds for every set functor.

The following definition of recursive coalgebras was first given by
Osius~\cite{osius}. Taylor~\cite{taylor2} speaks of
\emph{coalgebras obeying the recursion scheme}. Capretta et
al.~\cite{cuv06} extended the concept to \emph{parametrically
  recursive} coalgebra by dualizing completely iterative
algebras~\cite{milius}.
\begin{defn}\label{def:recoalg}
  A coalgebra $\gamma\colon C \to FC$ is called \emph{recursive} if for
  every algebra $\alpha\colon FA \to A$ there exists a unique
  coalgebra-to-algebra morphism $h\colon C \to A$, i.e.~a unique morphism
  such that the square below commutes:
  \[
    \begin{tikzcd}
      C \arrow{d}[swap]{\gamma} \arrow{r}{h}
      &
      A
      \\
      FC
      \arrow{r}{Fh}
      &
      FA
      \arrow{u}[swap]{\alpha}
    \end{tikzcd}
  \]
\end{defn}
 
\begin{examples}\label{E:reco}
  \begin{enumerate}
  \item The first examples of recursive coalgebras are well-founded
    relations. Recall that a binary relation $R$ on a set
    $X$ is well-founded if there is no infinite descending
    sequence
    \[
      \cdots \mathbin{R} x_3 \mathbin{R} x_2 \mathbin{R} x_1 \mathbin{R} x_0.
    \]
    Now a binary relation $R \subseteq X \times X$ is essentially a
    graph on $X$, equivalently the coalgebra structure
    $\alpha\colon X \to \pow X$ with
    $\alpha(x) = \set{y \mid y \mathbin{R} x}$
    (cf.~\autoref{E:graph}\ref{E:graph:1}). Osius~\cite{osius} showed
    that for every well-founded relation the associated
    $\pow$-coalgebra is recursive. Shortly: a graph regarded as a
    coalgebra for $\Pow$ is recursive iff it has no infinite path.
    
  \item If $\mu F$ exists, then it is a recursive coalgebra. 
    
  \item\label{E:reco:3} The initial coalgebra $0 \to F0$ is
    recursive.
    
  \item If $(C, \gamma)$ is recursive so is $(FC, F\gamma)$,
    see~\cite[Prop.~6]{cuv06}.
    
  \item\label{E:reco:5} Every colimit of recursive coalgebras in
    $\Coalg F$ is recursive.  This is easy to prove, using that
    colimits of coalgebras are formed on the level of the underlying
    category.

  \item\label{E:reco:6} It follows from
    items~\ref{E:reco:3}--\ref{E:reco:5} that in the initial-algebra
    chain from \autoref{R:ini} all coalgebras
    $w_{i,i+1}\colon F^i 0 \to F(F^{i} 0)$, $i \in \Ord$, are
    recursive.
  \end{enumerate}
\end{examples}

By an argument similar to the proof of the (dual of) Lambek's Lemma,
we see that a terminal recursive $F$-coalgebra is a fixed point of
$F$, and we have
\begin{notheorembrackets}%
\begin{corollary}[{\cite[Prop.~7]{cuv06}}]\label{cor:cuv}
  The initial algebra is precisely the same as the terminal
  recursive coalgebra.
\end{corollary}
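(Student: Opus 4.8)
The plan is to prove the two directions of the identification and to connect them through the correspondence supplied by Lambek's Lemma. The workhorse is a single observation: for the algebra $(\mu F,\iota)$, a coalgebra-to-algebra morphism $h\colon (C,\gamma)\to(\mu F,\iota)$ is exactly a morphism satisfying $h = \iota\o Fh\o\gamma$, and this is literally the equation $Fh\o\gamma = \iota^{-1}\o h$ asserting that $h$ is a coalgebra homomorphism $(C,\gamma)\to(\mu F,\iota^{-1})$. So the two notions of morphism coincide on the nose, and I would isolate this first.

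First I would assume $\mu F$ exists and show that the coalgebra $(\mu F,\iota^{-1})$ is terminal among recursive coalgebras. It is recursive by \autoref{E:reco}. Given any recursive coalgebra $(C,\gamma)$, recursiveness provides a \emph{unique} coalgebra-to-algebra morphism into $(\mu F,\iota)$, which by the observation above is precisely a unique coalgebra homomorphism into $(\mu F,\iota^{-1})$. Hence $(\mu F,\iota^{-1})$ receives exactly one homomorphism from every recursive coalgebra, i.e.\ it is terminal in the category of recursive coalgebras.

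Conversely, starting from a terminal recursive coalgebra $(T,\tau)$, I would recover an initial algebra by a Lambek-style argument. By \autoref{E:reco} the coalgebra $(FT,F\tau)$ is again recursive, and $\tau$ is a coalgebra homomorphism $(T,\tau)\to(FT,F\tau)$. Terminality yields a unique homomorphism $f\colon(FT,F\tau)\to(T,\tau)$. Since the only coalgebra endomorphism of a terminal object is the identity, the composite $f\o\tau$ must equal $\id_T$; and as $f$ is a homomorphism we get $\tau\o f = Ff\o F\tau = F(f\o\tau) = \id_{FT}$. Thus $\tau$ is an isomorphism and $(T,\tau^{-1})$ is an $F$-algebra.

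Finally, the same correspondence shows this algebra is initial: an algebra homomorphism $h\colon(T,\tau^{-1})\to(A,\alpha)$ satisfies $h=\alpha\o Fh\o\tau$, which is exactly the defining equation of a coalgebra-to-algebra morphism out of the recursive coalgebra $(T,\tau)$, and recursiveness makes this unique for every $(A,\alpha)$. The one delicate point is the Lambek-style step—deriving $\tau\o f=\id_{FT}$ from $f\o\tau=\id_T$—but this is purely formal once one applies $F$ and invokes the homomorphism equation $\tau\o f = Ff\o F\tau$ for $f$; everything else reduces to the bookkeeping identity between the two flavors of morphism.
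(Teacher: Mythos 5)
Your proof is correct and follows essentially the same route the paper indicates: the observation that coalgebra-to-algebra morphisms into $(\mu F,\iota)$ coincide with coalgebra homomorphisms into $(\mu F,\iota^{-1})$, together with the dual-of-Lambek argument (using that $(FT,F\tau)$ is recursive and that the terminal recursive coalgebra has only the identity endomorphism) showing that a terminal recursive coalgebra is a fixed point. The paper merely sketches this in the sentence preceding the corollary and defers to Capretta et al.; you have filled in exactly the intended details.
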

\end{notheorembrackets}

Capretta et al.~\cite{cuv06} study the notion of a parametrically
recursive coalgebra dualizing the notion of a completely iterative
algebra~\cite{milius}.

\begin{defn}
  A coalgebra $(A,\alpha)$ is \emph{parametrically
    recursive} if for every morphism $e\colon FX \times A \to X $ there is a
  unique morphism $e^\dag\colon A\to X$ so that the square below
  commutes:
  \begin{equation}\label{eq:prec}
    \begin{tikzcd}[column sep = 15mm]
      A \arrow{r}{e^\dag} \arrow{d}[swap]{\pair{\alpha,A}}
      &
      X
      \\
      FA \times A
      \arrow{r}{Fe^\dag \times A}
      &
      FX \times A \arrow{u}[swap]{e}
    \end{tikzcd}
  \end{equation}
\end{defn}

The dual statement of~\cite[Thm.~2.8]{milius} states that the initial
algebra is, equivalently, the terminal parametrically recursive
coalgebra. Of course, every parametrically recursive coalgebra is
recursive. (To see this, form for a given $e\colon FX \to X$ the
morphism $e' = e\o \pi$, where $\pi\colon FX\times A\to FX$ is the
projection.) In Corollaries~\ref{C:equiv} and~\ref{C:equiv:2} we will
see that the converse often holds. However, in general the converse fails:
\begin{notheorembrackets}%
\begin{expl}[\cite{AdamekLuckeMilius07}]\label{E:functorR}
  Let $R\colon \Set \to \Set$ be the functor defined in
  Example~\ref{E:setfunctors}, part~\ref{E:setfunctors:2}.  Also, let
  $C = \{0,1\}$, and define $\gamma\colon C \to RC$ by
  $\gamma(0) = \gamma(1) = (0,1)$. Then $(C,\gamma)$ is a recursive
  coalgebra. Indeed, for every algebra $\alpha\colon RA \to A$ the
  constant map $h\colon C \to A$ with $h(0) = h(1) = \alpha(d)$ is the
  unique coalgebra-to-algebra morphism.

  However, $(C,\gamma)$ is not parametrically recursive. To see this,
  consider any morphism $e\colon RX \times \{0,1\} \to X$ such that $RX$
  contains more than one pair $(x_0,x_1)$, $x_0 \neq x_1$ with
  $e((x_0,x_1),i) = x_i$ for $i = 0,1$. Then each such pair yields
  $h\colon C \to X$ with $h(i) = x_i$ making~\eqref{eq:prec}
  commute. Thus, $(C,\gamma)$ is not parametrically recursive.
\end{expl}
\end{notheorembrackets}

The situation in \autoref{E:functorR} is relatively rare and artificial
because for functors preserving inverse images, recursive and
parametrically recursive coalgebras coincide (see \autoref{C:equiv}
and \autoref{C:equiv:2}).

\takeout{
The following example is an easy special case of the argument
given in Theorem~\ref{T:rec-wf:2} and so might serve as an intuitive
explanation for the latter.

\begin{expl}
  Let $\gamma\colon G \to \pow G$ be a graph, considered as a coalgebra for
  the power set functor $\pow$ on $\Set$. Then $\gamma$ is recursive
  iff $G$ has no infinite paths.

  Here is the reason: Consider the following $\pow$-algebra
  $(2,\tau)$, where $2 = \set{t,f}$, and $\tau\colon \pow 2 \to 2$ is given
  by $\tau(x) = f$ if $x = \set{t}$ or $x = \emptyset$, and
  $\tau(x) = f$ otherwise.  We have two coalgebra-to-algebra
  morphisms.  One is $h_1\colon G \to 2$, where $h_1$ is the constant
  function with value $t$.  The second is $h_2\colon G \to 2$, where
  $h_2(g) = t$ iff there is no infinite path in $G$ starting from $g$.
  It is not hard to check that $h_2$ is a coalgebra-to-algebra
  morphism.  It follows that $\gamma$ is recursive iff $h_1 = h_2$,
  and this holds iff there are no infinite paths in the graph $G$.
\end{expl}}

\takeout{
\begin{expl}
  A graph regarded as a coalgebra for $\Pow$ is recursive iff it has
  no infinite path. This follows from
  \autoref{E-well-founded}\ref{E-well-founded:0} and
  \autoref{C:equiv:2} below.
\end{expl}}

We conclude this section with a few examples explaining how recursive coalgebras
capture familiar recursive function definitions as well as functional
divide-and-conquer programs.
\begin{examples}\label{E:prec}
  \begin{enumerate}
  \item\label{E:prec:1} The functor $FX = X + 1$ has unary algebras
    with a constant as algebras, and coalgebras for $F$ may be
    identified with partial unary algebras. The initial algebra for
    $F$ is the set of natural numbers $\Nat$ with the structure given
    by the successor function and the constant $0$. The inverse of
    the initial $F$-algebra is the coalgebra given by the partial
    unary operation $n \mapsto n-1$ (defined iff $n > 0$). This
    coalgebra is parametrically recursive. Hence every function
    \[
      e = [u,v]\colon F X \times \Nat \cong  \Nat + X \times \Nat  \to X
    \]
    defines a unique sequence $\sol e\colon \Nat \to X$, $\sol e(n) = x_n$
    such that~\eqref{eq:prec} commutes. This means that $x_0 = v(0)$ and
    $x_{n+1} = u(x_n, n+1)$. For example, the factorial function is
    then given by the choice $X = \Nat$;
    $u(n,m) = n \o m$ and $v(0) = 1$.

  \item\label{E:prec:2} For the set functor $F$ given by
    $FX = X \times X + 1$, coalgebras $\gamma\colon C \times C + 1$
    are deterministic systems with a state set $C$, a binary input and with halting
    states (expressed by $\gamma^{-1}(1)$).

    The coalgebra $\Nat$ of natural numbers with halting states $0$
    and $1$ and input structure $\gamma\colon n \mapsto (n-1, n-2)$ for
    $n \geq 2$ is parametrically recursive (see \autoref{E:wf-prec}).

    For example, to define the Fibonacci sequence starting with $a_0,
    a_1 \in \Nat$, consider the morphism
    $e\colon  F\Nat \times \Nat \cong \Nat^3 + \Nat \to \Nat$ with
    \[
      e(i,j,k) = i+j
      \quad\text{and}\quad
      e(n) = \begin{cases}
        a_0 & \text{if $n=0$}, \\
        a_1 & \text{if $n=1$}, \\
        0 & \text{if $n \geq 2$}.
      \end{cases}
    \]
    We know that there is a unique sequence $\sol e\colon \Nat \to \Nat$
    such that the diagram \eqref{eq:prec} commutes, which means
    $\sol e(0) = a_0$, $\sol e(1) = a_1$ and
    $\sol e(n+2) = \sol e(n+1) + \sol e(n)$.
    
  \item\label{E:prec:3}%
    Capretta et al.~\cite{cuv09} showed how to obtain
    Quicksort using parametric recursivity. Let $A$ be any linearly
    ordered set (of data elements). Then Quicksort is usually defined
    as the recursive function $q\colon A^* \to A^*$ given by
    \[
      \qsort(\eps) = \eps \qquad\text{and}\qquad
      \qsort(aw) = \qsort(w_{\leq a}) \star (a \qsort(w_{> a})),
    \]
    where $A^*$ is the set of all lists on $A$, $\varepsilon$ is the
    empty list, $\star$ is the concatenation of lists and $w_{\leq a}$
    denotes the list of those elements of $w$ which
    are less than or equal than $a$; analogously for $w_{> a}$.

    Now consider the functor $FX = 1 + A \times X \times X$ on $\Set$,
    where $1 = \set{\bullet}$, and form the coalgebra
    $\qsplit\colon A^* \to 1 + A \times A^* \times A^*$ given by
    \begin{equation}\label{eq:split}
      \qsplit(\eps) = \bullet
      \qquad\text{and}\qquad
      \qsplit(aw) = (a, w_{\leq a}, w_{> a})
      \qquad\text{for $a \in A$ and $w\in A^*$}.
    \end{equation}
    We shall see that this coalgebra is recursive in
    \autoref{E:wf-prec}. Thus, for the $F$-algebra
    $\qmerge\colon 1 +  A \times A^* \times A^* \to A^*$ given by
    \[
      \qmerge(\bullet) = \varepsilon
      \qquad\text{and}\qquad
      \qmerge(a,w,v) = w \star (av) 
    \]
    there exists a unique function $\qsort$ on $A^*$ such that
    $\qsort = \qmerge \cdot F\qsort \cdot \qsplit$. Notice that the
    last equation reflects the idea that Quicksort is a
    divide-and-conquer algorithm. The coalgebra structure
    $\qsplit$ divides a list into two parts $w_{\leq a}$ and
    $w_{> a}$.   Then $F\qsort$ sorts these two smaller lists, and
    finally in the combine- (or conquer-) step,
    the algebra structure $\qmerge$ merges the two sorted parts to
    obtain the desired whole sorted list.

    Similarly, functions defined by parametric recursivity
    (cf.~Diagram~\eqref{eq:prec}), can be understood as
    divide-and-conquer algorithms, where the combine-step is allowed
    to access the original parameter additionally.  For instance, in
    the current example the divide-step
    $\langle\qsplit, id_{A^*}\rangle$ produces the pair consisting of
    $(a, w_{\leq a}, w_{> a})$ and the original parameter $aw$,
    and the combine-step, which is given by an algebra
    $F X \times A^*\to X$ will, by the commutativity of
    \eqref{eq:prec}, get $aw$ as its right-hand input.
  \end{enumerate}
\end{examples}

Jeannin et al.~\cite[Sec.~4]{JeanninEA17} provide a number of
recursive functions arising in programming that are determined by
recursivity of a coalgebra, e.g.~the gcd of integers, the Ackermann
function, and the Towers of Hanoi.
\smnote{They say that alternating Turing machines yield a
  ``non-wellfounded example''; so I guess this means non-example?}

\section{The Next Time Operator and Well-Founded Coalgebras}\label{S:wfd}

As we have mentioned in the Introduction, the main issue of this paper is
the relationship between two concepts pertaining to coalgebras:
recursiveness and well-foundedness.
The concept of well-foundedness is well-known for directed graphs: it
means that the graph has no infinite directed paths. Similarly for
relations:\smnote{Graphs and relations may be the same (but in fact,
  they are not); however, a graph is well-founded iff the opposite(!)
  of its edge relation is well-founded, see \autoref{E:reco}.}  for
example, the elementhood relation $\in$ of set theory is well-founded;
this is precisely the Foundation Axiom.

Taylor~\cite[Def.~6.2.3]{taylor2}\smnote{Osius did
  \emph{not} have any general category theoretic formulation of well-foundedness!}
gave a more general category theoretic formulation of well-foundedness. We observe
here that his definition can be presented in a compact way, by using
an operator that generalizes the way one thinks of the
semantics of the `next time' operator of temporal logics
for non-deterministic (or even probabilistic) automata and transitions
systems. It is also strongly related to the
algebraic semantics of modal logic, where one passes from a graph $G$
to a function on $\pow G$. Jacobs~\cite{Jacobs02}
defined and studied the `next time' operator on coalgebras for Kripke
polynomial set functors, which can be generalized to arbitrary functors
as follows.

Recall that $\Sub(A)$ denotes the complete lattice of subobjects of $A$.
\begin{notheorembrackets} 
\begin{defn}[{\cite[Def.~8.9]{amm18}}]\label{D-tilde}
  Every coalgebra $\alpha\colon A\to FA$ induces an endofunction on $\Sub(A)$,
  called the \emph{next time operator}
  \[
    \nexttime\colon  \Sub(A)\to \Sub(A),
    \qquad
    \nexttime (s) = \pback\alpha(Fs)
    \quad\text{for $s\in \Sub(A)$}.
  \]
  In more detail: we define $\nexttime s$ and $\alpha(s)$ by the
  following pullback: 
  \begin{equation}\label{rdj}
    \begin{tikzcd}
      \nexttime S
      \arrow{r}{\alpha(s)}
      \arrow[>->]{d}[swap]{\nexttime s}
      \pullbackangle{-45}
      &
      FS
      \arrow[>->]{d}{Fs}
      \\
      A\arrow{r}{\alpha}
      &
      FA
    \end{tikzcd}
  \end{equation}
  In words, $\nexttime$ assigns to each subobject $s\colon S\monoto A$
  the inverse image of $F s$ under $\alpha$.  Since $Fs$ is a
  monomorphism, $\nexttime s$ is a monomorphism and $\alpha(s)$ is
  (for every representation $\nexttime s$ of that subobject of $A$)
  uniquely determined.
\end{defn}
\end{notheorembrackets}
\begin{expl}\label{E-ptilde}
  \begin{enumerate}
  \item Let $A$ be a graph, considered as a coalgebra for
    $\pow\colon\Set\to\Set$.  If $S \subseteq A$ is a set of vertices,
    then $\nexttime S$ is the set of vertices all of whose successors
    belong to $S$.
    
  \item\label{E-ptilde:2} For the set functor $FX = \pow(\Sigma \times X)$ expressing
    labelled transition systems the operator $\nexttime$ for a coalgebra
    $\alpha\colon A\to \pow(\Sigma \times A)$ is the semantic counterpart of the
    next time operator of classical linear temporal logic, see
    e.g.~Manna and Pn\"ueli~\cite{MP92}. In fact, for a subset $S \subto A$
    we have that $\nexttime S$ consists of those states whose next
    states lie in $S$, in symbols: 
    \[
      \nexttime S
      =
      \big\{x \in A \mid \text{$(s,y)\in \alpha(x)$ implies $y\in S$, for all $s \in \Sigma$}\big\}.
    \]
  \end{enumerate}
\end{expl}
The next time operator allows a compact definition of well-foundedness
as characterized by Taylor~\cite[Exercise~VI.17]{taylor2}
(see also~\cite[Corollary~2.19]{amms}): 
\begin{defn}\label{D:well-founded}
  A coalgebra is \emph{well-founded} if $\id_A$ is the only fixed
  point of its next time operator.
\end{defn}
\begin{rem}\label{R:fixed}
  \begin{enumerate}
  \item\label{R:fixed:1} Let us call a subcoalgebra
    $m\colon (B, \beta) \monoto (A,\alpha)$ \emph{cartesian}
    provided that the square below is a pullback. 
    \begin{equation}\label{diag:cart}
      \begin{tikzcd}
        B
        \arrow{r}{\beta} \arrow[>->]{d}[swap]{m}
        \pullbackangle{-45}
        &
        FB
        \arrow[>->]{d}{Fm}
        \\
        A \arrow{r}{\alpha}
        &
        FA
      \end{tikzcd}      
    \end{equation}
    Then $(A,\alpha)$ is well-founded iff it has no proper
    cartesian subcoalgebra.  That is, if
    $m\colon (B, \beta) \monoto (A,\alpha)$ is a cartesian subcoalgebra,
    then $m$ is an isomorphism. Indeed, the fixed points of next time
    are precisely the cartesian subcoalgebras (see \autoref{L:next}
    for a more refined statement). 

  \item\label{R:fixed:2} A coalgebra is well-founded iff
    $\nexttime$ has a unique pre-fixed point $\nexttime m \leq
    m$. Indeed, since $\Sub(A)$ is a complete lattice, the least fixed point of a
    monotone map is its least pre-fixed point. Taylor's
    definition~\cite[Def.~6.3.2]{taylor2} uses that property: he
    calls a coalgebra well-founded iff $\nexttime$ has no proper
    subobject as a pre-fixed point.
  \end{enumerate}
\end{rem}
\begin{examples}\label{E-well-founded}
  \begin{enumerate}
  \item\label{E-well-founded:0} A coalgebra for $\Pow$ regarded as a
    graph (see \autoref{E:graph}) is well-founded iff it has no
    infinite directed path, see~\cite[Example~6.3.3]{taylor2}.
    
  \item\label{E-well-founded:1} If $\mu F$ exists, then as a coalgebra
    it is well-founded. Indeed, in every pullback~\eqref{diag:cart},
    since $\iota^{-1}$ (as $\alpha$) is invertible, so is $\beta$. The
    unique algebra homomorphism from $\mu F$ to the algebra
    $\beta^{-1}\colon FB \to B$ is clearly inverse to $m$.
    
  \item\label{E-well-founded:2} If a set functor $F$ fulfils
    $F\emptyset = \emptyset$, then the only well-founded coalgebra is
    the empty one. Indeed, this follows from the fact that the empty
    coalgebra is a fixed point of $\nexttime$. For example, a
    deterministic automaton over the input alphabet $\Sigma$, as a
    coalgebra for $FX = \{0, 1\} \times X^\Sigma$, is well-founded iff
    it is empty.
    
  \item\label{E-well-founded:3} A non-deterministic automaton may be
    considered as a coalgebra for the set functor
    $FX = \{0,1\} \times (\P X)^\Sigma$. It is well-founded iff its
    state transition graph is well-founded (i.e.~has no infinite
    path). This follows from \autoref{C:cangr} below.
    
  \item\label{E-well-founded:4} Well-founded linear weighted
    automata. A linear weighted automaton, i.e.~a coalgebra
    $(A, \alpha)$ for $FX = K \times X^\Sigma$ on $\KVec$, is
    well-founded iff every path in its state transition graph
    eventually leads to $0$. This means that every path starting in a
    state $s \in A$ leads to the state $0$ after finitely many steps
    (where it stays).  In fact, denote by $a^*\colon A^* \monoto A$
    the subset of all states with that property. Clearly, $A^*$ is a
    subspace of $A$. Furthermore, $\nexttime$ preserves joins of
    $\omega$-chains in $\Sub(A)$
    (see~\autoref{R:omega}\ref{R:omega:2}). Hence, it follows from
    Kleene's fixed point theorem that the least fixed point of
    $\nexttime$ is $\bigvee_{n \in \Nat} \nexttime^n(\bot_A)$. We also
    know that $\bot_A$ is the $0$-subspace, and for every subspace
    $s\colon S \monoto A$, $\nexttime s$ is the space of all nodes
    whose successors are in $S$. Therefore $\nexttime^n(\bot_A)$
    consists of precisely those states from which every path reaches
    $0$ in at most $n$ steps. Thus
    $A^* = \bigvee_{n \in \Nat} \nexttime^n(\bot_A)$. It follows that
    $(A, \alpha)$ is well-founded iff $A = A^*$.
  \end{enumerate}
\end{examples}

We next show that to every coalgebra for a set functor $F$ one may
associate a graph, in a canonical way.  Moreover, if $F$ preserves
intersections, then a coalgebra is well-founded if and only if so is
its canonical graph.

\begin{notation}
  Given a set functor $F$, we define for every set $X$
  the map $\tau_X\colon FX \to \pow X$ assigning to every element
  $x \in FX$ the intersection of all subsets $m\colon M \subto X$ such
  that $x$ lies in the image of $Fm$:
  \begin{equation}\label{eq:tau}
    \tau_X(x) = \bigcap \{m \mid \text{$m\colon M \subto X$ satisfies $x \in Fm[FM]$}\}.
  \end{equation}
\end{notation}
\begin{defn}\label{D:can}
  Let $F$ be a set functor. For every coalgebra $\alpha\colon A\to FA$
  its \emph{canonical graph} is the following
  coalgebra for $\pow$:
  \[
    A \xrightarrow{\alpha} FA \xrightarrow{\tau_A} \pow A.
  \]
\end{defn}
\begin{examples}\label{E:tau}
  \begin{enumerate}
  \item Given a graph as a coalgebra $\alpha\colon A \to \pow A$, the condition
    $\alpha(x) \in \pow m [\pow M]$ states precisely that all
    successors of $x$ lie in the set $M$. The least such set is
    $\alpha(x)$. Therefore, the canonical graph of $(A,\alpha)$ is
    itself (see~\cite[Example~6.3.3]{taylor2}).

  \item For the type functor of $FX = \set{0,1} \times X^\Sigma$ of
    deterministic automata, we have
    \[
      \tau_X(i,t) = \set{t(s) : s \in \Sigma}
      \qquad \text{for $i = 0,1$ and $t\colon \Sigma \to X$}.      
    \]
    Thus, the canonical graph of a deterministic automaton $A$ is
    precisely its state transition graph (forgetting the labels
    of transitions and the finality of states), i.e.~we have an
    edge $(a,a')$ iff  $a' = \delta(a,s)$ for some $s \in \Sigma$,
    where $\delta$ is the nextstate function of $A$.

    Similarly, for the type functor
    $FX = \{0,1\} \times (\P X)^\Sigma$ of non-deterministic automata
    we have 
    \[
      \tau_X (i,g) = \bigcup\limits_{s \in \Sigma} t(s)
      \qquad \text{for $i = 0,1$ and $t\colon \Sigma \to \P X$}.
    \]

  \item For the functor $FX = \pow(\Sigma \times X)$ whose coalgebras are labeled
    transition systems we have
    \[
      \tau_X = (\pow(\Sigma \times X) \xrightarrow{\pow \pi_X} \pow
      X),
    \]
    where $\pi_X\colon \Sigma \times X \to X$ is the projection. Again,
    the canonical graph of a labelled transition system is its state
    transition graph. Thus $(a,a')$ is an edge iff some action leads
    from state $a$ to $a'$.
  \end{enumerate}
\end{examples}
Recall that a functor \emph{preserves intersections} if it preserves
(wide) pullbacks of families of monomorphisms.
Gumm~\cite[Theorem~7.3]{Gumm2005} observed that for a set functor
preserving intersections,  the maps $\tau_X\colon FX \to \pow X$
in~\eqref{eq:tau} form a ``subnatural'' transformation from $F$ to the
power-set functor $\pow$. Subnaturality means that (although these
maps do not form a natural transformation in general) for every
monomorphism $i\colon X \to Y$ we have a commutative square:
\begin{equation}\label{diag:subcar}
  \begin{tikzcd}
    FX
    \pullbackangle{-45}
    \arrow{r}{\tau_X}
    \arrow[>->]{d}[swap]{Fi}
    &
    \pow X
    \arrow[>->]{d}{\pow i}
    \\
    FY
    \arrow{r}{\tau_Y}
    &
    \pow Y
  \end{tikzcd}
\end{equation}
For many set functors this is even a pullback square:
\begin{notheorembrackets}%
\begin{theorem}[{\cite[Thm.~7.4]{Gumm2005} and \cite[Prop.~7.5]{taylor3}}]
  A set functor $F$ preserves intersections iff the 
  squares in~\eqref{diag:subcar} above are pullbacks. 
\end{theorem}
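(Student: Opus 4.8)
The plan is to reduce both directions to a single property of the maps $\tau_X$ and to analyse the canonical comparison map into the pullback. Fix a monomorphism $i\colon X \subto Y$, write $P = \set{(y,S) \in FY \times \pow X : \tau_Y(y) = i[S]}$ for the pullback of $\pow i$ along $\tau_Y$, and let $c\colon FX \to P$, $c(x) = (Fi(x), \tau_X(x))$, be the comparison map. The square \eqref{diag:subcar} commutes in both situations we consider (by Gumm's subnaturality when $F$ preserves intersections, and \emph{a fortiori} when the square is assumed to be a pullback), so $c$ is well defined. Since $F$ preserves monomorphisms, $Fi$ is monic and hence $c$ is injective; therefore \eqref{diag:subcar} is a pullback if and only if $c$ is surjective. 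The single fact I would isolate is
\[
  (\ast)\qquad Fi[FX] = \set{y \in FY : \tau_Y(y) \subseteq i[X]}\qquad\text{for every } i\colon X\subto Y,
\]
and I claim it is interderivable with surjectivity of $c$: given $(\ast)$ and $(y,S)\in P$ we have $\tau_Y(y)=i[S]\subseteq i[X]$, so some $x$ satisfies $Fi(x)=y$, and then $i[\tau_X(x)] = \tau_Y(Fi(x)) = i[S]$ forces $\tau_X(x)=S$ by injectivity of $i$; conversely surjectivity of $c$ together with commutativity yields $(\ast)$ (if $y=Fi(x)$ then $\tau_Y(y)=i[\tau_X(x)]\subseteq i[X]$, and if $\tau_Y(y)\subseteq i[X]$ then $(y,\,i^{-1}[\tau_Y(y)])\in P$ lifts along $c$). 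Thus the whole theorem reduces to proving $(\ast)$ under each hypothesis.

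For the direction ``$F$ preserves intersections $\Rightarrow$ \eqref{diag:subcar} is a pullback'', the key observation is that the family in \eqref{eq:tau} always contains $\id_X$, hence is nonempty, and that preservation of this (set-indexed, wide) intersection makes $\tau_X(x)$ itself a member of the family: $\tau_X(x)$ is the \emph{least} subset $M\subseteq X$ with $x\in Fm[FM]$. From this, $(\ast)$ is immediate --- if $y=Fi(x)$ then $i[X]$ lies in the family defining $\tau_Y(y)$, so $\tau_Y(y)\subseteq i[X]$; and if $\tau_Y(y)\subseteq i[X]$ then $y$ lies in the image of $F$ applied to $\tau_Y(y)\subto Y$, which factors through $i[X]\cong X$, so $y\in Fi[FX]$.

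For the converse I would first extract $(\ast)$ from the pullback hypothesis exactly as above, and then argue intersection preservation directly. Let $m_j\colon M_j\subto X$ ($j\in J$) have intersection $m\colon M\subto X$. The inclusion $Fm[FM]\subseteq\bigcap_j Fm_j[FM_j]$ is automatic. For the reverse inclusion take $x\in\bigcap_j Fm_j[FM_j]$; then each $M_j$ belongs to the family defining $\tau_X(x)$, so $\tau_X(x)\subseteq\bigcap_j M_j = M$, and applying $(\ast)$ to the monomorphism $m$ gives $x\in Fm[FM]$. Hence $F$ preserves the intersection.

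The step I expect to be the main obstacle --- and the one needing the most care --- is establishing $(\ast)$ cleanly in the presence of the usual pathologies of set functors at the empty set, in particular when $\tau_Y(y)=\emptyset$ (so that the argument ``$y$ factors through $F$ of its support'' must route through $F\emptyset$ and the behaviour of $F$ on the empty map). Handling this degenerate case correctly, together with checking that the wide intersection defining $\tau_X(x)$ is genuinely attained, is where the hypotheses on $F$ are really used; the remaining set-theoretic bookkeeping (injectivity of $i$, commutation of direct and inverse images with intersections) is routine.
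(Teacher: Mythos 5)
The paper does not actually prove this statement -- it is quoted from Gumm and Taylor, with only the commutativity of the square \eqref{diag:subcar} (Gumm's subnaturality, Theorem~7.3) recalled beforehand -- so your proposal can only be compared with the cited sources. As far as I can check, your argument is correct and is essentially the standard one: both implications reduce to the fact $(\ast)$ that $Fi[FX]=\set{y\in FY : \tau_Y(y)\subseteq i[X]}$, which in turn rests on the observation that preservation of the wide intersection in \eqref{eq:tau} makes $\tau_X(x)$ the \emph{least attained} member of that family. Three small points deserve to be made explicit. First, in the direction ``pullbacks $\Rightarrow$ preservation'' you pass from the image equality $Fm[FM]=\bigcap_j Fm_j[FM_j]$ to preservation of the wide pullback; this also needs $Fm$ and the $Fm_j$ to be injective, which here is automatic because each is exhibited as a pullback of the injective map $\pow m$ (so you need not even invoke the standing assumption that $F$ preserves monomorphisms). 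Second, in the direction ``preservation $\Rightarrow$ pullback'' you import commutativity of \eqref{diag:subcar} from Gumm's subnaturality rather than deriving it; since the paper records that as a separate prior fact this is legitimate, but a self-contained proof would note that the inclusion $i[\tau_X(x)]\supseteq\tau_Y(Fi(x))$ already uses injectivity of $Fi$ and attainment of $\tau_Y$, i.e.\ exactly the machinery you set up. Third, your worry about the degenerate case $\tau_Y(y)=\emptyset$ dissolves once one uses the paper's definition of ``preserves intersections'' as preservation of \emph{all} wide pullbacks of families of monomorphisms, including families with empty intersection: the attainment argument then applies verbatim with $F\emptyset$ in place of $FM$. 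Had the hypothesis been preservation of nonempty intersections only, that case would be a genuine obstruction, so your instinct about where the hypotheses are really used is sound.
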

\end{notheorembrackets}
\begin{notheorembrackets}%
\begin{theorem}[{\cite[Thm.~8.1]{Gumm2005} and \cite[Prop.~7.5]{taylor3}}]\label{LGumm}
  Let $F$ be a set functor which preserves inverse images and intersections. 
  Then $\tau\colon F\to \pow$ is a natural transformation.
\end{theorem}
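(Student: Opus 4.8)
The plan is to upgrade the subnaturality recorded in~\eqref{diag:subcar} to full naturality by combining the standard (epi, mono)-factorisation of functions in $\Set$ with preservation of inverse images. Throughout I rely on the key consequence of intersection preservation: for each $x \in FX$ the subset $\tau_X(x)$ is not merely the intersection of all $m\colon M \subto X$ with $x \in Fm[FM]$, but is itself the \emph{least} such subset; writing $t_x\colon \tau_X(x) \subto X$ for its inclusion, this means there is an element $\bar x \in F(\tau_X(x))$ with $Ft_x(\bar x) = x$. (Since $F$ preserves the intersection defining $\tau_X(x)$, the element $x$ lies in $\bigcap_i FM_i = F(\bigcap_i M_i)$ as a subobject of $FX$.) I shall call $\tau_X(x)$ the \emph{support} of $x$.

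First I would note that the naturality condition $\pow f \cdot \tau_X = \tau_Y \cdot Ff$ is closed under composition in $f$, so it suffices to verify it separately for monomorphisms and for epimorphisms and then factor an arbitrary $f$ as $f = m \cdot e$. For a monomorphism the square already commutes: this is exactly the subnaturality~\eqref{diag:subcar}, which holds because $F$ preserves intersections. Hence the whole problem reduces to proving naturality with respect to a surjection $e\colon X \epito Y$, i.e.\ $e[\tau_X(x)] = \tau_Y(Fe(x))$ for every $x \in FX$.

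For the inclusion $\tau_Y(Fe(x)) \subseteq e[\tau_X(x)]$ I would use the support directly: writing $M = \tau_X(x)$ and $x = Ft_x(\bar x)$, I factor $e \cdot t_x\colon M \to Y$ as $j \cdot q$ with $q\colon M \epito e[M]$ and $j\colon e[M] \subto Y$, so that $Fe(x) = Fj(Fq(\bar x)) \in Fj[F(e[M])]$; minimality of the support then yields $\tau_Y(Fe(x)) \subseteq e[M]$. This direction needs only intersection preservation.

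The hard part --- and the only place where preservation of inverse images enters --- is the reverse inclusion $e[\tau_X(x)] \subseteq \tau_Y(Fe(x))$. Here I would set $N = \tau_Y(Fe(x))$ with inclusion $n\colon N \subto Y$, pick $z \in FN$ with $Fn(z) = Fe(x)$, and form the pullback of $n$ along $e$, giving $P = e^{-1}[N]$ with $p\colon P \subto X$ and $e'\colon P \to N$. Because $F$ preserves this inverse image, applying $F$ yields a pullback square in $\Set$; as $Fe(x) = Fn(z)$, the pullback property produces $w \in FP$ with $Fp(w) = x$. Thus $x$ lies in $Fp[FP]$, so by minimality of the support $\tau_X(x) \subseteq P = e^{-1}[N]$, which is equivalent to $e[\tau_X(x)] \subseteq N = \tau_Y(Fe(x))$. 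Combining the two inclusions gives equality, i.e.\ naturality with respect to $e$, and hence the theorem. I expect this pullback step to be the crux: it is precisely where a functor that fails to preserve inverse images (such as the functor $R$ of Example~\ref{E:setfunctors}) would break naturality, since then one could not force $x$ back into $Fp[FP]$.
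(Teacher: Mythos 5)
Your argument is correct. The paper itself does not prove \autoref{LGumm} --- it is quoted from Gumm and Taylor with only a citation --- so there is no in-paper proof to compare against; but your reconstruction is a sound and essentially the standard one. The two pillars are exactly right: (i) preservation of intersections makes $\tau_X(x)$ a genuine \emph{least} support, i.e.\ $x \in Ft_x[F(\tau_X(x))]$, because $x$ lies in each $Fm[FM]$ and hence in their intersection, which $F$ identifies with $F\bigl(\bigcap M\bigr)$; and (ii) preservation of inverse images is used exactly once, to lift $x$ along $Fp\colon F(e^{-1}[N]) \to FX$ in the surjective case, which is indeed the step that fails for the functor $R$ of \autoref{E:setfunctors}\ref{E:setfunctors:2}. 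The reduction via the (surjection, injection) factorisation and closure of the naturality square under composition is unobjectionable, and both inclusions in the surjective case are argued correctly (note $e[e^{-1}[N]] \subseteq N$ suffices for the hard direction, so you do not even need surjectivity there).

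One small remark on self-containedness: for the injective case you invoke the subnaturality square~\eqref{diag:subcar}, which the paper also only cites (from Gumm's Theorem~7.3). That is a legitimate appeal to a stated fact, but if you want the proof to stand entirely on its own, the mono case yields to the very same technique you use elsewhere: for $i\colon X \monoto Y$ and $N = \tau_Y(Fi(x))$, intersect $N$ with the image $i[X]$; preservation of the (binary) intersection $N \cap i[X]$ places $Fi(x)$ in $F(N\cap i[X])$, and injectivity of $Fi$ (which follows from preservation of finite intersections) pulls the witness back to $FX$, giving $\tau_X(x) \subseteq i^{-1}[N]$; the converse inclusion is your minimality argument applied to $i \cdot t_x$. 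So no inverse images are needed for monomorphisms, consistent with the paper's remark that subnaturality already holds under intersection preservation alone.
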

\end{notheorembrackets}
\begin{expl} 
  To see that $\tau$ is not a natural transformation in general, we
  use the set functor $R$ from Example~\ref{E:setfunctors}, part~\ref{E:setfunctors:2}.
   Let
  $X = \set{0,1}$, $Y = \set{0}$, and $f\colon X\to Y$ the unique function.
  Then $(0,1)\in FX$, and $\tau_X(0,1) = X$.  Further,
  $\pow f(X) = Y$.  But $Rf(0,1) = d$, and $\tau_Y(d) = \emptyset$.
\end{expl}
\begin{lemma}\label{R:nexttime:2}
  For every set functor $F$ preserving intersections, the next time
  operator of a coalgebra $(A,\alpha)$ coincides with that of its
  canonical graph.
\end{lemma}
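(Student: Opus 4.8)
The plan is to compute both operators pointwise and reduce the desired equality to a single characterization of when an element lies in the image of $Fs$. Write $\nexttime$ for the next time operator of $(A,\alpha)$ and $\nexttime'$ for that of the canonical graph $\beta = \tau_A\o\alpha\colon A \to \pow A$. By \autoref{E-ptilde}, for a subobject $s\colon S \monoto A$ the successors of $x$ in the canonical graph are exactly the elements of $\tau_A(\alpha(x))$, so $\nexttime' S = \set{x \in A : \tau_A(\alpha(x)) \subseteq S}$. On the other hand, unfolding the pullback~\eqref{rdj} describing $\nexttime$, we have $\nexttime S = \set{x \in A : \alpha(x) \in Fs[FS]}$. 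Thus it suffices to prove, for every $t \in FA$, the equivalence
\[
  t \in Fs[FS] \quadiff \tau_A(t) \subseteq S,
\]
and then substitute $t = \alpha(x)$ and range over $x \in A$.

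The key fact I would isolate is that $\tau_A(t)$ is not merely the intersection of all $M \subto A$ with $t \in Fm[FM]$, but is itself a \emph{smallest} such subset, i.e.\ $t$ already lies in the image of $F$ of the inclusion $\tau_A(t)\subto A$. This is exactly where preservation of intersections enters, and I expect it to be the main (indeed only nontrivial) obstacle. Writing $\mathcal M = \set{m\colon M \subto A : t \in Fm[FM]}$ and $m_0\colon \tau_A(t) \subto A$ for the intersection $\bigcap\mathcal M$, each $m \in \mathcal M$ gives $t \in Fm[FM]$. Since $F$ preserves the wide pullback defining $\bigcap\mathcal M$, the subobject $Fm_0$ represents $\bigcap_{m\in\mathcal M} Fm[FM]$, and as $t$ lies in every $Fm[FM]$ it lies in $Fm_0[F(\tau_A(t))]$. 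Hence $m_0 \in \mathcal M$, so $t$ really does lie in the image of $F(\tau_A(t))$.

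Granting this, both directions of the displayed equivalence are routine. If $\tau_A(t) \subseteq S$, the inclusion factors as $\tau_A(t) \subto S \subto A$, and applying $F$ (which preserves monomorphisms) carries $t$, already in the image of $F(\tau_A(t))$, into $Fs[FS]$. Conversely, if $t \in Fs[FS]$ then $s$ is one of the subobjects in $\mathcal M$, so $\tau_A(t) = \bigcap\mathcal M \leq s$, i.e.\ $\tau_A(t)\subseteq S$, directly from the definition of $\tau_A$ as an intersection; notice this direction needs no preservation hypothesis. Combining the two directions yields $\nexttime S = \nexttime' S$ for every $s\colon S \monoto A$, which is the claim.
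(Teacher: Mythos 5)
Your proof is correct, but it takes a different route from the paper's. The paper's proof is a two-line diagram-pasting argument: it places the pullback square defining $\nexttime$ for $(A,\alpha)$ next to the square
\[
  \begin{tikzcd}
    FA' \arrow{r}{\tau_{A'}} \arrow[>->]{d}[swap]{Fm} & \pow A' \arrow[>->]{d}{\pow m}\\
    FA \arrow{r}{\tau_A} & \pow A
  \end{tikzcd}
\]
and invokes the previously stated theorem of Gumm and Taylor that, for intersection-preserving $F$, this subnaturality square is itself a pullback; pasting then shows the left square is a pullback iff the outer rectangle (which computes $\nexttime$ for the canonical graph) is. You instead work element-wise and prove the subobject-level equivalence $t \in Fs[FS] \iff \tau_A(t)\subseteq S$ directly, with the key step being that $\tau_A(t)$ is itself the \emph{least} subset $M$ with $t \in Fm[FM]$ --- which you derive from preservation of the wide pullback defining the intersection. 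That key step is essentially (the relevant half of) the Gumm--Taylor theorem, so you have reproved the ingredient the paper cites rather than citing it. What your version buys is self-containedness and a precise localization of where intersection preservation is used (and the observation that one direction is free); what the paper's version buys is brevity and a subobject-level argument that does not unfold pullbacks into elements. Both arguments are sound; the only points to watch in yours are the implicit identification of a general monomorphism $s\colon S\monoto A$ with an inclusion when placing it in the family $\mathcal M$, and the silent use of the fact that $\mathcal M$ is nonempty (it always contains $\id_A$) and that an intersection-preserving set functor preserves monomorphisms, so that $Fm_0$ indeed represents a subobject --- all of which are harmless in $\Set$.
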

\begin{proof}
  In the diagram below the outside is a pullback if and only if so is
  the left-hand square:
  \[
    \begin{tikzcd}
      \nexttime A'
      \arrow[dashed]{r}{\alpha(m)}
      \arrow[shiftarr={yshift=20pt}]{rr}{\tau_{A'} \cdot \alpha(m)}
      \arrow[>->]{d}[swap]{\nexttime m}
      \pullbackangle{-45}
      &
      FA'
      \arrow{d}[swap]{Fm}
      \arrow{r}{\tau_{A'}}
      \pullbackangle{-45}
      &
      \Pow A'
      \arrow{d}{\Pow m}
      \\
      A
      \arrow{r}{\alpha}
      &
      FA
      \arrow{r}{\tau_A}
      &
      \Pow A
    \end{tikzcd}
    \tag*{\qed}
  \]
\end{proof}

Taylor~\cite[Rem.~6.3.4]{taylor2} proved the following result for functors
preserving intersections and inverse images; the latter assumption is
not needed.
\begin{notheorembrackets}%
\begin{corollary}[\cite{taylor2}]\label{C:cangr}
  A coalgebra for a set functor preserving intersections is
  well-founded iff its canonical graph is well-founded.
\end{corollary}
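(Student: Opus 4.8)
The plan is to observe that well-foundedness, as formulated in \autoref{D:well-founded}, is a property that refers \emph{only} to the next time operator: a coalgebra $(A,\alpha)$ is well-founded precisely when $\id_A$ is the unique fixed point of its next time operator $\nexttime\colon \Sub(A)\to\Sub(A)$. Crucially, both $(A,\alpha)$ and its canonical graph $\tau_A\cdot\alpha\colon A\to\pow A$ are carried by the same object $A$, and hence induce endofunctions on one and the same complete lattice $\Sub(A)$. So the entire question reduces to comparing these two endofunctions.

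First I would invoke \autoref{R:nexttime:2}: since $F$ preserves intersections, the next time operator of $(A,\alpha)$ coincides with the next time operator of its canonical graph. Thus the two coalgebras do not merely share the ambient subobject lattice $\Sub(A)$; they literally induce the \emph{identical} monotone endofunction on it. Once this is in hand, the corollary is immediate: two equal functions $\Sub(A)\to\Sub(A)$ have exactly the same set of fixed points, so $\id_A$ is the unique fixed point of the one iff it is the unique fixed point of the other. By \autoref{D:well-founded} this says precisely that $(A,\alpha)$ is well-founded iff its canonical graph is, which is the claim.

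I expect no real obstacle here beyond \autoref{R:nexttime:2} itself, which has already been established. The only point deserving a moment's care is to make sure well-foundedness is taken in its operator-based form (\autoref{D:well-founded}) rather than, say, via cartesian subcoalgebras of the specific $F$-coalgebra structure; but \autoref{R:fixed} already shows these formulations agree, so the reduction is clean and the proof is essentially a one-line consequence of the preceding lemma.
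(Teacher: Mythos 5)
Your argument is exactly the paper's: the corollary is stated immediately after \autoref{R:nexttime:2} precisely because the two coalgebras share the lattice $\Sub(A)$ and, by that lemma, induce the same next time operator on it, so the operator-based definition of well-foundedness (\autoref{D:well-founded}) transfers verbatim. Correct and same approach.
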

\end{notheorembrackets}
\begin{examples}\label{E:cangr}
  \begin{enumerate}
  \item A coalgebra for the identity functor $FX = X$ on $\Set$ is a
    set $A$ equipped with a function $\alpha\colon A \to A$. The
    canonical graph of $(A,\alpha)$ is the graph of the function
    $\alpha$, i.e.~the graph with edges $(a,\alpha(a))$ for all
    $a \in A$. Hence, $(A,\alpha)$ is well-founded iff it is empty
    (see \autoref{E-well-founded}\ref{E-well-founded:2}).

  \item For $FX = X + 1$ coalgebras are sets $A$ equipped
    with a \emph{partial} function $\alpha\colon A \to A$, and the
    canonical graph is the graph of $\alpha$. This functor has many
    nonempty well-founded coalgebras. For example, the initial
    $F$-algebra, considered as the coalgebra on $\Nat$ with the
    structure given by the partial function $n \mapsto n-1$, for $n>0$
    (cf.~\autoref{E:prec}\ref{E:prec:1}), is well-founded since its
    canonical graph is so.

  \item For a (deterministic or non-deterministic) automaton, the
    canonical graph has an edge from $s$ to $t$ iff there is a
    transition from $s$ to $t$ for some input letter. Thus, we obtain
    the characterization of well-foundedness as stated in
    \autoref{E-well-founded}\ref{E-well-founded:2} and~\ref{E-well-founded:3}.

    
  \item\label{E:cangr:3} Consider the functor $FX = X \times X + 1$
    and a coalgebra $\alpha\colon A \to A \times A  + 1$.  The edges in
    its canonical graph are all of the pairs $(a,a_1)$ and $(a,a_2)$
    such that $a \in A$ and $\alpha(a) = (a_1,a_2)$. For example, the
    coalgebra $(\Nat,\gamma)$ from \autoref{E:prec}\ref{E:prec:2} has
    the canonical graph with edge set
    $\set{(n,n-1),(n,n-2) : n \geq 2}$, which is clearly well-founded,
    and therefore so is the coalgebra.

    Similarly, for the functor $FX = 1 + A \times X \times X$, the
    coalgebra $(A^*,\qsplit)$ in \autoref{E:prec}\ref{E:prec:3} is easily
    seen to be well-founded via its canonical graph. Indeed, this
    graph has for every list $w$ one outgoing edge to the list
    $w_{\leq a}$ and one to $w_{> a}$ for every $a \in A$. Hence, this
    is a well-founded graph.

  \item More generally, for a polynomial functor $H_\Sigma\colon \Set
    \to \Set$ associated to a finitary signature $\Sigma$, a
    coalgebra $\alpha\colon A \to \coprod_{n \in \Nat} \Sigma_n \times
    A^n$ has the canonical graph where every
    vertex $a \in A$ has an outgoing edge $(a,a')$ for every $a' \in A$
    occurring in the tuple $\alpha(a) \in \Sigma_n \times A^n$ for some
    $n <\omega$.

    Thus, the coalgebra $(A,\alpha)$ is well-founded iff for every
    $a \in A$ its tree-unfolding, i.e., its image under the unique
    homomorphism $h\colon A \to \nu F$, is a finite $\Sigma$-tree. In
    particular, if the signature $\Sigma$ does not contain any
    constant symbols, then the only well-founded $H_\Sigma$-coalgebra
    is $A = \emptyset$.
  \end{enumerate}
\end{examples}

For further use we now compare well-founded and recursive coalgebras
for a given set functor $F$ with those of its Trnkov\'a hull
$\bar F$ (see~\autoref{P:Tr}). Since empty coalgebras are (trivially)
well-founded and recursive, we can restrict ourselves to the
nonempty ones. Observe that $\Coalg F$ and $\Coalg \bar F$ have the
same nonempty objects, and these categories are isomorphic.
\begin{lemma}\label{L:Trn}
  Let $(A,\alpha)$ be a nonempty coalgebra for a set functor $F$. If
  it is well-founded or (parametrically) recursive, then it also has
  those properties as a coalgebra for the Trnkov\'a hull $\bar F$.
\end{lemma}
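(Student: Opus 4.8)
The plan rests on the single fact, recorded in \autoref{P:Tr}, that $\bar F$ agrees with $F$ on all nonempty sets and all functions between them; thus for the nonempty carrier $A$ the pair $(A,\alpha)$ is literally the same $\bar F$-coalgebra, and every possible discrepancy between the two settings is concentrated at the empty set. To control that discrepancy I would use the canonical comparison $\iota\colon F\to\bar F$ whose component at a nonempty set is the identity and whose component at $\emptyset$, using Trnková's description of $\bar F\emptyset$ as the set of natural transformations $C_{01}\to F$ (\autoref{R:inter}), sends $z\in F\emptyset$ to the family with components $F(e_X)(z)$, where $e_X\colon\emptyset\to X$. Each such family is natural because $f\cdot e_X=e_Y$ for every $f\colon X\to Y$, so $\iota$ is a well-defined natural transformation with $\iota_X=\id$ for all $X\neq\emptyset$. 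Naturality at $\bot_A\colon\emptyset\to A$, together with $\iota_A=\id$ (as $A\neq\emptyset$), then gives $\bar F(\bot_A)\cdot\iota_\emptyset=F(\bot_A)$, so the image of $F(\bot_A)$ lies inside that of $\bar F(\bot_A)$ in $FA=\bar F A$. I would then treat the two properties separately.

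For well-foundedness I first note that the next time operators $\nexttime_F$ and $\nexttime_{\bar F}$ on $\Sub(A)$ coincide on every nonempty subobject $s\colon S\monoto A$, since there $Fs=\bar F s$ and $\alpha$ is unchanged, so the defining pullbacks agree; and at the bottom the image inclusion above yields $\nexttime_F(\bot_A)\le\nexttime_{\bar F}(\bot_A)$. Now assume $(A,\alpha)$ is well-founded for $F$ and let $m$ be a fixed point of $\nexttime_{\bar F}$. If $m$ is nonempty it is also a fixed point of $\nexttime_F$, hence $m=\id_A$. If $m=\bot_A$, then $\nexttime_F(\bot_A)\le\nexttime_{\bar F}(\bot_A)=\bot_A$ makes $\bot_A$ a fixed point of $\nexttime_F$, which since $A\neq\emptyset$ contradicts well-foundedness. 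So $\id_A$ is the only fixed point of $\nexttime_{\bar F}$, i.e.\ $(A,\alpha)$ is $\bar F$-well-founded. (Equivalently, via \autoref{R:fixed}\ref{R:fixed:2}, the pointwise inequality $\nexttime_F\le\nexttime_{\bar F}$ forces the least fixed point of $\nexttime_{\bar F}$ to dominate that of $\nexttime_F$, namely $\id_A$.)

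For (parametric) recursiveness I would first show $F\emptyset\neq\emptyset$. If $F\emptyset=\emptyset$, then $\emptyset$ carries an $F$-algebra, and recursiveness (\autoref{def:recoalg}) would require a morphism $A\to\emptyset$, impossible since $A\neq\emptyset$; in the parametric case the datum $e\colon F\emptyset\times A\to\emptyset$ exists for the same reason and again demands $e^\dag\colon A\to\emptyset$. Applying $\iota_\emptyset$ then gives $\bar F\emptyset\neq\emptyset$, so there is no $\bar F$-algebra with empty carrier, and no parametric datum with $X=\emptyset$ (as $\bar F\emptyset\times A\neq\emptyset$ admits no map to $\emptyset$). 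Every remaining $\bar F$-algebra has a nonempty carrier $X$, where $\bar F X=FX$ and $\bar F h=Fh$ for $h\colon A\to X$, so the coalgebra-to-algebra square for $\bar F$ is identical to the one for $F$. Hence the unique morphisms supplied by $F$-(parametric-)recursiveness are precisely the required ones, and $(A,\alpha)$ is $\bar F$-(parametrically-)recursive.

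The only place where genuine work is needed is the empty set: $\bar F\emptyset$ may be strictly larger than $F\emptyset$, so a priori $\nexttime_{\bar F}$ could gain $\bot_A$ as a new fixed point and $\Alg\bar F$ could differ from $\Alg F$ on empty-carrier algebras. Both problems are dispatched by the comparison $\iota$ together with the hypotheses: $\iota$ provides the one-sided inequality $\nexttime_F\le\nexttime_{\bar F}$ needed at $\bot_A$, while well-foundedness (it rules out $F\emptyset=\emptyset$ by \autoref{E-well-founded}\ref{E-well-founded:2}) and recursiveness each force $F\emptyset\neq\emptyset$ and hence $\bar F\emptyset\neq\emptyset$, removing the empty carrier from consideration. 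I expect the one genuinely technical point to be the verification that $\iota$ is well-defined and natural — that every $z\in F\emptyset$ really determines a natural transformation $C_{01}\to F$ — but this is short.
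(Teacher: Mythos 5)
Your proof is correct and takes essentially the same route as the paper's: both arguments reduce the whole question to the empty set, use Trnkov\'a's description of $\bar F\emptyset$ as the natural transformations $C_{01}\to F$ to transport elements of $F\emptyset$ into $\bar F\emptyset$ compatibly with the empty maps (thereby showing $\bot_A$ cannot become a new fixed point of the next time operator, and that $\bar F\emptyset\neq\emptyset$ rules out empty-carrier algebras), and then observe that everything at nonempty (sub)objects is literally unchanged. Your comparison $\iota\colon F\to\bar F$ is just a packaged form of the paper's ad hoc construction $\tau_X = Fr_X(x)$ from a witness $x\in F\emptyset$, so the two proofs differ only in presentation.
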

\begin{proof}
  \begin{enumerate}
  \item Let $(A,\alpha)$ be well-founded for $F$. Nonempty
    subcoalgebras of $(A,\alpha)$ for $F$ and for $\bar F$
    coincide. Thus, we only need to show that the left-hand square
    below, where $r_X\colon \emptyset \to X$ denotes the empty map, is
    not a pullback:
    \[
      \begin{tikzcd}
        \emptyset
        \arrow{r}{r_{\bar F \emptyset}}
        \arrow{d}[swap]{r_A}
        &
        \bar F \emptyset
        \arrow{d}{\bar F r_A}
        \\
        A \arrow{r}{\alpha}
        &
        \bar FA = FA
      \end{tikzcd}
      \qquad
      \begin{tikzcd}
        \emptyset
        \arrow{r}{r_{F \emptyset}}
        \arrow{d}[swap]{r_A}
        &
        F \emptyset
        \arrow{d}{F r_A}
        \\
        A \arrow{r}{\alpha}
        &
        FA
      \end{tikzcd}
    \]
    Since $(A,\alpha)$ is well-founded, we know that the right-hand
    square is not a pullback. Thus, there exist $a \in A$ and
    $x \in F\emptyset$ with $\alpha(a) = Fr_A(x)$. For the functor
    $C_{01}$ of \autoref{R:inter},  define a natural transformation
    $\tau\colon C_{01} \to F$ by $\tau_X = Fr_X(x) \in FX$. Then
    $\alpha(a) = \tau_A$, and we know that $\tau$ lies in
    $\bar F \emptyset$ so that $\tau_A = \bar F
    r_A(\tau)$. Consequently, we have $\alpha(a) = \bar Fr_A(\tau)$,
    which proves that the left-hand square above is not a pullback.
    
  \item Let $(A,\alpha)$ be a nonempty recursive coalgebra for
    $F$. Given an algebra $e\colon \bar F X \to X$ we know that
    $X \neq \emptyset$, for otherwise the existence of a unique
    coalgebra-to-algebra morphism $A \to X$ would force $A$ to be
    empty. But then the unique coalgebra-to-algebra morphism from
    $(A,\alpha)$ to $(X,e)$ w.r.t.~$F$ is also one for $\bar F$.\qed
  \end{enumerate}
\end{proof}

We now collect a few properties of the next time operator
we will need in the following. 
\begin{lemma}\label{R:nexttime}
  The next time operator is monotone: if $m \leq n$,
  then $\nexttime m\leq \nexttime n$.
\end{lemma}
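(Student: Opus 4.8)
The plan is to unwind the definition $\nexttime(s) = \pback{\alpha}(Fs)$ and to factor the desired monotonicity through two elementary observations: that $F$ acts monotonically on subobjects, and that forming inverse images is monotone.

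First I would spell out what $m \leq n$ means concretely for subobjects $m\colon M \monoto A$ and $n\colon N \monoto A$: there is a (necessarily monic) morphism $i$ with $n \o i = m$, witnessing that $m$ factors through $n$. Applying $F$ gives $Fn \o Fi = Fm$. Here I use that $F$ preserves monomorphisms (our standing \autoref{A:wfd}), so that $Fm$ and $Fn$ genuinely represent subobjects of $FA$; the equation then yields $Fm \leq Fn$ in $\Sub(FA)$. In other words, $F$ induces a monotone map $\Sub(A) \to \Sub(FA)$.

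Second I would invoke \autoref{P:subadjs} applied to the morphism $\alpha\colon A \to FA$: the inverse image operator $\pback{\alpha}\colon \Sub(FA) \to \Sub(A)$ is a right adjoint (to the direct image $\pbackleft{\alpha}$), and a right adjoint between posets is in particular monotone. Combining the two steps, $m \leq n$ gives $Fm \leq Fn$, whence $\nexttime m = \pback{\alpha}(Fm) \leq \pback{\alpha}(Fn) = \nexttime n$, as required. There is no genuine obstacle in this argument; the only point worth flagging is the appeal to preservation of monomorphisms, which is needed solely to guarantee that $Fm$ and $Fn$ are monic, so that the inequality $Fm \leq Fn$ is meaningful in the subobject lattice.
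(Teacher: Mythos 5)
Your proof is correct and is essentially the paper's own argument: both reduce to the universal property of the pullback defining $\nexttime n$, applied to the cone arising from $Fm = Fn \o Fx$ where $x$ witnesses $m \leq n$. The only cosmetic difference is that you obtain monotonicity of $\pback{\alpha}$ by citing the adjunction of \autoref{P:subadjs} and compose it with the monotone action of $F$ on subobjects, whereas the paper carries out the corresponding diagram chase directly in a single diagram; your explicit remark that preservation of monomorphisms is what makes $Fm \leq Fn$ meaningful in $\Sub(FA)$ is a point the paper leaves implicit.
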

\begin{proof}
  Suppose that $m\colon A' \monoto A$ and $n\colon A'' \monoto A$ are
  subobjects such that $m \leq n$, i.e.~$n \cdot x = m$ for some
  $x\colon A' \monoto A''$. Then we obtain the dashed arrow in the
  diagram below using that its lower square is a pullback:
  \[
    \begin{tikzcd}
      \nexttime A'
      \arrow[dashed]{d}
      \arrow{r}{\alpha(m)}
      \arrow[>->,shiftarr={xshift=-10mm}]{dd}[swap]{\nexttime m}
      &
      FA'
      \arrow{d}{Fx}
      \arrow[>->,shiftarr={xshift=10mm}]{dd}{Fm}
      \\
      \nexttime A''
      \arrow{r}{\alpha(n)}
      \arrow[>->]{d}[swap]{\nexttime n}
      \pullbackangle{-45}
      &
      FA''
      \arrow[>->]{d}{Fn}
      \\
      A
      \arrow{r}{\alpha}
      &
      FA
    \end{tikzcd}
  \]
  This shows that  $\nexttime m\leq \nexttime n$. \qed
\end{proof}
The following lemma will be useful when we establish the universal
property of the well-founded part of a coalgebra in the next section. 
\begin{lemma}\label{L:amb}
  For every coalgebra homomorphism $f\colon (B,\beta) \to (A,\alpha)$
  we have
  \[
    \nexttime_\beta\o \pback f \leq \pback f \o \nexttime_\alpha,
  \]
  where $\nexttime_\alpha$ and $\nexttime_\beta$ denote the next time
  operators of the coalgebras $(A,\alpha)$ and $(B,\beta)$,
  respectively, and $\leq$ is the pointwise order.
\end{lemma}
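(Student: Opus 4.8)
The plan is to unfold both sides and reduce the claim to a single inequality between two subobjects of $FB$, after which monotonicity of a pullback operator finishes the argument. Since the asserted inequality of operators is pointwise, I fix a subobject $s\colon S \monoto A$ and compute each side. On the right, the definition of the next time operator gives $\nexttime_\alpha(s) = \pback\alpha(Fs)$, and the pasting lemma for pullbacks yields $\pback f\o\pback\alpha = \pback{(\alpha\o f)}$; hence
\[
  \pback f\big(\nexttime_\alpha(s)\big) = \pback{(\alpha\o f)}(Fs).
\]
Because $f$ is a coalgebra homomorphism, $\alpha\o f = Ff\o\beta$, so applying the pasting lemma again gives $\pback{(\alpha\o f)} = \pback\beta\o\pback{Ff}$ and therefore
\[
  \pback f\big(\nexttime_\alpha(s)\big) = \pback\beta\big(\pback{Ff}(Fs)\big).
\]
On the left, the definition gives $\nexttime_\beta(\pback f(s)) = \pback\beta\big(F(\pback f(s))\big)$. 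Thus it remains to show $F(\pback f(s)) \leq \pback{Ff}(Fs)$ in $\Sub(FB)$ and then apply the monotone operator $\pback\beta$.

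For that key inequality I would start from the pullback square defining $\pback f(s)$, with top-left corner $T$, monic left leg $t\colon T \monoto B$ over $f$, and right leg $s$. Applying $F$ produces a square with legs $Ft$ and $Fs$ over $Ff$ which commutes by functoriality (though it need not be a pullback). Since $\pback{Ff}(Fs)$ is by definition the pullback of $Fs$ along $Ff$, this commuting square induces a unique mediating morphism from $FT$ into that pullback, and composing it with the pullback projection recovers $Ft$. As $F$ preserves monomorphisms, $Ft$ is monic, so this mediating morphism exhibits $F(\pback f(s)) = Ft \leq \pback{Ff}(Fs)$. Monotonicity of $\pback\beta$, which is a right adjoint by \autoref{P:subadjs} and hence order-preserving, then yields $\nexttime_\beta(\pback f(s)) \leq \pback f(\nexttime_\alpha(s))$, as desired.

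The one genuinely delicate point---and the reason the statement is an inequality rather than an equality---is that $F$ is not assumed to preserve inverse images, so applying $F$ to the defining pullback yields only a commutative square. We therefore obtain a comparison morphism into $\pback{Ff}(Fs)$ instead of an isomorphism. Were $F$ to preserve inverse images, that square would be cartesian, $F(\pback f(s))$ and $\pback{Ff}(Fs)$ would coincide, and the inequality would sharpen to equality; this is exactly the extra hypothesis one expects to invoke for the reverse comparison.
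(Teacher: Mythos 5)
Your proof is correct and is essentially the paper's own argument in a slightly different packaging: the paper pastes the two pullback squares for $\pback f(\nexttime_\alpha s)$ and obtains the comparison morphism by chasing one large diagram, whereas you decompose via $\pback f\o\pback\alpha = \pback\beta\o\pback{Ff}$ and isolate the key step $F(\pback f(s)) \leq \pback{Ff}(Fs)$ before applying the monotone operator $\pback\beta$. Both arguments rest on exactly the same ingredients (the homomorphism identity $\alpha\o f = Ff\o\beta$, pasting of pullbacks, and the fact that $F$ applied to a pullback square yields only a commuting square with monic legs), and your closing remark correctly identifies preservation of inverse images as what upgrades the inequality to the equality of \autoref{C:pback}.
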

\begin{proof}
  Let $s\colon S \monoto A$ be a subobject. We see that
  $\pback f (\nexttime_\alpha s)$ is obtained by pasting two pullback
  squares as shown below:
  \begin{equation}\label{wfpbs}
    \begin{tikzcd}
      T
      \arrow{r}{t}
      \arrow[>->]{d}[swap]{\pback f(\nexttime_\alpha s)}
      \pullbackangle{-45}
      &
      \nexttime_\alpha S
      \arrow{r}{\alpha(s)}
      \arrow[>->]{d}[swap]{\nexttime_\alpha s}
      \pullbackangle{-45}
      &
      FS
      \arrow[>->]{d}{Fs}
      \\
      B \arrow{r}{f}
      &
      A
      \arrow{r}{\alpha}
      &
      FA
    \end{tikzcd}
  \end{equation}
  In order to show that $\nexttime_\beta(\pback f(s)) \leq  \pback f
  (\nexttime_\alpha s)$, we consider the following diagram:
  \begin{equation}\label{diag:amb}
    \begin{tikzcd}[column sep = 15mm]
      \nexttime_\beta U
      \arrow{r}{\beta(\pback f (s))}
      \arrow[>->]{dd}[swap]{\nexttime_\beta (\pback f (s))}
      \pullbackangle{-45}
      &
      FU
      \arrow[>->]{d}[swap]{F(\pback f (s))}
      \arrow{r}{Fu}
      &
      FS
      \arrow[>->]{dd}{Fs}
      \\
      &
      FB
      \arrow{rd}{Ff}
      \\
      B 
      \arrow{r}{f}
      \arrow{ru}{\beta}
      &
      A
      \arrow{r}{\alpha}
      &
      FA
    \end{tikzcd}
  \end{equation}
  The upper left-hand part is the pullback square defining
  $\nexttime_\beta(\pback f(s))$, and the upper right-hand one is that
  defining $\pback f (s)$, with $F$ applied.  On the bottom, we use
  that $f$ is a coalgebra homomorphism. Thus, the outside of the diagram
  commutes. Since the outside of the diagram in \eqref{wfpbs} is a
  pullback, we have some $g\colon \nexttime_\beta U \to T$ such 
  that $\nexttime_\beta(\pback f(s)) = \pback f(\nexttime_\alpha(s))
  \cdot g$, which proves the desired inequality. \qed
\end{proof}
\begin{corollary}\label{C:pback}
  For every coalgebra homomorphism $f\colon (B, \beta) \to
  (A,\alpha)$ we have $\nexttime_\beta \cdot \pback f = \pback f \cdot
  \nexttime_\alpha$ provided that either
  \begin{enumerate}
  \item\label{C:pback:1} $f$ is a monomorphism in
    $\A$ and $F$ preserves finite intersections, or
  \item\label{C:pback:2} $F$ preserves inverse images.
  \end{enumerate}
\end{corollary}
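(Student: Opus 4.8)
The plan is to prove the identity of operators pointwise: I fix a subobject $s\colon S\monoto A$ and show that $\nexttime_\beta(\pback f(s))$ and $\pback f(\nexttime_\alpha(s))$ represent one and the same subobject of $B$. \autoref{L:amb} already supplies the inequality $\nexttime_\beta(\pback f(s))\leq\pback f(\nexttime_\alpha(s))$, so strictly speaking only the reverse inequality is missing; nevertheless the argument below yields the full equality in one stroke, so I would present it directly and merely remark that it reproves \autoref{L:amb} in the present situation.

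The crucial point I would isolate first is that, under either hypothesis, $F$ carries the defining pullback of the inverse image $\pback f(s)$ to a pullback. Writing $u\colon U\to S$ and $\pback f(s)\colon U\monoto B$ for the projections of the pullback of $s$ along $f$, the claim is that
\[
\begin{tikzcd}
FU
\pullbackangle{-45}
\arrow{r}{Fu}
\arrow[>->]{d}[swap]{F(\pback f(s))}
&
FS
\arrow[>->]{d}{Fs}
\\
FB
\arrow{r}{Ff}
&
FA
\end{tikzcd}
\]
is again a pullback. In case~\ref{C:pback:2} this is immediate, since the square is an inverse image of the monomorphism $s$ along $f$ and $F$ preserves inverse images. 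In case~\ref{C:pback:1} both $f$ and $s$ are monomorphisms, so their pullback is the binary intersection $f\wedge s$ in $\Sub(A)$; as $F$ preserves finite intersections and monomorphisms, it sends this to the intersection of the monomorphisms $Ff$ and $Fs$, which is exactly the pullback above.

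With this in hand I would paste pullbacks. Placing the square that defines $\nexttime_\beta(\pback f(s))$ (the inverse image of $F(\pback f(s))$ along $\beta$) to the left of the square just established gives
\[
\begin{tikzcd}
\nexttime_\beta U
\pullbackangle{-45}
\arrow{r}
\arrow[>->]{d}[swap]{\nexttime_\beta(\pback f(s))}
&
FU
\pullbackangle{-45}
\arrow{r}{Fu}
\arrow[>->]{d}{F(\pback f(s))}
&
FS
\arrow[>->]{d}{Fs}
\\
B
\arrow{r}{\beta}
&
FB
\arrow{r}{Ff}
&
FA
\end{tikzcd}
\]
Both inner squares are pullbacks, hence so is the outer rectangle, which therefore exhibits $\nexttime_\beta(\pback f(s))$ as the pullback of $Fs$ along $Ff\o\beta$. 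Since $f$ is a coalgebra homomorphism, $Ff\o\beta=\alpha\o f$, so this is the pullback of $Fs$ along $\alpha\o f$. But \eqref{wfpbs} displays $\pback f(\nexttime_\alpha(s))$ as precisely the same pullback, obtained by pasting the square defining $\nexttime_\alpha(s)$ with the pullback along $f$. As both subobjects are the projection to $B$ of the (essentially unique) pullback of $Fs$ along $\alpha\o f$, they agree in $\Sub(B)$; letting $s$ range over all subobjects of $A$ then gives $\nexttime_\beta\o\pback f=\pback f\o\nexttime_\alpha$.

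The only genuine obstacle is the preservation step of the second paragraph; everything else is the pasting lemma together with the homomorphism identity $Ff\o\beta=\alpha\o f$. In case~\ref{C:pback:1} one must take a little care to read the inverse image as a binary intersection so that ``preserves finite intersections'' is applicable, whereas in case~\ref{C:pback:2} the hypothesis is tailored to the square at hand and the step is automatic.
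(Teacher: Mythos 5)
Your proposal is correct and follows essentially the same route as the paper: both arguments show that, under either hypothesis, $F$ carries the pullback defining $\pback f(s)$ to a pullback (in case~\ref{C:pback:1} by reading it as a finite intersection of the monomorphisms $f$ and $s$), then paste it with the pullback defining $\nexttime_\beta(\pback f(s))$ and use the homomorphism identity $Ff\o\beta=\alpha\o f$ to identify the result with the pullback in~\eqref{wfpbs} defining $\pback f(\nexttime_\alpha s)$. No gaps.
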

\begin{proof}
  Indeed, under either of the above conditions, the upper right-hand part in
  Diagram~\eqref{diag:amb} is a pullback. Thus, pasting this part with
  the pullback in the upper left of~\eqref{diag:amb} and using that the
  lower part commutes, we see that $\nexttime_\beta(\pback f(s))$ is
  obtained by pulling back $Fs$ along $f \cdot \alpha$. This implies
  the desired equality since this is how $\pback f (\nexttime_\alpha
  s)$ is obtained (see~\eqref{wfpbs}).\qed
\end{proof}

\begin{lemma}\label{L:next}
  Let $\alpha\colon A\to FA$ be a coalgebra and $m\colon B \monoto A$ be a
  monomorphism.
  \begin{enumerate}
  \item\label{L:next:1}  There is a coalgebra structure 
    $\beta\colon B\to FB$ for  which $m$ gives a subcoalgebra of $(A,\alpha)$ iff
    $m \leq \nexttime m$.
  \item\label{L:next:2} There is a coalgebra structure $\beta\colon B\to FB$ for
    which $m$ gives a cartesian subcoalgebra of $(A,\alpha)$ iff
    $m = \nexttime m$.
  \end{enumerate}  
\end{lemma}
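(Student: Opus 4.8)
The plan is to read off both equivalences directly from the universal property of the pullback that defines $\nexttime m$. Recall from \autoref{D-tilde} that $\nexttime m$ is the inverse image of $Fm$ along $\alpha$; that is, $\nexttime B$ together with its two legs $\nexttime m\colon \nexttime B \monoto A$ and $\alpha(m)\colon \nexttime B \to FB$ is the pullback of the cospan $A \xrightarrow{\alpha} FA \xleftarrow{Fm} FB$ (note $Fm$ is monic, since $F$ preserves monomorphisms). The fact to exploit is that a coalgebra structure $\beta\colon B \to FB$ turning $m$ into a homomorphism is exactly a morphism making the square with top $\beta$, left $m$, bottom $\alpha$ and right $Fm$ commute, i.e.\ $Fm \cdot \beta = \alpha \cdot m$; this is precisely a cone over the above cospan with apex $B$ and fixed $A$-leg $m$.

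For part~\ref{L:next:1} I would argue both directions through this cone. If such a $\beta$ exists, then $(m,\beta)$ is a cone, so the universal property yields a unique $u\colon B \to \nexttime B$ with $\nexttime m \cdot u = m$ and $\alpha(m)\cdot u = \beta$; the first equation says exactly $m \leq \nexttime m$. Conversely, if $m \leq \nexttime m$, take the factorization $u\colon B \to \nexttime B$ with $\nexttime m \cdot u = m$ and set $\beta := \alpha(m)\cdot u$. Then $Fm \cdot \beta = Fm \cdot \alpha(m) \cdot u = \alpha \cdot \nexttime m \cdot u = \alpha \cdot m$, using the commutativity of the defining pullback square in the middle step; hence $m$ is a subcoalgebra with structure $\beta$.

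For part~\ref{L:next:2} the key observation is that cartesianness is exactly the assertion that the comparison map $u$ above is an isomorphism. If $\beta$ makes $m$ a cartesian subcoalgebra, then the square with apex $B$ is itself a pullback of the same cospan as $\nexttime B$; by uniqueness of pullbacks the induced $u\colon B \to \nexttime B$ with $\nexttime m \cdot u = m$ is an isomorphism, so $m$ and $\nexttime m$ represent the same subobject, $m = \nexttime m$. Conversely, $m = \nexttime m$ means the factorization $u$ of part~\ref{L:next:1} is invertible: from $m \leq \nexttime m$ and $\nexttime m \leq m$ one obtains mutually inverse maps, using that $m$ and $\nexttime m$ are monic. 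Taking $\beta := \alpha(m)\cdot u$ as before makes $m$ a subcoalgebra, and its square is the defining pullback square of $\nexttime m$ precomposed with the isomorphism $u$ on the apex, hence again a pullback; thus $m$ is cartesian.

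There is no deep obstacle here: the entire argument is bookkeeping around a single pullback. The one place to be careful is the backward direction of part~\ref{L:next:2}, where I must check that the equality $m = \nexttime m$ in $\Sub(A)$ genuinely forces the comparison morphism $u$ to be an isomorphism (not merely that some iso between $B$ and $\nexttime B$ exists), and that precomposing a pullback square with an isomorphism on its apex again yields a pullback. Both are routine, but they are precisely what makes the equality ``$=$'' (rather than the inequality ``$\leq$'') correspond exactly to cartesianness.
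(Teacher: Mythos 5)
Your proposal is correct and follows essentially the same route as the paper's proof: both directions are read off from the universal property of the pullback defining $\nexttime m$, with the comparison morphism $u\colon B \to \nexttime B$ witnessing $m \leq \nexttime m$, and cartesianness corresponding exactly to $u$ being an isomorphism. The care you take in the backward direction of part~(2) — checking that $m = \nexttime m$ forces the specific comparison map $u$ to be invertible via the monicity of $m$ and $\nexttime m$ — is exactly the point the paper passes over with ``i.e.~$x$ is an isomorphism,'' so nothing is missing.
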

\begin{proof}
  We prove the left-to-right directions of both assertions first, and
  then the right-to-left ones.

  Suppose first that there exists $\beta\colon B \to FB$ such that
  $m\colon (B,\beta)\monoto (A,\alpha)$ is a coalgebra morphism.  Then
  the fact that $\nexttime B$ is given by a pullback yields a morphism
  $x\colon B \monoto \nexttime B$ such that, inter alia,
  $\nexttime m\o x = m$.  It follows that $m \leq \nexttime m$.  If
  $(B,\beta)$ is a cartesian subcoalgebra, then we have a pullback
  square
  \[
    \begin{tikzcd}
      B
      \arrow{r}{\beta}
      \arrow[>->]{d}[swap]{m}
      \pullbackangle{-45}
      &
      FB
      \arrow{d}{Fm}
      \\
      A \arrow{r}{\alpha}
      &
      FA
    \end{tikzcd}
  \]
  So clearly  $m = \nexttime m$ in $\Sub(A)$.

  Conversely, suppose that $m \leq \nexttime m$ via $x\colon B
  \monoto \nexttime B$.
  Then $\alpha(m)\o x\colon B\to FB$ is a coalgebra,
  and $m\colon B \monoto A$ is a homomorphism:
  \[
    \begin{tikzcd}[row sep = 25pt]
      B
      \arrow{r}{x}\arrow[>->]{rd}[swap]{m}
      &
      \nexttime B
      \arrow[>->]{d}{\nexttime m}
      \arrow{r}{\alpha(m)}
      \pullbackangle{-45}
      &
      FB
      \arrow{d}{Fm}
      \\
      &
      A\arrow{r}{\alpha}
      &
      FA
    \end{tikzcd}
  \]
  If in addition $m = \nexttime m$, i.e.~$x$ is an isomorphism, we see
  that $m$ is a cartesian subcoalgebra. \qed
\end{proof}

We close this section with a characterization result: $F$ preserves
intersections if and only if the following ``generalized next time''
operators are right adjoints. Given a morphism $f\colon A \to FB$, we
have the operator $\nexttime_f\colon \Sub(B) \to \Sub(A)$ that maps
$m\colon B' \monoto B$ to the pullback of $Fm$ along $f$:
\[
  \begin{tikzcd}    
    \nexttime_f A' \arrow[>->]{d}[swap]{\nexttime_f m} \arrow{r}{f(m)}
    \pullbackangle{-45}
    &
    FB' \arrow[>->]{d}{Fm} \\
    A \arrow{r}{f} & FB
  \end{tikzcd}
\]
\begin{notheorembrackets}%
\begin{proposition}[\cite{WissmannEA19}]\label{P:adjnext}
  The functor $F$ preserves intersections if and only if every
  generalized next time operator $\nexttime_f$ is a right adjoint.
\end{proposition}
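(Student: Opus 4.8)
The plan is to reduce the whole statement to the criterion of Remark~\ref{R:adjoint}\,(2): a monotone map between subobject posets is a right adjoint if and only if it preserves intersections. Under this reduction the proposition becomes the assertion that $F$ preserves intersections iff every $\nexttime_f$ preserves intersections. The structural observation that makes this transparent is that $\nexttime_f$ factors as a composite. Write $F_{\Sub}\colon \Sub(B) \to \Sub(FB)$ for the monotone map sending a subobject $m\colon B' \monoto B$ to $Fm$ (a monomorphism, since $F$ preserves monos by Assumption~\ref{A:wfd}). Then by the very definition of the generalized next time operator, $\nexttime_f(m)$ is the pullback of $Fm$ along $f$, i.e.\ the inverse image $\pback f(Fm)$. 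Hence $\nexttime_f = \pback f \o F_{\Sub}$.

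For the forward direction I would assume that $F$ preserves intersections and argue that each factor preserves intersections. First, $F_{\Sub}$ preserves intersections: given subobjects $m_i\colon B_i \monoto B$, preservation of the wide pullback means exactly that $F\big(\bigwedge_i m_i\big) = \bigwedge_i Fm_i$ in $\Sub(FB)$. Second, $\pback f$ preserves intersections, being a right adjoint by Proposition~\ref{P:subadjs}. As a composite of two intersection-preserving monotone maps, $\nexttime_f$ then preserves intersections, and so by Remark~\ref{R:adjoint}\,(2) it is a right adjoint.

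For the converse I would exhibit, for each object $B$, a single instance of $\nexttime_f$ that isolates $F$ acting on subobjects. Take $A = FB$ and $f = \id_{FB}\colon FB \to FB$. Pulling back a monomorphism along an identity returns the same subobject, so $\nexttime_{\id_{FB}}(m) = Fm$; that is, $\nexttime_{\id_{FB}} = F_{\Sub}$. By hypothesis this operator is a right adjoint, hence preserves intersections, which is precisely the statement that $F$ preserves the intersection of any family $m_i\colon B_i \monoto B$. Since $B$ is arbitrary, $F$ preserves all intersections.

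The argument is short once the factorization $\nexttime_f = \pback f \o F_{\Sub}$ is in place, so there is no serious obstacle; the only point requiring care is the converse, where one must notice that the identity morphism $\id_{FB}$ is itself an admissible choice of $f$ and that pulling back along it strips away the inverse-image factor, leaving $F$ on subobjects exposed. I would also flag that both directions silently use the running hypothesis that $F$ preserves monomorphisms, which is exactly what makes $F_{\Sub}$ (and hence the whole discussion in terms of subobject lattices) well defined.
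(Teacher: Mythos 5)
Your proof is correct and follows essentially the same route as the paper: the paper also reduces via the criterion that a monotone map between subobject lattices is a right adjoint iff it preserves intersections, establishes the forward direction from the facts that $F$ preserves intersections of subobjects and that inverse images $\pback f$ do as well (your factorization $\nexttime_f = \pback f \o F_{\Sub}$ just makes this composite explicit), and proves the converse by the identical specialization $f = \id_{FB}$. Your closing remark that preservation of monomorphisms is silently used to make $F_{\Sub}$ well defined is a fair and correct observation.
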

\end{notheorembrackets}
\begin{proof}
  For the ``if''-direction, choose
  $f = \id_{FY}$. Then $\nexttime_{\id_{FY}}\colon m \mapsto Fm$ is a
  right adjoint and so preserves all meets, i.e.~$F$ preserves
  intersections.

  The converse follows from the easily established fact that
  intersections are stable under inverse image, i.e.~for every morphism
  $f\colon X \to Y$ and every family $m_i\colon S_i \monoto Y$ of
  subobjects, the intersection $m\colon P \monoto X$ of the inverse images
  of the $m_i$ under $f$ yields a pullback
  \[
    \begin{tikzcd}
      P
      \arrow[>->,swap]{d}{m} \arrow{r} \pullbackangle{-45} & \bigcap S_i
      \arrow[>->]{d}{\bigcap m_i}\\
      X \arrow{r}{f} & Y
    \end{tikzcd}
  \]
  Hence, if $F$ preserves intersections, then so does every operator
  $\nexttime_f$. Equivalently, $\nexttime_f$ is a right adjoint. \qed
\end{proof}

\section{The Well-Founded Part of a Coalgebra}
\label{section-well-founded-part}

We introduced well-founded coalgebras in
\autoref{S:wfd}. We now discuss the
\emph{well-founded part} of a coalgebra, i.e.~its largest well-founded
subcoalgebra. We prove that this is the least fixed point of the
next time operator.  Then we prove that the
well-founded part is the coreflection of a coalgebra in the category
of well-founded coalgebras.
\begin{notheorembrackets}
\begin{defn}[\cite{amm18}]
  The \emph{well-founded part} of a coalgebra is its largest
  well-founded subcoalgebra.
\end{defn}
\end{notheorembrackets}

The well-founded part of a coalgebra always exists and is the
coreflection in the category of well-founded
coalgebras~\cite[Prop.~2.27]{amms}. We provide a new, shorter proof of
this fact. The well-founded part is obtained by the following:
\begin{notheorembrackets}
\begin{construction}[{\cite[Not.~2.22]{amms}}]
  Let $\alpha\colon A \to FA$ be a coalgebra. We know that $\Sub(A)$
  is a complete lattice and that the next time operator $\nexttime$ is monotone
  (see \autoref{R:nexttime}). Hence, by the Knaster-Tarski fixed point
  theorem, $\nexttime$ has a
  least fixed point, which we denote by
  \[
    a^*\colon A^* \monoto A.
  \]
  Moreover, by \autoref{L:next}\ref{L:next:2}, we know that there is a coalgebra
  structure $\alpha^*\colon A^* \to FA^*$ so that
  $a^*\colon (A^*, \alpha^*) \monoto (A, \alpha)$ is the smallest cartesian
  subcoalgebra of $(A, \alpha)$.
\end{construction}
\end{notheorembrackets}
\begin{proposition}\label{P:wfdpart}
  For every coalgebra $(A,\alpha)$, the coalgebra $(A^*, \alpha^*)$ is
  well-founded. 
\end{proposition}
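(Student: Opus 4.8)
The plan is to verify \autoref{D:well-founded} directly for $(A^*,\alpha^*)$: I will show that $\id_{A^*}$ is the only fixed point of the next time operator $\nexttime_{\alpha^*}$ of $(A^*,\alpha^*)$. The engine of the argument is a single commutation identity relating $\nexttime_{\alpha^*}$ to the next time operator $\nexttime_\alpha$ of the ambient coalgebra $(A,\alpha)$: for every subobject $t\colon T \monoto A^*$,
\[
  \nexttime_\alpha(a^* \cdot t) = a^* \cdot \nexttime_{\alpha^*}(t)
  \qquad\text{in } \Sub(A).
\]

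To prove this identity I would paste pullbacks, and here the cartesianness of $a^*$ is essential. By \autoref{L:next}\ref{L:next:2} the subcoalgebra $a^*\colon (A^*,\alpha^*)\monoto (A,\alpha)$ is cartesian, i.e.\ the square with edges $\alpha$, $\alpha^*$, $a^*$ and $Fa^*$ is a pullback. Since $F$ preserves monomorphisms, $Ft$ and $Fa^*$ are monic, and by \autoref{D-tilde} the subobject $\nexttime_{\alpha^*}(t)$ is the pullback of $Ft$ along $\alpha^*$. Stacking that pullback on top of the cartesian square produces a pullback rectangle whose right-hand composite is $Fa^* \cdot Ft = F(a^*\cdot t)$ and whose left-hand composite is $a^* \cdot \nexttime_{\alpha^*}(t)$; by the pasting lemma this rectangle exhibits $a^* \cdot \nexttime_{\alpha^*}(t)$ as the pullback of $F(a^*\cdot t)$ along $\alpha$, which is exactly $\nexttime_\alpha(a^*\cdot t)$. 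I emphasize that this uses only cartesianness together with the standing hypothesis that $F$ preserves monomorphisms --- not preservation of intersections or inverse images.

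With the identity in hand the conclusion is immediate. Let $b\colon B \monoto A^*$ be any fixed point of $\nexttime_{\alpha^*}$ and set $s = a^* \cdot b \in \Sub(A)$. The identity gives $\nexttime_\alpha(s) = a^* \cdot \nexttime_{\alpha^*}(b) = a^* \cdot b = s$, so $s$ is a fixed point of $\nexttime_\alpha$. But $a^*$ is the least fixed point of $\nexttime_\alpha$ (this is how $a^*$ was obtained, via Knaster--Tarski), whence $a^* \leq s$; and trivially $s = a^* \cdot b \leq a^*$. Therefore $s = a^*$ in $\Sub(A)$, and since $a^*$ is monic this forces $b = \id_{A^*}$. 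Thus $\id_{A^*}$ is the unique fixed point of $\nexttime_{\alpha^*}$, i.e.\ $(A^*,\alpha^*)$ is well-founded.

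The step I expect to be the main obstacle is the commutation identity. For a general coalgebra homomorphism $f$, \autoref{L:amb} yields only the inequality $\nexttime_\beta\cdot\pback f \leq \pback f\cdot\nexttime_\alpha$, and in fact the reverse inequality can fail without extra hypotheses on $F$; the point of the present situation is that the cartesianness of $a^*$ upgrades this to an equality through the pullback-pasting above, so that no assumption beyond preservation of monomorphisms is needed. The remainder is routine bookkeeping with the subobject adjunction $\pbackleft{a^*} \dashv \pback{a^*}$ of \autoref{P:subadjs} and the least-fixed-point property of $a^*$.
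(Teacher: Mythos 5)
Your proof is correct and is essentially the paper's own argument: the paper takes a cartesian subcoalgebra $m$ of $(A^*,\alpha^*)$, observes (via \autoref{L:next} and the same pullback pasting you spell out) that $a^*\cdot m$ is a fixed point of $\nexttime_\alpha$, and then invokes leastness of $a^*$ and its monicity to force $m$ to be an isomorphism. Your commutation identity is just an explicit packaging of that pasting step, and your closing remark is accurate --- cartesianness of $a^*$ is exactly what upgrades the inequality of \autoref{L:amb} to an equality here (the adjunction of \autoref{P:subadjs} is not actually needed).
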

\begin{proof}
  Let $m\colon (B, \beta) \monoto (A^*, \alpha^*)$ be a cartesian
  subcoalgebra. By \autoref{L:next}, $a^*\o m\colon B\to A$ is a fixed
  point of $\nexttime$. Since $a^*$ is the least fixed point, we have
  $a^* \leq a^*\o m$, i.e.~$a^* = a^* \cdot m \cdot x$ for some
  $x\colon A^* \monoto B$. Since $a^*$ is monic, we thus have
  $m \cdot x = \id_{A^*}$.  So $m$ is a monomorphism and a split epimorphism, whence
  an isomorphism.\qed
\end{proof}
\begin{expl}
  Consider the coalgebra $G$ for $\pow$ depicted as
  the following graph:
  \[
    \xymatrix{
      a\ar[r] & b  &  c \ar@/_.5pc/[r] & \ar@/_.5pc/[l] d
    }
  \]
  We list all subcoalgebras below (the structures are the obvious ones
  given by the picture of $G$).  Those are $\emptyset$,
  $\set{b}$, $\set{a,b}$, $\set{c,d}$, $\set{b,c,d}$, and
  $\set{a,b,c,d}$.  Of these, the cartesian subcoalgebras of $G$ are
  $\set{a,b}$, and $\set{a,b,c,d}$. The well-founded part of $G$ is
  the least cartesian subcoalgebra, namely $\set{a,b}$.
\end{expl}
We know from \autoref{P:wfdpart} that for every coalgebra
$(A, \alpha)$ its subcoalgebra represented by
$a^*\colon A^* \monoto A$ is well-founded. We now
prove that, categorically, this subcoalgebra is characterized uniquely
up to isomorphism by the following universal property: every
homomorphism from a well-founded coalgebra into $(A,\alpha)$ factorizes
uniquely through $a^*$. In particular, this implies that
$a^*\colon A^* \to A$ is the largest well-founded subcoalgebra of $A$,
viz.~the well-founded part of $A$.
\begin{proposition}\label{P:wfdpart2}
  The full subcategory of $\Coalg F$ given by well-founded coalgebras
  is coreflective. In fact, the well-founded coreflection of a
  coalgebra is its well-founded part $a^*\colon(A^*,\alpha^*)\monoto (A,\alpha)$.
\end{proposition}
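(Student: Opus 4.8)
The plan is to verify the universal property of a coreflection directly. We already know from \autoref{P:wfdpart} that $(A^*,\alpha^*)$ is well-founded, so what remains is to show that for every well-founded coalgebra $(B,\beta)$ and every homomorphism $f\colon (B,\beta)\to (A,\alpha)$ there is a \emph{unique} homomorphism $\bar f\colon (B,\beta)\to (A^*,\alpha^*)$ with $a^*\o\bar f = f$. Since $a^*$ is monic, uniqueness is immediate (monomorphisms are left-cancellable), so the entire argument reduces to proving existence of the factorization.

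First I would pass to subobject lattices and form the inverse image $\pback f(a^*)\in\Sub(B)$ of the least fixed point along $f$. Because $a^*$ is a cartesian subcoalgebra, \autoref{L:next}\ref{L:next:2} gives $\nexttime_\alpha a^* = a^*$, and \autoref{L:amb} then yields
\[
  \nexttime_\beta\big(\pback f(a^*)\big) \;\leq\; \pback f\big(\nexttime_\alpha a^*\big) \;=\; \pback f(a^*).
\]
Thus $\pback f(a^*)$ is a pre-fixed point of $\nexttime_\beta$. Since $(B,\beta)$ is well-founded, \autoref{R:fixed}\ref{R:fixed:2} tells us that $\id_B$ is the \emph{only} pre-fixed point of $\nexttime_\beta$, whence $\pback f(a^*) = \id_B$.

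By the definition of inverse image, $\pback f(a^*)=\id_B$ says exactly that the pullback of $a^*$ along $f$ is (isomorphic to) all of $B$, i.e.~$f$ factors as $f = a^*\o\bar f$ for a unique morphism $\bar f\colon B\to A^*$ in $\A$. It then remains to check that $\bar f$ is a coalgebra homomorphism, and here I would use that $a^*$ and $f$ are homomorphisms together with the standing assumption that $F$ preserves monomorphisms:
\[
  Fa^*\o F\bar f\o\beta \;=\; Ff\o\beta \;=\; \alpha\o f \;=\; \alpha\o a^*\o\bar f \;=\; Fa^*\o\alpha^*\o\bar f.
\]
As $Fa^*$ is monic it is left-cancellable, giving $F\bar f\o\beta = \alpha^*\o\bar f$; hence $\bar f$ is a homomorphism.

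Together with \autoref{P:wfdpart} this exhibits $a^*$ as a universal arrow from the subcategory of well-founded coalgebras to $(A,\alpha)$, so that subcategory is coreflective with $a^*\colon(A^*,\alpha^*)\monoto(A,\alpha)$ the counit, as claimed. The step I expect to be the crux is the lattice-theoretic argument in the second paragraph: recognizing $\pback f(a^*)$ as a pre-fixed point of $\nexttime_\beta$ via \autoref{L:amb}, and then invoking the unique-pre-fixed-point characterization of well-foundedness to collapse it to $\id_B$. The promotion of the resulting $\A$-morphism to a coalgebra homomorphism is routine once $F$ preserves monomorphisms.
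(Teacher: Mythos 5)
Your proof is correct and takes essentially the same route as the paper's: the crux in both is identifying $\pback f(a^*)$ as a pre-fixed point of $\nexttime_\beta$ via \autoref{L:amb} and collapsing it to $\id_B$ by the unique-pre-fixed-point characterization of well-foundedness (\autoref{R:fixed}\ref{R:fixed:2}). The only difference is cosmetic: the paper extracts the factorization $f = a^*\o f^\sharp$ by passing through the adjunction $\pbackleft f\dashv\pback f$ of \autoref{P:subadjs} and the image factorization of $f$, whereas you read it off directly from the pullback square witnessing $\pback f(a^*)=\id_B$; both work, and your explicit cancellation of the monomorphism $Fa^*$ to see that $\bar f$ is a coalgebra homomorphism is exactly what the paper leaves implicit.
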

\begin{proof}
  We are to prove that for every coalgebra homomorphism
  $f\colon (B,\beta) \to (A,\alpha)$, where $(B,\beta)$ is
  well-founded, there exists a coalgebra homomorphism
  $f^\sharp\colon (B,\beta) \to (A^*,\alpha^*)$ such that
  $a^* \cdot f^\sharp = f$.  It is unique since
  $a^*\colon A^* \monoto A$ is a monomorphism. It then follows that
  $a^*\colon (A^*,\alpha^*) \monoto (A,\alpha)$ is the largest
  well-founded subcoalgebra.

  For the existence of $f^\sharp$, we first observe that $\pback
  f(a^*)$ is a pre-fixed point of $\nexttime_\beta$: indeed, using
  \autoref{L:amb} we have
  \[
    \nexttime_\beta(\pback f(a^*)) \leq \pback
    f(\nexttime_\alpha(a^*)) = \pback f(a^*).
  \]
  By  \autoref{R:fixed}\ref{R:fixed:2}, we therefore have 
  $id_B\ = b^* \leq \pback f(a^*)$ in $\Sub(B)$. Using the adjunction in
  \autoref{P:subadjs}, we have $\pbackleft f(\id_B) \leq a^*$ in
  $\Sub(A)$. Now let
  \[
    f = \big( B \overset{e}{\epito} C
    \overset{m}{\monoto} A\big)
  \]
  be the factorization of $f$ as in
  \autoref{R:compwell}\ref{R:compwell:3}. 
  This
  implies that $\pbackleft f(\id_B) = m$. Thus we obtain
  \[
    m = \pbackleft f(\id_B) \leq a^*,
  \]
  i.e.~there exists a morphism $h\colon C \monoto A^*$
  such that $a^* \cdot h = m$. Thus, $f^\sharp =
  h \cdot e\colon B \to A^*$ is a morphism satisfying 
  \[
    a^* \cdot f^\sharp = a^* \cdot h \cdot e = m \cdot e = f. 
  \]
  It follows that $f^\sharp$ is a coalgebra homomorphism from
  $(B,\beta)$ to $(A^*,\alpha^*)$ since $f$ and $a^*$ are and $F$
  preserves monomorphisms.\qed
\end{proof}
\section{Closure Properties of Well-Founded Coalgebras}
\label{S:closure}
In this section we will see that strong quotients and subcoalgebras
(see~\autoref{R:subcoalg}) of well-founded coalgebras are well-founded
again. For subcoalgebras we need to assume more about $\A$ and $F$. We
present two variants in \autoref{P:wfsub} and \autoref{T:sub}.

We mention the following corollary to \autoref{P:wfdpart2}. For
endofunctors on sets preserving inverse images this was stated by
Taylor~\cite[Exercise~VI.16]{taylor2}:
\begin{corollary}\label{C:wfcolim}
  The subcategory of $\Coalg F$ formed by all well-founded
  coalgebras is closed under strong quotients and coproducts in $\Coalg F$.
\end{corollary}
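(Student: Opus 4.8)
The plan is to deduce both closure properties from the coreflectivity of well-founded coalgebras established in \autoref{P:wfdpart2}: there the well-founded part $a^*\colon(A^*,\alpha^*)\monoto(A,\alpha)$ is identified as the coreflection, so every homomorphism out of a well-founded coalgebra into $(A,\alpha)$ factors uniquely through $a^*$. In each of the two cases I will exhibit the relevant coreflection inclusion as a split epimorphism; since it is already a monomorphism, the standard fact that a morphism which is both split epic and monic is an isomorphism will show that the coalgebra in question coincides with its own well-founded part, hence is well-founded.

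For coproducts, I would take well-founded coalgebras $(A_i,\alpha_i)$ ($i\in I$) and let $(A,\alpha)=\coprod_i(A_i,\alpha_i)$ be their coproduct in $\Coalg F$ with coprojections $\cinj_i\colon(A_i,\alpha_i)\to(A,\alpha)$. Each $\cinj_i$ is a homomorphism out of a well-founded coalgebra, so by \autoref{P:wfdpart2} it factors as $\cinj_i=a^*\cdot\cinj_i^\sharp$ for a unique $\cinj_i^\sharp\colon(A_i,\alpha_i)\to(A^*,\alpha^*)$. The universal property of the coproduct then yields a homomorphism $u\colon(A,\alpha)\to(A^*,\alpha^*)$ with $u\cdot\cinj_i=\cinj_i^\sharp$, so that $a^*\cdot u\cdot\cinj_i=a^*\cdot\cinj_i^\sharp=\cinj_i$ for every $i$. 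By the uniqueness part of the universal property this forces $a^*\cdot u=\id_A$, so $a^*$ is split epic, and being monic it is an isomorphism; thus $(A,\alpha)$ is well-founded.

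For strong quotients I would start with a well-founded $(A,\alpha)$ and a strong quotient $e\colon(A,\alpha)\epito(C,\gamma)$, carried by a strong epimorphism, and write $c^*\colon(C^*,\gamma^*)\monoto(C,\gamma)$ for the well-founded part of $(C,\gamma)$. Since $(A,\alpha)$ is well-founded, \autoref{P:wfdpart2} gives a factorization $e=c^*\cdot e^\sharp$ with $e^\sharp\colon(A,\alpha)\to(C^*,\gamma^*)$. Reading the equation $c^*\cdot e^\sharp=\id_C\cdot e$ as a commuting square with the strong epimorphism $e$ on top and the monomorphism $c^*$ on the right, the diagonal fill-in property in $\Coalg F$ (\autoref{P:(e,m)}) supplies a diagonal $d\colon(C,\gamma)\to(C^*,\gamma^*)$ with $c^*\cdot d=\id_C$. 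Hence $c^*$ is split epic and monic, so it is an isomorphism and $(C,\gamma)$ is well-founded.

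The arguments are short once coreflectivity is available, and I expect the only delicate point to be the strong-quotient case, where I rely on the facts that the $(\text{strong epi},\text{mono})$ factorizations and the diagonal fill-in lift from $\A$ to $\Coalg F$ (\autoref{P:(e,m)}); this in turn uses the standing assumption that $F$ preserves monomorphisms. For the coproduct case I tacitly assume that the coproduct exists in $\Coalg F$, which is not at issue in the statement: the claim is only that such a coproduct, once formed, is again well-founded.
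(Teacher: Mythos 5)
Your proof is correct and is essentially the paper's argument: the paper simply cites the general theorem that a coreflective subcategory with monomorphic coreflections of a category with a (strong epi, mono) factorization system is closed under strong quotients and colimits (Ad\'amek et al., Thm.~16.8), using exactly the two ingredients you use, namely \autoref{P:wfdpart2} and \autoref{P:(e,m)}. Your write-up just unpacks the proof of that general result in this instance, and both the coproduct argument (splitting $a^*$ via the colimit's universal property) and the quotient argument (splitting $c^*$ via diagonal fill-in against the strong epimorphism $e$) are sound.
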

This follows from a general result on coreflective
subcategories~\cite[Thm.~16.8]{ahs}: the category $\Coalg F$ has the
factorization system of~\autoref{P:(e,m)}, and its full subcategory of
well-founded coalgebras is coreflective with monomorphic coreflections
(see \autoref{P:wfdpart2}). Consequently, it is closed under strong
quotients and colimits.
\begin{rem}
  We prove next that, for an endofunctor preserving finite
  intersections, well-founded coalgebras are closed under subcoalgebras
  provided that $\Sub(A)$ forms a frame. This assumption is not needed
  provided that monomorphisms are universally \smooth (see
  \autoref{T:sub}). Recall that $\Sub(A)$ is a frame if for every
  subobject $m\colon B \monoto A$ and every family $m_i\ (i\in I)$ of
  subobjects of $A$ we have
  \[
    m \wedge \bigvee_{i \in I} m_i = \bigvee_{i\in I} (m\wedge m_i).
  \]
  Equivalently, $\pback m\colon \Sub(A) \to \Sub(B)$ has a right
  adjoint $\pbackright m\colon \Sub(B) \to \Sub(A)$ (use the dual
  of~\autoref{R:adjoint}).
\end{rem}
\begin{examples}
  \begin{enumerate}
  \item $\Set$ has the property that all $\Sub(A)$ are frames. In
    fact, given subsets $S$ and $S_i\ (i \in I)$ of $A$ the equality
    $S \cap (\bigcup_{i \in I} S_i ) = \bigcup_{i\in I} (S \cap S_i)$
    clearly holds.

  \item This property is shared by categories such as posets and
    monotone maps, graphs and homomorphisms, unary algebras and
    homomorphisms, topological spaces and continuous maps, and
    presheaf categories $\Set^{\C^\mathsf{op}}$, with $\C$ small. This
    follows from the fact that joins and meets of subobjects of an
    object $A$ are formed on the level of subsets of the underlying
    set of $A$.

  \item For every Grothendieck topos, the posets $\Sub(A)$ are 
    frames. In fact, it is sufficient for a topos to have all
    coproducts or intersections to satisfy this requirement.

  \item The category $\KVec$ does not have the above property. For
    example, for $K = \Reals$ and two distinct lines 
    $m_i\colon \Reals\monoto \Reals^2$, the desired equation
    fails. Indeed, for every line
    $m\colon \Reals\monoto \Reals^2$ different from $m_1, m_2$ we have
    that
    \[
      m\andd (m_1\orr m_2) = m \neq 0 = (m\andd m_1)\orr (m\andd m_2).
    \]

  \item The category $\CPO$ does not have the above property: for the
    cpo $A = \Nat^\top$ of natural numbers with a top element $\top$
    (linearly ordered) the lattice $\Sub(A)$ is not a frame. Consider
    the subobjects given by inclusion maps $m_i\colon \set{0, \ldots,
      i} \subto \Nat^\top$ for $i \in \Nat$, with domains linearly
    ordered. It is easy to see that $\bigvee_{i\in \Nat} m_i =
    \id_A$. For the inclusion map $m\colon \set{\top} \subto
    \Nat^\top$ we have $m \wedge m_i = 0\ (i \in I)$, the empty
    subobject. Thus, $\bigvee_{i \in \Nat} (m \wedge m_i) = 0 \neq m =
    m \wedge \bigvee_{i \in I} m_i$.
  \end{enumerate}
\end{examples} 
\begin{proposition}\label{P:wfsub}
  Suppose that $F$ preserves finite intersections, and let 
  $(A,\alpha)$ be a well-founded coalgebra such that $\Sub(A)$ a
  frame. Then every subcoalgebra of $(A,\alpha)$ is well-founded.
\end{proposition}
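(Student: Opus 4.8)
Let $m\colon (B,\beta)\monoto(A,\alpha)$ be a subcoalgebra. By \autoref{R:fixed}\ref{R:fixed:2} it suffices to show that $\id_B$ is the only pre-fixed point of the next time operator $\nexttime_\beta$. The plan is to transport any pre-fixed point of $\nexttime_\beta$ \emph{upwards} along $m$ to a pre-fixed point of $\nexttime_\alpha$, and then to invoke well-foundedness of $(A,\alpha)$. Two tools make this possible. First, since $\Sub(A)$ is a frame, the inverse-image operator $\pback m\colon \Sub(A)\to\Sub(B)$ has a right adjoint $\pbackright m\colon\Sub(B)\to\Sub(A)$, so that $\pback m(p)\leq q$ iff $p\leq\pbackright m(q)$. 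Second, because $m$ is monic and $F$ preserves finite intersections, \autoref{C:pback}\ref{C:pback:1} supplies the crucial commutation $\pback m\cdot\nexttime_\alpha=\nexttime_\beta\cdot\pback m$.

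The core step I would carry out is the following. Let $n\colon C\monoto B$ be a pre-fixed point of $\nexttime_\beta$, i.e.~$\nexttime_\beta n\leq n$, and put $p=\pbackright m(n)\in\Sub(A)$. I would then check that $p$ is a pre-fixed point of $\nexttime_\alpha$ by the chain
\[
  \pback m(\nexttime_\alpha p)=\nexttime_\beta(\pback m(\pbackright m(n)))\leq\nexttime_\beta(n)\leq n,
\]
where the equality is the commutation above, the first inequality uses the counit $\pback m\,\pbackright m\leq\id$ of the adjunction $\pback m\dashv\pbackright m$ together with monotonicity of $\nexttime_\beta$ (\autoref{R:nexttime}), and the last inequality is the pre-fixed point hypothesis on $n$. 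Transposing across the adjunction $\pback m\dashv\pbackright m$, this says exactly $\nexttime_\alpha p\leq\pbackright m(n)=p$, as desired.

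To conclude, since $(A,\alpha)$ is well-founded, \autoref{R:fixed}\ref{R:fixed:2} forces its unique pre-fixed point to be the top, so $p=\pbackright m(n)=\id_A$. Then $\id_A\leq\pbackright m(n)$ transposes to $\id_B=\pback m(\id_A)\leq n$ (using that the right adjoint $\pback m$ preserves the top element), whence $n=\id_B$. Thus $\id_B$ is the only pre-fixed point of $\nexttime_\beta$, and $(B,\beta)$ is well-founded. The main obstacle I anticipate is verifying that $p=\pbackright m(n)$ is a pre-fixed point upstairs: this is precisely where both hypotheses enter, the commutation of \autoref{C:pback} carrying the assumptions ``$m$ monic'' and ``$F$ preserves finite intersections'', and the right adjoint $\pbackright m$ being exactly what the frame condition on $\Sub(A)$ provides. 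Once that step is in place the remainder is a routine adjunction chase.
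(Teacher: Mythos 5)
Your proof is correct and is essentially identical to the paper's own argument: both transport a pre-fixed point of $\nexttime_\beta$ along the right adjoint $\pbackright m$ supplied by the frame condition, use the commutation $\pback m\cdot\nexttime_\alpha=\nexttime_\beta\cdot\pback m$ from \autoref{C:pback}\ref{C:pback:1} together with the counit to show the transported subobject is a pre-fixed point of $\nexttime_\alpha$, and then transpose back. No differences of substance.
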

\begin{proof}
  Let $m\colon (B,\beta) \monoto (A,\alpha)$ be a subcoalgebra. We
  will show that the only pre-fixed point of $\nexttime_\beta$ is
  $\id_B$ (cf.~\autoref{R:fixed}\ref{R:fixed:2}). Suppose
  $s\colon S \monoto B$ fulfils $\nexttime_\beta(s) \leq s$. Since $F$
  preserves finite intersections, we have
  \[
    \pback m \cdot \nexttime_\alpha = \nexttime_\beta \cdot \pback m
  \]
  by \autoref{C:pback}\ref{C:pback:1}. The counit of the adjunction $\pback m \dashv
  \pbackright m$ yields $\pback m (\pbackright m (s)) \leq s$, so that
  we obtain
  \[    
    \pback m(\nexttime_\alpha(\pbackright m(s))) =
    \nexttime_\beta(\pback m (\pbackright m(s))) \leq
    \nexttime_\beta(s) \leq s.
  \]
  Using again the adjunction $\pback m \dashv \pbackright m$, we have
  equivalently that
  $\nexttime_\alpha(\pbackright m (s)) \leq \pbackright m(s)$,
  i.e.~$\pbackright m(s)$ is a pre-fixed point of
  $\nexttime_\alpha$. Since $(A,\alpha)$ is well-founded,
  \autoref{C:pback}\ref{C:pback:1} implies that
  $\pbackright m (s) = \id_A$. Since $\pback m$ is also a right
  adjoint and therefore preserves the top element of $\Sub(B)$, we
  thus obtain
  \[
    \id_B = \pback m(\id_A) = \pback m(\pbackright m(s)) \leq s,
  \]
  which completes the proof.\qed
\end{proof}
\begin{rem}
  Given a set functor $F$ preserving inverse images, a much better
  result was proved by Taylor~\cite[Corollary~6.3.6]{taylor2}: for
  every coalgebra homomorphism $f\colon (B,\beta) \to (A,\alpha)$ with
  $(A,\alpha)$ well-founded so is $(B,\beta)$. In fact, our proof
  above is essentially Taylor's who (implicitly) uses
  \autoref{C:pback}\ref{C:pback:2} instead.
\end{rem}
\begin{corollary}\label{C:setsub}
  If a set functor preserves finite intersections, then subcoalgebras
  of well-founded coalgebras are well-founded.
\end{corollary}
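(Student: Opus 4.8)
The plan is to derive \autoref{C:setsub} as a direct specialisation of \autoref{P:wfsub} to the base category $\A = \Set$. Beyond the standing \autoref{A:wfd}, \autoref{P:wfsub} carries two hypotheses: that $F$ preserve finite intersections, and that $\Sub(A)$ be a frame. The first is precisely the assumption of the corollary, so the only real content is to check that the frame condition holds automatically in $\Set$, and that the standing requirement that $F$ preserve monomorphisms is likewise met. Once both are verified, \autoref{P:wfsub} applies verbatim to every well-founded coalgebra $(A,\alpha)$ and each of its subcoalgebras, giving the claim.

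For the frame condition I would observe that for a set $A$ the lattice $\Sub(A)$ is just the powerset of $A$ ordered by inclusion, and the distributive law $S \cap \bigcup_{i\in I} S_i = \bigcup_{i\in I}(S\cap S_i)$ holds for arbitrary subsets $S, S_i \subseteq A$. Hence every $\Sub(A)$ is a frame; equivalently, for each subobject $m$ the inverse-image operator $\pback m$ has a right adjoint $\pbackright m$, which is exactly the structure the proof of \autoref{P:wfsub} invokes.

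For the standing assumption I would note that a set functor preserving finite intersections is naturally isomorphic to its Trnkov\'a hull: by the essential uniqueness in \autoref{P:Tr}, $\bar F$ is the only set functor that agrees with $F$ on nonempty sets and preserves finite intersections, and $F$ itself is such a functor (cf.~the argument in \autoref{E:setfunctors}\ref{E:setfunctors:2}), so $F \cong \bar F$. Since $\bar F$ preserves monomorphisms by construction, so does $F$, and \autoref{A:wfd} is satisfied. (One could instead argue elementarily: every set functor preserves monomorphisms with nonempty domain, and preservation of the binary intersection $\set{0}\cap\set{1}=\emptyset$ forces preservation of the empty monomorphisms as well, after factoring a general empty map through a two-element set.) With both hypotheses in hand, \autoref{P:wfsub} yields that every subcoalgebra of a well-founded coalgebra is well-founded, the empty subcoalgebra being trivially well-founded. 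The one delicate point in the whole argument is this empty-set behaviour of set functors, i.e.~securing mono-preservation; everything else is a plain specialisation of the already-proved proposition.
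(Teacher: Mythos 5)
Your proposal is correct and is essentially the paper's (implicit) proof: \autoref{C:setsub} is stated as a direct specialisation of \autoref{P:wfsub} to $\Set$, where the frame condition on $\Sub(A)$ holds because subobject lattices are powersets (this is exactly the first item in the examples preceding \autoref{P:wfsub}). Your extra care about mono-preservation is sound but not strictly needed, since the paper already notes after \autoref{A:wfd} that this standing assumption may be dropped for $\A=\Set$.
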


Trnkov\'{a} proved~\cite{trnkova69} that every set functor
preserves all \emph{nonempty} finite intersections.  However, this
does not suffice for Corollary~\ref{C:setsub}: 
\begin{expl}
  A well-founded coalgebra for a set functor can have non-well-founded
  subcoalgebras. Let $F\emptyset = 1$ and $FX = 1+1$ for all nonempty
  sets $X$, and let $Ff = \inl\colon 1 \to 1+1$ be the left-hand
  injection for all maps $f\colon \emptyset \to X$ with $X$
  nonempty. The coalgebra $\inr\colon 1 \to F1$ is not well-founded
  because its empty subcoalgebra is cartesian. However, this is a
  subcoalgebra of $\id\colon 1+1 \to 1+1$ (via the embedding $\inr$),
  and the latter is well-founded.
\end{expl}

The fact that subcoalgebras of a well-founded coalgebra are
well-founded does not necessarily need the assumption that $\Sub(A)$
is a frame. Using the construction of the least fixed point $a^*$ of
$\nexttime$ provided by the (proof of the) Knaster-Tarski fixed point
theorem, it is essentially sufficient that $\pback m$ in the proof of
\autoref{P:wfsub} preserves joins of unions of chains in $\Sub(A)$. We now
discuss this in more detail.

Recall (universally) \smooth monomorphisms from \autoref{D:constr}.
\begin{notheorembrackets}
\begin{construction}[{\cite[Not.~2.22]{amms}}]\label{C:wf-part}
  Let $(A,\alpha)$ be a coalgebra. We obtain $a^*$, the least fixed
  point of $\nexttime$, as the join of the following transfinite
  chain of subobjects $a_i\colon A_i \monoto A$, $i \in \Ord$. First,
  put $a_0 = \bot_A$, the least subobject of
  $A$. Given $a_i\colon A_i \monoto A$, put
  $a_{i+1}= \nexttime a_i \colon A_{i+1} = \nexttime A_i \monoto A$.
  For every limit ordinal $j$, put $a_j = \bigvee_{i < j}
  a_i$. It follows from the proof of the Knaster-Tarski fixed point theorem
  that there exists an ordinal $i$ such that $a_i = a^*\colon A^* \monoto A$.
\end{construction}
\end{notheorembrackets}
\begin{rem}\label{R:omega}
  \begin{enumerate}
  \item\label{R:omega:1}  Note that, whenever monomorphisms are
    \smooth, we have $A_0 = 0$ and the above join $a_j$ is 
    obtained as the colimit of the chain of the subobject
    $a_i\colon A_i \monoto A$, $i < j$ (see \autoref{R:constr}).

  \item\label{R:omega:2} If $F$ is a finitary functor on a locally
    finitely presentable category, then the least ordinal $i$ with
    $a^* = a_i$ is at most $\omega$. Indeed, $\nexttime$ preserves
    joins of $\omega$-chains in $\Sub(A)$ because $F$ does, since
    these joins are obtained as chain colimits
    (see~\cite[Prop.~1.62]{ar}) and so does $\pback \alpha$ since
    colimits of chains are universal
    (cf.~\autoref{E:uconstr}\ref{E:uconstr:4}). By Kleene's fixed
    point theorem $a^* = \bigvee_{i \in \Nat} a_i$.
    
  \item\label{R:omega:3} The same holds for a finitary functor on a category
    with universally \smooth monomorphisms. However, in general
    one needs transfinite iteration to reach a fixed point (see
    \autoref{R:trans}). 
  \end{enumerate}
\end{rem}
\begin{expl}
  Let $(A,\alpha)$ be a graph regarded as a coalgebra for $\Pow$ (see
  \autoref{E:graph}). Then $A_0 = \emptyset$, $A_1$ is formed by all
  leaves, i.e.~those nodes with no neighbours, $A_2$ by all leaves and
  all nodes such that every neighbour is a leaf, etc. We see that a
  node $x$ lies in $A_{i+1}$ iff every path starting in $x$ has length
  at most $i$. Hence $A^* = A_\omega$ is the set of all nodes from
  which no infinite paths starts.
\end{expl}
\begin{notation}
  For every pair $i \leq j$ or ordinals, we denote by
  $a_{ij}\colon A_i \monoto A_j$ the unique morphism witnessing
  $a_i \leq a_j$, i.e.~$a_i = a_j \cdot a_{ij}$. Note that these arise
  by transfinite recursion as well: $a_{0i}$ is obtained by
  initiality, at limit steps use the colimit morphisms, and at
  successor steps one uses the pullback property. That is, in the
  following diagram (in which all vertical morphisms are monomorphisms)
  \begin{equation}\label{521}
    \vcenter{
    \xymatrix@C+2pc{
      A_{i+1} \ar@{ >-->}[d]^-{a_{i+1,j+1}} \ar[r]^{\alpha(a_i)}
      \ar@{ >->}@/_3pc/[dd]_-{a_{i+1}} 
      &  F A_i  \ar@{ >->}[d]_-{Fa_{ij}}  \ar@{ >->}@/^3pc/[dd]^{F a_{i}}    \\
      A_{j+1} \ar@{ >->}[d]^-{a_{j+1}} \ar[r]^{\alpha(a_j)} \pullbackcorner
      &  F A_j  \ar@{ >->}[d]_-{Fa_{j}}    \\
      A \ar[r]_-{\alpha}
      &  F A                 
    }}
  \end{equation}
  the outside commutes by the definitions of $A_{i+1}$,
  $a_{i+1}$, and $\alpha(a_i)$; also the triangle on the right commutes by
  induction hypothesis on $i$.  Since the bottom square is a pullback,
  we obtain $a_{i+1,j+1}$ as desired.
\end{notation}
\smnote{We need to assume existence of colimits of chains!}
\begin{theorem}\label{T:sub}
  Let $\A$ be a complete and well-powered category with universally
  \smooth monomorphisms.  Then for endofunctors preserving
  finite intersections, every subcoalgebra of a well-founded coalgebra
  is well-founded itself.
\end{theorem}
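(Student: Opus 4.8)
The plan is to track the Knaster--Tarski iteration of the next time operator across the subcoalgebra inclusion. Let $m\colon(B,\beta)\monoto(A,\alpha)$ be a subcoalgebra with $(A,\alpha)$ well-founded. Using \autoref{C:wf-part} I build the two transfinite chains $a_i\colon A_i\monoto A$ and $b_i\colon B_i\monoto B$ converging to the least fixed points $a^*$ of $\nexttime_\alpha$ and $b^*$ of $\nexttime_\beta$, respectively. Since monomorphisms are (universally) \smooth, \autoref{R:omega}\ref{R:omega:1} tells us both chains start at the initial object $0$ and take their joins at limit stages to be colimits of the chain formed so far. The goal is to show $b^*=\id_B$, for then $b^*$, being simultaneously the least fixed point and the least pre-fixed point of $\nexttime_\beta$, forces $\id_B$ to be its \emph{unique} pre-fixed point, whence $(B,\beta)$ is well-founded by \autoref{R:fixed}\ref{R:fixed:2}.

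The key lemma is that the inverse-image operator $\pback m$ carries the $\alpha$-chain onto the $\beta$-chain, i.e.\ $b_i=\pback m(a_i)$ for every ordinal $i$, which I prove by transfinite induction. For the base case, $a_0=\bot_A$ is carried to $\bot_B=b_0$: pulling back $0\monoto A$ along $m$ yields the initial object because $0$ is strict (\autoref{D:constr}\ref{D:constr:2}). For the successor step I use that $m$ is monic and $F$ preserves finite intersections, so \autoref{C:pback}\ref{C:pback:1} gives $\nexttime_\beta\cdot\pback m=\pback m\cdot\nexttime_\alpha$; hence $\pback m(a_{i+1})=\pback m(\nexttime_\alpha a_i)=\nexttime_\beta(\pback m(a_i))=\nexttime_\beta b_i=b_{i+1}$. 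For a limit ordinal $j$ I invoke universal smoothness: by \autoref{R:constr}\ref{R:constr:3} the operator $\pback m$ preserves joins of chains, so $\pback m(a_j)=\pback m(\bigvee_{i<j}a_i)=\bigvee_{i<j}\pback m(a_i)=\bigvee_{i<j}b_i=b_j$.

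Once the lemma is in place the conclusion is immediate. Because $(A,\alpha)$ is well-founded, $\id_A$ is the only fixed point of $\nexttime_\alpha$, so the least fixed point $a^*$ equals $\id_A$; thus at the stabilizing ordinal $\theta$ we have $a_\theta=\id_A$. Applying the lemma, $b_\theta=\pback m(\id_A)=\id_B$, since pulling back an isomorphism yields an isomorphism. As the chain $(b_i)$ is increasing with $b_i\le b^*\le\id_B$ for all $i$, this caps the chain and forces $b^*=\id_B$, as desired.

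The main obstacle is getting the limit case right: it is exactly here that the frame hypothesis of \autoref{P:wfsub} is traded for universal smoothness, and the whole argument hinges on \autoref{R:constr}\ref{R:constr:3}, namely that pulling back along $m$ commutes with the transfinite colimits forming the joins $a_j$. I also need to keep the two chains indexed by the same ordinals and to note that the stabilization ordinal $\theta$ for $a^*$ already suffices, which it does since $b_\theta=\id_B$ immediately tops out the $b$-chain.
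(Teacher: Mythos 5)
Your proof is correct and follows essentially the same route as the paper's: both establish $b_i=\pback m(a_i)$ by transfinite induction, using strictness of $0$ at the base, \autoref{C:pback}\ref{C:pback:1} at successors, and \autoref{R:constr}\ref{R:constr:3} at limits, then conclude from the stabilization of the $a$-chain at $\id_A$. The only (harmless) difference is that you spell out the final step via $b^*=\id_B$ and \autoref{R:fixed}\ref{R:fixed:2}, where the paper simply notes that $b_\lambda$ is invertible.
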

\begin{proof}
  Let $\alpha\colon A\to FA$ be well-founded. Recall the subobjects
  $a_i\colon A_i \monoto A$ from \autoref{C:wf-part}. Let 
  \[
    \xymatrix{
      B \ar[r]^-{\beta} \ar@{ >->}[d]_m 
      &
      FB
      \ar@{ >->}[d]^{Fm}
      \\
      A \ar[r]_-\alpha & FA
    }
  \]
  be a subcoalgebra and denote by $b_i\colon B_i \monoto B$ the
  subobjects of $B$ provided by \autoref{C:wf-part}). There is an
  ordinal $\lambda$ such that $a_\lambda$ is invertible, and we shall
  prove that $b_\lambda$ is also invertible; thus, $(B,\beta)$ is
  well-founded. It is sufficient to prove by transfinite induction
  that the following squares are pullbacks, for suitable monomorphisms $m_i$:
  \[
    \xymatrix{
      B_i \ar[r]^-{b_i} \ar@{ >->}[d]_{m_i}  
      &
      B
      \ar@{ >->}[d]^{m}
      \\
      A_i \ar[r]_-{a_i} & A
    }
  \]
  In other words we prove that for every $i$ we have
  \[
    b_i = \pback m (a_i).
  \]
  For $i = 0$, the statement $b_0 = \pback m (a_0)$ means that the
  square below is a pullback:
  \[
    \xymatrix{
      0 \ar[r]^-{b_0} \ar@{=}[d]
      &
      B
      \ar@{ >->}[d]^{m}
      \\
      0 \ar[r]_-{a_i} & A,
    }
  \]
  which is trivial since $0$ is a strict initial object (see
  \autoref{R:constr}\ref{R:constr:1}).

  For the isolated step we use the induction hypothesis and
  \autoref{C:pback}\ref{C:pback:1} to obtain:
  \[
    b_{i+1}
    =
    \nexttime_\beta(b_i)
    =
    \nexttime_\beta(\pback m (a_i))
    =
    \pback m (\nexttime_\alpha(a_i))
    =
    \pback m (a_{i+1}).
  \]
  For a limit ordinal $j$, we use \autoref{R:constr}\ref{R:constr:3} to obtain
  \[
    b_j
    =
    \bigvee_{i<j} b_i
    =
    \bigvee_{i<j} \pback m (a_i)
    =
    \pback m \big(\bigvee_{i < j} a_i\big)
    =
    \pback m (a_j).
    \tag*{\qed}
  \]
\end{proof}

\section{The General Recursion Theorem}
\label{S:grt}

The main consequence of well-foundedness is parametric
recursivity. This is Taylor's General Recursion
Theorem~\cite[Theorem~6.3.13]{taylor2}. Taylor assumed that $F$
preserves inverse images.  We present a new proof for which it is
sufficient that $F$ preserves monomorphisms, assuming those are
\smooth. In the next section, we discuss the converse implication
in \autoref{T:rec-wf:1} and \autoref{T:rec-wf:2}.
\begin{rem}
  Recall from \autoref{R:ini} the initial-algebra chain for $F$. 
  If $\A$ has \smooth monomorphisms and $F$ preserves
  monomorphisms, then all $w_{i,j}$ in the initial-algebra chain are
  monic. This follows from an easy transfinite induction.
\end{rem}
\begin{theorem}[General Recursion Theorem]  
  \label{T:wf-prec}
  Let $\A$ be a complete and wellpowered category with \smooth
  monomorphisms. For $F\colon \A \to \A$ preserving
  monomorphisms, every well-founded coalgebra is
  parametrically recursive. 
\end{theorem}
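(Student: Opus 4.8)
The plan is to build the required morphisms by transfinite recursion along the well-founded chain $a_i\colon A_i\monoto A$ of \autoref{C:wf-part}. Since $(A,\alpha)$ is well-founded, the only fixed point of $\nexttime$ is $\id_A$; as the least fixed point $a^*$ is in particular a fixed point, we get $a^*=\id_A$, so there is an ordinal $\lambda$ for which $a_\lambda$ is an isomorphism. The guiding idea is that, via the defining pullback of $A_{i+1}=\nexttime A_i$, the structure $\alpha$ restricts to a morphism $\alpha(a_i)\colon A_{i+1}\to FA_i$, and this lets one solve the coalgebra-to-algebra equation ``one level at a time'': a solution on $A_i$ determines one on $A_{i+1}$ by a single application of the algebra structure.

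Fix a morphism $e\colon FX\times A\to X$. First I would define $f_i\colon A_i\to X$ by transfinite recursion: $f_0$ is the unique morphism out of the initial object $A_0=0$; at successor stages put
\[
f_{i+1}=\big(A_{i+1}\xrightarrow{\langle\alpha(a_i),a_{i+1}\rangle}FA_i\times A\xrightarrow{Ff_i\times\id_A}FX\times A\xrightarrow{e}X\big);
\]
and at a limit $j$ let $f_j$ be the morphism induced by the cocone $(f_i)_{i<j}$ on the colimit $A_j=\colim_{i<j}A_i$, which exists and is formed by monomorphisms because monomorphisms are \smooth\ (see \autoref{R:constr}). Keeping the parameter map $a_{i+1}\colon A_{i+1}\to A$ in the second component is what ensures we solve the \emph{parametric} recursion equation, with the parameter ranging over the original object $A$ throughout rather than over the pieces $A_i$.

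The crux, and the main obstacle, is to prove that the $f_i$ are compatible, i.e. $f_j\circ a_{ij}=f_i$ for all $i\le j$; this is exactly what makes $f_j$ well defined at limit stages and what lets the construction close up at $\lambda$. I would establish it by transfinite induction on $j$, the essential tool being commutativity of the top square of diagram~\eqref{521}, namely $\alpha(a_j)\circ a_{i+1,j+1}=Fa_{ij}\circ\alpha(a_i)$. Combining this with $a_{j+1}\circ a_{i+1,j+1}=a_{i+1}$ and the inductive identity $Ff_j\circ Fa_{ij}=F(f_j\circ a_{ij})=Ff_i$, a direct computation yields $f_{j+1}\circ a_{i+1,j+1}=f_{i+1}$; the successor case of compatibility reduces to this after composing with connecting maps, and the limit case follows by checking components against the colimit cocone. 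This induction is routine but delicate, since its successor and limit subcases are interleaved; preservation of monomorphisms by $F$ is used precisely to form the pullbacks in~\eqref{521}.

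Granting compatibility, existence is immediate: at the stabilization ordinal we may identify $A_\lambda$ with $A$ and $a_\lambda$ with $\id_A$, whence $a_{\lambda+1}=a_\lambda$ and $\alpha(a_\lambda)=\alpha$, so that $f_{\lambda+1}=e\circ(Ff_\lambda\times\id_A)\circ\langle\alpha,\id_A\rangle$; compatibility gives $f_{\lambda+1}=f_\lambda$, and therefore $e^\dag:=f_\lambda$ satisfies the parametric recursion square~\eqref{eq:prec}. Uniqueness is then a separate and easier transfinite induction: if $g\colon A\to X$ satisfies the same equation, then $g\circ a_i=f_i$ for every $i$, the successor step using only the defining pullback $\alpha\circ a_{i+1}=Fa_i\circ\alpha(a_i)$ of $A_{i+1}$ together with the recursion equation for $g$, and the limit step using the colimit property. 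Taking $i=\lambda$ gives $g\circ a_\lambda=f_\lambda$, hence $g=e^\dag$, as required.
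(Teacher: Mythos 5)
Your proof is correct, but it is organized rather differently from the paper's. The paper factors the construction through the initial-algebra chain $W_i = F^i 0$: it defines morphisms $h_i\colon A_i\to W_i$ by $h_{i+1}=Fh_i\o\alpha(a_i)$, proves the naturality squares~\eqref{eq-shw}, invokes the recursiveness of the coalgebras $(W_i,w_{i,i+1})$ (\autoref{E:reco}) to obtain $f_i\colon W_i\to X$, and takes $f_i\o h_i$ at the stabilization ordinal; parametric recursiveness is then deduced a posteriori by applying the plain statement to the functor $F(-)\times A$ and the coalgebra $\pair{\alpha,\id_A}$, whose next time operator coincides with that of $(A,\alpha)$. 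You instead build the solution morphisms $f_i\colon A_i\to X$ directly by the fused recursion $f_{i+1}=e\o(Ff_i\times\id_A)\o\pair{\alpha(a_i),a_{i+1}}$, threading the parameter through the second component, so that your single compatibility statement $f_j\o a_{ij}=f_i$ does the work of both the paper's naturality squares and its coalgebra-to-algebra squares, and the parametric equation is obtained in one pass. Both arguments rest on the same ingredients---the chain of \autoref{C:wf-part}, \smooth{}ness of monomorphisms at limit ordinals, and the top square of~\eqref{521} driving an interleaved double transfinite induction---so neither is logically deeper. Yours is more self-contained: it needs neither the initial-algebra chain, nor the closure properties of recursive coalgebras used to see that the $(W_i,w_{i,i+1})$ are recursive, nor the check that $F(-)\times A$ preserves monomorphisms. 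The paper's is more modular, and its intermediate data (the morphisms $h_i\colon A_i\to F^i0$ and the reduction to $F(-)\times A$) are reusable elsewhere. The only point where you are terse is the compatibility induction for $i=0$ and for limit $i$, which need initiality of $0$ and the colimit property of $A_i$ respectively; since you flag the interleaved case analysis explicitly and the paper is equally brief there, this is not a gap.
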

\begin{proof}
  \begin{enumerate}
  \item Given an arbitrary coalgebra $(A,\alpha)$ we use the 
    chain of subobjects $a_i\colon A_i \monoto A$ from
    \autoref{C:wf-part} \footnote{One might object to this use of transfinite
   recursion, since Theorem~\ref{T:wf-prec} itself could be used as a 
   justification for transfinite recursion.  Let us emphasize that we are not
    presenting Theorem~\ref{T:wf-prec}
   as a foundational contribution.   We are building on the classical theory of transfinite 
   recursion, extending that result by categorifying it.} We also have the initial-algebra chain
    $W_i = F^i 0$ with connecting morphisms $w_{ji}$ (see
    \autoref{R:ini}). We obtain a natural transformation
    \[
      h_i\colon A_i \to W_i\qquad i \in \Ord,
    \]
    by transfinite recursion as follows: $h_0 =\id_0$, and given
    $h_i\colon A_i \to W_i$, let
    \[
      h_{i+1} = (A_{i+1}
      \xrightarrow{\alpha(a_i)} FA_i
      \xrightarrow{Fh_i} FW_i = W_{i+1}).
    \]
    Finally, for a limit ordinal $i$, $h_i$ is uniquely determined by the
    universal property of the colimit $A_i$.

    We must verify that for $j\leq i$ the naturality square below commutes:
    \begin{equation}\label{eq-shw}
      \vcenter{
        \xymatrix{
          A_j \ar[r]^-{h_j} \ar@{ >->}[d]_{a_{ji}} & W_j \ar@{ >->}[d]^{w_{ji}} \\
          A_i \ar[r]_-{h_i} & W_i
        }
      }
    \end{equation}
    The proof is by transfinite induction on $i$. The base case for
    $i=0$ is trivial, and the step when $i$ is a limit ordinal follows
    from the fact that we use colimits to define both $A_i$ and
    $W_i$. We are left with the successor step $i+1$.  Here we
    again use transfinite induction on $j$.  The verification amounts to
    assuming~\eqref{eq-shw} for $i$ and $j$ and showing the same
    equation for $j+1$ and $i+1$.  For this, consider the diagram
    below:
    \begin{equation}\label{2traps}
      \vcenter{
        \xymatrix@C+1pc{
          A_{j+1} \ar[dd]_-{h_{j+1}} \ar[rrr]^{a_{j+1,i+1}} \ar[dr]^{\alpha(a_j)} 
          & & &
          A_{i+1}  \ar[dd]^-{h_{i+1}}   \ar[dl]_-{\alpha(a_i)}
          \\
          &
          FA_{j+1}\ar[dl]_-{F h_j}  \ar[r]^-{F a_{ji}}
          &
          FA_{i+1} \ar[dr]^-{F h_i} 
          \\
          \llap{$W_{j+1} =\ $}
          F^{j+1}0 \ar[rrr]_-{w_{j+1,i+1}}
          & & &
          F^{i+1}0\rlap{$\ = W_{i+1}$}
        }}
    \end{equation}
    The region at the top is also the top square of~\eqref{521}, the
    triangles commute by the definition of $(h_i)$, and the region at the bottom
    commutes by the induction hypothesis and the fact that
    $F w_{ji} = w_{j+1,i+1}$.  Thus the outside commutes, as desired.
    
  \item\label{T:wf-prec:2} Now suppose that $(A,\alpha)$ is a
    well-founded coalgebra. We prove that $(A,\alpha)$ is
    recursive, i.e.~for every algebra $e\colon FX \to X$ we present a
    coalgebra-to-algebra morphism $\sol e$ and prove that it is
    unique.

    For every ordinal $i$, the coalgebra $w_{i,i+1}\colon W_i \to FW_i$
    is recursive (see \autoref{E:reco}\ref{E:reco:6}). Hence we have a
    morphism $f_i\colon W_i\to X$ such that the square on the bottom
    below commutes:
    \begin{equation}\label{D:some}
      \vcenter{
        \xymatrix@C+.5pc{
          A \ar[r]^-{\alpha(a_i) = \alpha} \ar[d]_-{h_i}
          \ar[dr]^-{h_{i+1}}
          &
          FA  \ar[d]^-{Fh_i}
          \\
          W_i  \ar[d]_-{f_i} \ar[r]_-{w_{i,i+1}}
          &
          FW_i  \ar[d]^-{F f_i}
          \\
          X & \ar[l]^-{e} FX 
        }}
    \end{equation}
    Since $(A,\alpha)$ is well-founded, there exists an ordinal $i$
    such that $A = A_i = A_{i+1}$ (see \autoref{C:wf-part}). Then
    we have $\alpha(a_i) = \alpha$, so that the upper triangle commutes
    by definition of $h_{i+1}$. Moreover, the lower triangle is an instance
    of~\eqref{eq-shw} using the fact that $a_{i, i+1}
    =\id$. Thus the outside of the diagram commutes, and so
    $f_i\o h_i$ is the desired coalgebra-to-algebra morphism.

  \item For the uniqueness, suppose that $\sol e$ is any
    coalgebra-to-algebra morphism from $\alpha$ to $e$, i.e.~in the
    diagram below the lower square commutes:
    \begin{equation}\label{D:uniq}
      \vcenter{
        \xymatrix{
          A_{i+1}
          \ar[r]^-{\alpha(a_i)}
          \ar@{ >->}[d]_-{a_{i+1}}
          &
          FA_i
          \ar@{ >->}[d]^-{Fa_i}
          \\
          A
          \ar[d]_-{\sol e}
          \ar[r]^-{\alpha}
          &
          FA
          \ar[d]^-{F\sol e}
          \\
          X & \ar[l]^-{e} FX 
        }}
    \end{equation}
    Moreover, the upper one is the square defining $\alpha(a_i)$ (see
    \autoref{D-tilde}). 

    We verify by induction on $j$ that
    $e^\dag \o a_j = f_j \o h_j \o a_j$. Then for the above ordinal
    $i$ with $a_i = \id_A$, we have $e^\dag = f_i \o h_i$ as
    desired. For the base case $j=0$, the equation trivially holds,
    and for limit ordinals $j$ we use the universal property of the
    colimit $A_j$. For the successor step we use that~\eqref{D:uniq}
    and~\eqref{D:some} commute (with $j$ substituted for $i$). By
    pasting~\eqref{D:some} and the upper square
    of~\eqref{D:uniq} we obtain
    \begin{equation}\label{eq:paste}
      e \o F(f_j\o h_j \o a_j) \o \alpha(a_j) = f_j \o h_j\o a_{j+1}
    \end{equation}
    This yields the desired equality:
    \begin{align*}
      \sol e \o a_{j+1} &= e \o F(\sol e \o a_j) \o \alpha(a_j)
      & \text{(by~\eqref{D:uniq})} \\
      &= e \o F(f_j\o h_j \o a_j) \o \alpha(a_j)
      & \text{(by induction hypothesis)}
      \\
      &= f_j \o h_j\o a_{j+1}
      &\text{(by~\eqref{eq:paste}).}
    \end{align*}

  \item Finally, we prove that the coalgebra $(A,\alpha)$ is a
    parametrically recursive. 
    
    Consider the
    coalgebra $\pair{\alpha,\id_A}\colon A \to FA \times A$ for $F(-)
    \times A$. This functor preserves monomorphisms since $F$ does
    and monomorphisms are closed under products. The next
    time operator $\nexttime$ on $\Sub(A)$ is the same for both coalgebras since the
    square~\eqref{rdj} is a pullback if and only if the square
    below is one:
    \[
      \begin{tikzcd}[column sep = 20mm]
        \nexttime S
        \pullbackangle{-45}
        \arrow{r}{\pair{\alpha(s),\nexttime s}}
        \arrow[>->]{d}[swap]{\nexttime s}
        &
        FS \times A
        \arrow[>->]{d}{Fs \times A}
        \\
        A
        \arrow{r}{\pair{\alpha, A}}
        &
        FA \times A
      \end{tikzcd}
    \]
    Since $\id_A$ is the unique fixed point of $\nexttime$ w.r.t.~$F$,
    it is also the unique fixed point of $\nexttime$
    w.r.t.~$F(-) \times A$. Thus, $(A,\pair{\alpha,\id_A})$ is a
    well-founded coalgebra for $F(-)\times A$. By
    point~\ref{T:wf-prec:2}, it is thus recursive for $F(-) \times
    A$. This states equivalently that $(A,\alpha)$ is a parametrically
    recursive coalgebra for $F$.\qed
  \end{enumerate}
\end{proof}
\begin{corollary}\label{C:wf-prec}
  For every endofunctor on $\Set$ or $\KVec$ (vector spaces and linear
  maps), every well-founded coalgebra is parametrically recursive.
\end{corollary}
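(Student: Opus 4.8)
The plan is to obtain this corollary from the General Recursion Theorem (\autoref{T:wf-prec}) by verifying its hypotheses in each category. Both $\Set$ and $\KVec$ are complete and well-powered, and both have \smooth monomorphisms: $\Set$ even has universally \smooth monomorphisms (the first item of \autoref{E:uconstr}), while $\KVec$ has \smooth monomorphisms by \autoref{E:uconstr}\ref{E:uconstr:2}. The only hypothesis that is not automatic is preservation of monomorphisms by $F$, and arranging this is where all the work lies.

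For $\KVec$ there is in fact nothing to arrange. Every monomorphism $m\colon U \monoto V$ of vector spaces is \emph{split}: its image admits a complement in $V$, so $m$ has a linear retraction. Since any functor preserves split monomorphisms, \emph{every} endofunctor on $\KVec$ preserves monomorphisms, and \autoref{T:wf-prec} applies directly.

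For $\Set$ a functor need not preserve the inclusion $\emptyset \subto X$, so I would pass to the Trnkov\'a hull $\bar F$ (\autoref{P:Tr}), which agrees with $F$ on nonempty sets and nonempty maps and preserves finite intersections, hence monomorphisms. Empty coalgebras are trivially parametrically recursive, so let $(A,\alpha)$ be a nonempty well-founded coalgebra for $F$. By \autoref{L:Trn} it is well-founded for $\bar F$, and since $\bar F$ preserves monomorphisms, \autoref{T:wf-prec} shows that $(A,\alpha)$ is parametrically recursive for $\bar F$. What remains is to transport parametric recursivity back from $\bar F$ to $F$.

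I expect this last transport to be the only genuinely delicate point. Given $e\colon FX \times A \to X$, I must produce a unique $e^\dag\colon A \to X$ with $e^\dag = e \circ (Fe^\dag \times \id_A) \circ \pair{\alpha,\id_A}$. If $X = \emptyset$ the requirement is vacuous: since a nonempty well-founded coalgebra exists, the contrapositive of \autoref{E-well-founded}\ref{E-well-founded:2} gives $F\emptyset \neq \emptyset$, so $FX \times A = F\emptyset \times A$ is nonempty and no map into $\emptyset$ exists. If $X \neq \emptyset$, then $FX = \bar F X$ and $FA = \bar F A$ because $F$ and $\bar F$ agree on nonempty sets, and every candidate solution $A \to X$ is a map between nonempty sets, on which $F$ and $\bar F$ agree as well; thus the $F$-recursion equation for $e$ coincides verbatim with the $\bar F$-recursion equation. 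Existence and uniqueness of $e^\dag$ then follow from parametric recursivity of $(A,\alpha)$ for $\bar F$, finishing the proof.
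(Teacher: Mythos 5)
Your proof is correct and follows essentially the same route as the paper: for $\KVec$ the paper likewise observes that monomorphisms split and are hence preserved by every endofunctor, and for $\Set$ it likewise passes to the Trnkov\'a hull via \autoref{P:Tr} and \autoref{L:Trn} before applying \autoref{T:wf-prec}. The only difference is that you spell out the transport of parametric recursivity from $\bar F$ back to $F$ (including the vacuous case $X=\emptyset$), a step the paper leaves implicit with the remark that $F$ and $\bar F$ have the same nonempty coalgebras; your elaboration is sound.
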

\begin{proof}
  For $\Set$, we apply \autoref{T:wf-prec} to the Trnkov\'a hull
  $\bar F$ (see \autoref{P:Tr}), noting that $F$ and $\bar F$ have the
  same (non-empty) coalgebras. By \autoref{L:Trn} the desired result follows.
  For $\KVec$, observe that monomorphisms split and are therefore
  preserved by every endofunctor $F$.
  \qed
\end{proof}
\begin{expl}\label{E:wf-prec}
  For the set functor $FX = X \times X  + 1$ the coalgebra
  $(\Nat,\gamma)$ from \autoref{E:cangr}\ref{E:cangr:3} is
  well-founded. Hence it is parametrically recursive.

  Similarly, we saw that for $FX = 1 + A \times X \times X$ the
  coalgebra $(A,\qsplit)$ from \autoref{E:prec}\ref{E:prec:3} is
  well-founded, and therefore it is  (parametrically) recursive. 
\end{expl}
\begin{expl}
  Well-founded coalgebras need not be recursive when $F$ does not
  preserve monomorphisms.  We take $\A$ to be the category of
  \emph{sets with a predicate}, i.e.~pairs $(X,A)$, where
  $A\subseteq X$.  Morphisms $f\colon (X,A) \to (Y,B)$ satisfy
  $f[A]\subseteq B$. Denote by $\mathbb{1}$ the terminal object
  $(1,1)$.  We define an endofunctor $F$ by
  $F(X,\emptyset) = (X+1,\emptyset)$, and for $A\neq\emptyset$,
  $F(X,A) = \mathbb{1}$.  For a morphism $f\colon (X,A)\to (Y,B)$, put
  $F = f + \id$ if $A = \emptyset$; if $A \neq \emptyset$, then
  also $B \neq \emptyset$ and $Ff$ is
  $\id\colon \mathbb{1}\to \mathbb{1}$.

  The terminal coalgebra is $\id\colon \mathbb{1}\to \mathbb{1}$, and
  it is easy to see that it is well-founded. But it is not
  recursive: there are no coalgebra-to-algebra morphisms into an
  algebra of the form $F(X,\emptyset) \to (X,\emptyset)$.
\end{expl}

We close with a general fact on well-founded parts of \emph{fixed
  points} (i.e.~(co)al\-ge\-bras whose structure is invertible). The
following result generalizes~\cite[Cor.~3.4]{JeanninEA17}, and it also
appeared before for functors preserving finite
intersections~\cite[Theorem~8.16 and Remark~8.18]{amm18}.
Here we lift the latter assumption:
\begin{theorem}
  Let $\A$ be a complete and well-powered category with \smooth
  monomorphisms. For $F$ preserving monomorphisms, the well-founded part of
  every fixed point is an initial algebra. In particular, the only
  well-founded fixed point is the initial algebra.
\end{theorem}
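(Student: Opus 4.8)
The plan is to show that the well-founded part of a fixed point is again a fixed point, and then to read off the universal property of an initial algebra directly from its recursiveness. So let $(A,\alpha)$ be a fixed point, meaning $\alpha\colon A \to FA$ is an isomorphism, and let $a^*\colon (A^*,\alpha^*) \monoto (A,\alpha)$ be its well-founded part (\autoref{P:wfdpart}); recall that $a^*$ is a fixed point of $\nexttime$, so by \autoref{L:next}\ref{L:next:2} it is a \emph{cartesian} subcoalgebra.

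First I would show that $\alpha^*$ is itself an isomorphism. Since $a^*$ is cartesian, the square
\[
  \begin{tikzcd}
    A^* \arrow{r}{\alpha^*} \arrow[>->]{d}[swap]{a^*} & FA^* \arrow[>->]{d}{Fa^*} \\
    A \arrow{r}{\alpha} & FA
  \end{tikzcd}
\]
is a pullback. As $\alpha$ is an isomorphism, its parallel edge $\alpha^*$ in this pullback is an isomorphism too (pulling back along an isomorphism produces an isomorphism on the opposite side). Hence $(A^*,(\alpha^*)^{-1})$ is an $F$-algebra.

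It remains to see that this algebra is initial. Because $(A^*,\alpha^*)$ is well-founded and the hypotheses of the General Recursion Theorem (\autoref{T:wf-prec}) hold, $(A^*,\alpha^*)$ is recursive. Now the point is that invertibility of $\alpha^*$ turns coalgebra-to-algebra morphisms into algebra homomorphisms: for any algebra $e\colon FX \to X$, a morphism $h\colon A^* \to X$ satisfies $h = e \cdot Fh \cdot \alpha^*$ if and only if $h \cdot (\alpha^*)^{-1} = e \cdot Fh$, i.e.\ if and only if $h$ is an $F$-algebra homomorphism from $(A^*,(\alpha^*)^{-1})$ to $(X,e)$. Recursiveness provides a unique coalgebra-to-algebra morphism for every $(X,e)$, so there is a unique algebra homomorphism into every algebra, which is exactly initiality of $(A^*,(\alpha^*)^{-1})$. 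For the final assertion, if the fixed point $(A,\alpha)$ is already well-founded, then $\id_A$ is the only fixed point of $\nexttime$, so the least fixed point $a^*$ equals $\id_A$; thus $A^* = A$ and the fixed point is itself the initial algebra.

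The main obstacle is the first step, namely extracting invertibility of $\alpha^*$ from the cartesian square together with the invertibility of $\alpha$; once this is in place, the rest is just the dictionary between coalgebra-to-algebra morphisms and algebra homomorphisms afforded by Lambek's Lemma. It is worth confirming that both the well-founded part and the General Recursion Theorem are available under the stated hypotheses ($\A$ complete and well-powered with \smooth monomorphisms and $F$ preserving monomorphisms), which indeed they are.
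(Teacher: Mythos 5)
Your proof is correct, and it takes a genuinely shorter route than the paper's. Both arguments share the same opening move: the well-founded part $a^*\colon (A^*,\alpha^*) \monoto (A,\alpha)$ is a cartesian subcoalgebra, so invertibility of $\alpha$ pulls back to invertibility of $\alpha^*$; and both invoke the General Recursion Theorem to conclude that $(A^*,\alpha^*)$ is recursive. They diverge after that. The paper first establishes that $\mu F$ exists (via the initial-algebra chain and the pre-fixed point $\alpha^{-1}\colon FA \monoto A$), then builds a comparison $h\colon \mu F \to A^*$ by initiality and $k\colon A^*\to\mu F$ by recursiveness, shows $k\cdot h = \id$, and finally uses well-foundedness of $(A^*,\alpha^*)$ a second time -- in the form ``no proper cartesian subcoalgebra'' -- to conclude that the split mono $h$ is an isomorphism. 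You instead read off the universal property directly: since $\alpha^*$ is invertible, coalgebra-to-algebra morphisms out of $(A^*,\alpha^*)$ are exactly algebra homomorphisms out of $(A^*,(\alpha^*)^{-1})$, so recursiveness \emph{is} initiality. This avoids any appeal to a separately constructed $\mu F$ and to the second use of well-foundedness, and it even yields the existence of an initial algebra as a byproduct rather than a prerequisite; what the paper's version buys in exchange is an explicit identification of the well-founded part with the colimit $F^\lambda 0$ from the initial-algebra chain, which is not needed for the statement since initial algebras are unique up to isomorphism. Your treatment of the final clause (a well-founded fixed point has $a^* = \id_A$, hence equals its own well-founded part) matches what is needed and is correct.
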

\begin{proof}
  Let $\alpha\colon A \to FA$ be a fixed point of $F$. By
  \autoref{R:ini}\ref{R:ini:2} we know that the initial algebra $(\mu
  F, \ini)$ exists. Now let $a^*\colon (A^*, \alpha^*) \monoto (A, \alpha)$ be
  the well-founded part of $A$ given in \autoref{P:wfdpart}. This is a
  cartesian subcoalgebra, i.e.~we have a pullback square
  \[
    \begin{tikzcd}
      A^* \arrow[>->]{d}[swap]{a^*}
      \arrow{r}{\alpha^*}
      \pullbackangle{-45}
      &
      FA^*
      \arrow[>->]{d}{Fa^*}
      \\
      A
      \arrow{r}{\alpha}
      &
      FA
    \end{tikzcd}
  \]
  Since $\alpha$ is an isomorphism, so is $\alpha^*$. 

  By initiality, we have an algebra homomorphism $h\colon (\mu F,
  \ini) \to (A^*, (\alpha^*)^{-1})$, i.e.~a coalgebra homomorphism
  \[
    \begin{tikzcd}
      \mu F \arrow{r}{\ini^{-1}} \arrow{d}[swap]{h}
      \pullbackangle{-45}
      &
      F(\mu F) \arrow{d}{Fh}
      \\
      A^* \arrow{r}{\alpha^*} &
      FA^*      
    \end{tikzcd}
  \]
  Since both horizontal morphisms are invertible, this square is a
  pullback. By \autoref{T:wf-prec}, $(A^*, \alpha^*)$ is
  recursive. Thus, we have a coalgebra homomorphism
  $k\colon (A^*,\alpha^*) \to (\mu F, \ini^{-1})$ by
  \autoref{cor:cuv}. By the universal property of $\mu F$, we obtain
  $k \cdot h = \id_{\mu F}$, whence $h$ is a split monomorphism. Thus
  the above square exhibits $(\mu F, \ini^{-1})$ as a cartesian
  subcoalgebra of $(A^*, \alpha^*)$. By
  \autoref{R:fixed}\ref{R:fixed:1}, we conclude that $h$ is an
  isomorphism.
  \qed
\end{proof}
\begin{expl}
  We illustrate that for a set functor $F$
  preserving monomorphisms, the well-founded part of the terminal coalgebra is
  the initial algebra.  Consider $FX = A\times X + 1$.  The terminal
  coalgebra is the set $A^\infty \cup A^*$ of finite and infinite
  sequences from the set $A$.  The initial algebra is $A^*$.  It is
  easy to check that $A^*$ is the well-founded part of
  $A^\infty \cup A^*$.
\end{expl}

\section{The Converse of the General Recursion Theorem}

We prove a converse to \autoref{T:wf-prec}:    
\( \text{``recursive $\implies$ well-founded''.}  \)  
 Related results appear in
Taylor~\cite{taylor3,taylor2}, Ad\'amek et al.~\cite{alm_rec} and
Jeannin et al.~\cite{JeanninEA17}.  

For this, one needs to assume more than preservation of finite intersections. In
fact, we will assume that $F$ preserves inverse images. But even this is not
enough. We additionally assume that either
\begin{enumerate}
\item The underlying category $\A$ has universally \smooth
  monomorphisms and the endofunctor $F$ has a pre-fixed point (see
  \autoref{R:ini}\ref{R:ini:2}). 
  \item The underlying category $\A$ has a subobject classifier.
\end{enumerate}

The first of these possible assumptions leads to
Theorem~\ref{T:rec-wf:1}, the second is a theorem of
Taylor~\cite{taylor3}. Finally, at the end of this section we prove
the above converse implication for every functor on vector spaces
preserving inverse images (see \autoref{T:recwfvec}). This last result is not
covered by the previous two results since $\KVec$ neither has
universally constructive monomorphims nor a subobject classifier.

\smnote{We need to assume existence of colimits of chains!}
\begin{theorem}\label{T:rec-wf:1}
  Let $\A$ be a complete and wellpowered category with universally
  \smooth monomorphisms, and suppose that $F\colon \A \to \A$ preserves
  inverse images and has a pre-fixed point. Then every recursive
  $F$-coalgebra is well-founded. 
\end{theorem}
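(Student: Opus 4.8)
The plan is to show that the well-founded part $c^*\colon (C^*,\gamma^*)\monoto (C,\gamma)$ of a recursive coalgebra $(C,\gamma)$ is all of $C$, i.e.\ $c^* = \id_C$; by \autoref{D:well-founded} and the construction of $c^*$ as the least fixed point of $\nexttime_\gamma$, this is precisely well-foundedness. First I would note that by \autoref{R:ini}\ref{R:ini:2} the initial algebra $(\mu F,\ini)$ exists, since $F$ has a pre-fixed point and universally \smooth monomorphisms are in particular \smooth; and that as a coalgebra $(\mu F,\ini^{-1})$ is well-founded by \autoref{E-well-founded}\ref{E-well-founded:1}. Since $(C,\gamma)$ is recursive, the unique coalgebra-to-algebra morphism into the algebra $(\mu F,\ini)$ is, because $\ini$ is invertible, a coalgebra homomorphism
\[
  h\colon (C,\gamma)\to (\mu F,\ini^{-1}).
\]

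The heart of the argument is that pulling back along $h$ transports the transfinite chain computing the well-founded part of $\mu F$ onto the one computing the well-founded part of $C$. Write $v_i\colon V_i\monoto \mu F$ for the chain of \autoref{C:wf-part} applied to $(\mu F,\ini^{-1})$, and $c_i\colon C_i\monoto C$ for the one applied to $(C,\gamma)$. I claim $\pback h(v_i) = c_i$ for all ordinals $i$, by transfinite induction. For $i=0$ both sides are $\bot$: since universal smoothness makes $0$ strict, $\pback h$ carries $\bot_{\mu F} = (0\monoto\mu F)$ to $\bot_C$ (see \autoref{R:constr}\ref{R:constr:1}). For the successor step I would use that $F$ preserves inverse images: by \autoref{C:pback}\ref{C:pback:2}, $\pback h\cdot \nexttime_{\ini^{-1}} = \nexttime_\gamma\cdot \pback h$, so
\[
  \pback h(v_{i+1}) = \pback h(\nexttime_{\ini^{-1}} v_i)
  = \nexttime_\gamma(\pback h(v_i)) = \nexttime_\gamma(c_i) = c_{i+1}.
\]
For a limit ordinal $j$ I would invoke that $\A$ has universally \smooth monomorphisms, so $\pback h$ preserves joins of chains (\autoref{R:constr}\ref{R:constr:3}); hence $\pback h(v_j) = \pback h(\bigvee_{i<j} v_i) = \bigvee_{i<j}\pback h(v_i) = \bigvee_{i<j} c_i = c_j$.

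Finally I would conclude as follows. Since $(\mu F,\ini^{-1})$ is well-founded, its chain stabilizes at $v_\mu = \id_{\mu F}$ for some ordinal $\mu$, and then $c_\mu = \pback h(\id_{\mu F}) = \id_C$, because the top subobject pulls back to the top subobject. As the chain $(c_i)$ is increasing and bounded above by $\id_C$, this forces the least fixed point $c^* = \bigvee_i c_i$ to equal $\id_C$, so $(C,\gamma)$ is well-founded. I expect the main obstacle to be making the two preservation properties of $\pback h$ align across the entire transfinite chain: commutation with $\nexttime$ at successor steps rests squarely on $F$ preserving inverse images (\autoref{C:pback}\ref{C:pback:2}), while preservation of the unions at limit steps is exactly where universal smoothness of monomorphisms is indispensable (\autoref{R:constr}\ref{R:constr:3}). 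Both hypotheses are used in an essential way, in line with the observation that inverse-image preservation cannot be lifted for this converse direction.
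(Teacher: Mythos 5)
Your proposal is correct and follows essentially the same route as the paper's proof: pull back a stabilizing transfinite chain on $(\mu F,\ini^{-1})$ along the unique coalgebra homomorphism $h$, with strictness of $0$ at the base, preservation of inverse images (via \autoref{C:pback}\ref{C:pback:2}) at successors, and universal smoothness at limits. The only cosmetic difference is that you pull back the well-founded-part chain of $(\mu F,\ini^{-1})$ where the paper pulls back the initial-algebra chain subobjects $w_{i\lambda}$ --- but these coincide, as the paper itself notes in \autoref{R:trans}.
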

\begin{proof}
  First observe that an intial algebra exists by
  \autoref{R:ini}\ref{R:ini:2}. 
  Now suppose that $(A,\alpha)$ is a recursive coalgebra. Then there
  exists a unique coalgebra homomorphism
  $h\colon (A,\alpha) \to (\mu F, \iota^{-1})$. Let us abbreviate
  $w_{i\lambda}$ by $c_i \colon F^i 0 \monoto \mu F$ and recall the
  subobjects $a_i\colon A_i \monoto A$ from \autoref{C:wf-part}. We
  are going to prove by transfinite induction that for every
  $i \in \Ord$, $a_i$ is the inverse image of $c_i$ under $h$, i.e.~we
  have a pullback square
  \begin{equation}\label{eq:invim}
    \begin{tikzcd}
      A_i
      \arrow[>->]{d}[swap]{a_i}
      \arrow{r}{h_i}
      \pullbackangle{-45}
      &
      W_i
      \arrow[>->]{d}{c_i}
      \\
      A\arrow{r}{h}
      &
      \mu F
    \end{tikzcd}
    \qquad\text{for some morphism $h_i\colon A_i \to W_i$};
  \end{equation}
  in symbols: $a_i = \pback h(c_i)$ for all ordinals $i$. Then it
  follows that $a_\lambda$ is an isomorphism, since so is
  $c_\lambda$, whence $(A,\alpha)$ is well-founded. In the base case
  $i =0$ the above square clearly is a pullback since $A_0 = W_0 = 0$ is a
  strict initial object (see \autoref{R:omega}\ref{R:omega:1}). 
  
  For the isolated step we compute the pullback of
  $c_{i+1}\colon W_{i+1} \to \mu F$ along $h$ using the following
  diagram:
  \[
    \begin{tikzcd}
      A_{i+1}
      \pullbackangle{-45}
      \arrow[>->]{d}[swap]{a_{i+1}}
      \arrow{r}{\alpha(a_i)}
      &
      FA_i
      \pullbackangle{-45}
      \arrow[>->]{d}[swap]{Fa_i}
      \arrow{r}{Fh_i}
      &
      FW_i
      \arrow[>->]{d}[swap]{Fc_i}
      \arrow{rd}{c_{i+1}}
      \\
      A
      \arrow{r}{\alpha}
      \arrow[shiftarr={yshift=-20pt}]{rrr}{h}
      &
      FA
      \arrow{r}{Fh}
      &
      F(\mu F)
      \arrow{r}{\ini}
      &
      \mu F
    \end{tikzcd}
  \]
  By the induction hypothesis and since $F$ preserves inverse
  images, the middle square above is a pullback. Since the structure
  map $\ini$ of the initial algebra is an isomorphism, it follows
  that the middle square pasted with the right-hand triangle is also
  a pullback. Finally, the left-hand square is a pullback by the
  definition of $a_{i+1}$. Thus, the outside of the above diagram is
  a pullback, as required.
  
  For a limit ordinal $j$, we know that $a_j = \bigvee_{i<j} a_i$
  and similarly, $c_j = \bigvee_{i< j} c_i$ since
  $W_j = \colim_{i<j}W_j$ and monomorphisms are \smooth (see
  \autoref{R:constr}\ref{R:constr:2}). Using
  \autoref{R:constr}\ref{R:constr:3} and the induction hypothesis we
  thus obtain
  \[
    \pback h (c_j)
    =
    \pback h \big(\bigvee_{i < j} c_i\big)
    =
    \bigvee_{i<j} \pback h (c_i)
    =
    \bigvee_{i<j} a_i
    =
    a_j.
    \tag*{\qed}
  \]
\end{proof}

\begin{corollary}\label{C:equiv}
  Let $\A$ and $F$ satisfy the assumptions of \autoref{T:rec-wf:1}.
  Then the following properties of a coalgebra are equivalent:
  \begin{enumerate}
  \item well-foundedness,
  \item parametric recursiveness,
  \item recursiveness,
  \item existence of a homomorphism into $(\mu F,
    \ini^{-1})$,
  \item existence of a homomorphism into a well-founded one.
  \end{enumerate}
\end{corollary}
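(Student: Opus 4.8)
The plan is to prove the five conditions equivalent by establishing the cycle $(1)\Rightarrow(2)\Rightarrow(3)\Rightarrow(4)\Rightarrow(5)\Rightarrow(1)$. Three of these implications are already available and hold for \emph{every} coalgebra under the standing hypotheses: $(1)\Rightarrow(2)$ is the General Recursion Theorem (\autoref{T:wf-prec}), $(2)\Rightarrow(3)$ holds because every parametrically recursive coalgebra is recursive (for an algebra $e\colon FX\to X$ put $e'=e\o\pi$ as explained after~\eqref{eq:prec}), and $(3)\Rightarrow(1)$ is \autoref{T:rec-wf:1}. Hence $(1)$, $(2)$, $(3)$ are already equivalent, and it remains to fold in $(4)$ and $(5)$.

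For $(3)\Rightarrow(4)$ I would first note that the initial algebra $(\mu F,\ini)$ exists by \autoref{R:ini}\ref{R:ini:2}. If $(A,\alpha)$ is recursive it admits a coalgebra-to-algebra morphism $h\colon A\to\mu F$, i.e.~$h=\ini\o Fh\o\alpha$; since $\ini$ is invertible by Lambek's Lemma this is exactly $\ini^{-1}\o h=Fh\o\alpha$, so $h$ is a coalgebra homomorphism into $(\mu F,\ini^{-1})$, giving $(4)$. Then $(4)\Rightarrow(5)$ is immediate, since $(\mu F,\ini^{-1})$ is itself well-founded (\autoref{E-well-founded}\ref{E-well-founded:1}), so a homomorphism into it is a homomorphism into a well-founded coalgebra.

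The crux is $(5)\Rightarrow(1)$, and the plan is to reduce it to the case already treated in \autoref{T:rec-wf:1} rather than repeat its transfinite induction. Let $f\colon(A,\alpha)\to(B,\beta)$ be a coalgebra homomorphism with $(B,\beta)$ well-founded. Applying the implications $(1)\Rightarrow(2)\Rightarrow(3)\Rightarrow(4)$ to $(B,\beta)$ produces a coalgebra homomorphism $g\colon(B,\beta)\to(\mu F,\ini^{-1})$; then $h:=g\o f$ is again a coalgebra homomorphism $(A,\alpha)\to(\mu F,\ini^{-1})$, so $(A,\alpha)$ satisfies $(4)$. Finally I would invoke the key observation that the proof of \autoref{T:rec-wf:1} uses recursiveness of its coalgebra \emph{only} to produce a coalgebra homomorphism into $(\mu F,\ini^{-1})$: from any such homomorphism $h$, the transfinite induction there yields $a_i=\pback h(c_i)$ for the chain $a_i\colon A_i\monoto A$ of \autoref{C:wf-part} and the initial-algebra chain $c_i=w_{i\lambda}$, whence $a_\lambda=\pback h(\id_{\mu F})=\id_A$ and $(A,\alpha)$ is well-founded. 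Thus $(4)\Rightarrow(1)$ is available verbatim, and the cycle closes. The only real obstacle is bookkeeping: checking that this induction indeed depends solely on the existence of $h$, on $F$ preserving inverse images (so \autoref{C:pback}\ref{C:pback:2} applies at successor steps), and on universally \smooth monomorphisms (so $\pback h$ preserves unions of chains at limit steps), all of which are part of the hypotheses of \autoref{T:rec-wf:1}.
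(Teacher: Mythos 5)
Your proposal is correct and follows essentially the same route as the paper: the cycle through $(1)\Rightarrow(2)\Rightarrow(3)\Rightarrow(4)\Rightarrow(5)$ with the same ingredients, including the key observation that the proof of \autoref{T:rec-wf:1} uses recursiveness only to obtain a coalgebra homomorphism into $(\mu F,\ini^{-1})$, so that it really establishes $(4)\Rightarrow(1)$. The one small divergence is in closing the cycle at $(5)$: the paper cites the external fact that $(\mu F,\ini^{-1})$ is the \emph{terminal} well-founded coalgebra to get $(5)\Rightarrow(4)$, whereas you obtain the needed homomorphism $(B,\beta)\to(\mu F,\ini^{-1})$ by applying the already-proven implications $(1)\Rightarrow\cdots\Rightarrow(4)$ to the well-founded codomain and composing --- a self-contained variant of the same idea that avoids the citation (only existence, not uniqueness, of that homomorphism is needed).
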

\begin{proof}
  We already know (1) $\Rightarrow$ (2) $\Rightarrow$ (3). Since $F$
  has an initial algebra (as proved in \autoref{T:rec-wf:1}), the
  implication (3) $\Rightarrow$ (4) follows from \autoref{cor:cuv}. In
  \autoref{T:rec-wf:1} we also proved (4) $\Rightarrow$ (1). The implication (4)
  $\Rightarrow$ (5) follows from
  \autoref{E-well-founded}\ref{E-well-founded:1}. Finally, it follows
  from~\cite[Remark~2.40]{amms} that
  $(\mu F, \ini^{-1})$ is a terminal well-founded coalgebra.  Thus,
  (5) $\Rightarrow$ (4), which completes the proof.\qed
\end{proof}
\takeout{
\begin{expl}\label{E:vec}
  The identity functor on $\KVec$, the category of vector spaces over
  the field $K$, has recursive coalgebras that are not
  well-founded. For example, $0\colon K \to K$ is a recursive
  coalgebra. Indeed, given an algebra $\alpha\colon A \to A$ the
  morphism $h = 0$ is obviously the unique morphism making the
  following square commutative:
  \[
    \begin{tikzcd}
      K
      \arrow{d}[swap]{h}
      \arrow{r}{0}
      &
      K
      \arrow{d}{h}
      \\
      A
      &
      A
      \arrow{l}{\alpha}
    \end{tikzcd}
  \]
  However, $(K,0)$ is not well-founded since it has $(0,0)$ as a
  cartesian subcoalgebra.

  This is a bit surprising since $\KVec$ ``almost'' has universally
  \smooth monomorphisms: it is just that the initial object is
  not strict. However, for all ordinals $\lambda \neq 0$ colimits of
  $\lambda$-chains are universal. 
\end{expl}}
\begin{expl}
  \begin{enumerate}
  \item The category of many-sorted sets satisfies the assumptions of
    \autoref{T:rec-wf:1}, and polynomial endofunctors on that category
    preserve inverse images. Thus, we obtain Jeannin et al.'s
    result~\cite[Thm.~3.3]{JeanninEA17} that (1)--(4) in
    \autoref{C:equiv} are equivalent as a special instance.

  \item Recall from \autoref{E:uconstr}\ref{E:uconstr:2} that vector
    spaces fail to have universally smooth monomorphisms. The
    implication (4) $\Rightarrow$ (3) in \autoref{C:equiv} does not
    hold for vector spaces. In fact, for the identity functor on
    $\KVec$ we have $\mu \Id = (0, \id)$. Hence, every coalgebra has a
    homomorphism into $\mu \Id$. However, not every coalgebra is
    recursive, e.g.~the coalgebra $(K, \id)$ admits many
    coalgebra-to-algebra morphisms to the algebra
    $(K,\id)$. Similarly, the implication (4) $\Rightarrow$ (1) does
    not hold. In fact, a coalgebra $\alpha\colon A \to A$ is
    well-founded iff for every $x \in A$ there exists a natural number
    $n$ with $\alpha^n(x) = 0$
    (cf.~\autoref{E-well-founded}\ref{E-well-founded:4}). Clearly, not
    every coalgebra satisfies this property. In contrast, see \autoref{C:vec}. 
  \end{enumerate}
\end{expl}
\begin{rem}\label{R:trans}
  Coming back to \autoref{R:omega}\ref{R:omega:3}, we see from the
  proof of \autoref{T:rec-wf:1} that in general one needs transfinite
  iteration to obtain the least fixed point $\nexttime$. Indeed, for
  $(A, \alpha) = (\mu F, \ini^{-1})$ we have $h = \id$ in
  \eqref{eq:invim} and therefore $a_i = c_i$. Now for
  $FX = X^\Nat + 1$ on $\Set$ we have that $\mu F$ is carried by the
  set of all (ordered) well-founded countably-branching
  trees. Furthermore, it is easy to show that $\mu F = W_{\omega_1}$,
  where $\omega_1$ is the first uncoutable ordinal, and each $W_i$,
  $i < \omega_1$ is a proper subset.
\end{rem}

In \autoref{T:rec-wf:1}, we assumed that the endofunctor has a
pre-fixed point.  For set functors, this assumption may be lifted.
Indeed, whenever a category has a subobject classifier, then every
recursive coalgebra is well-founded, as shown by
Taylor~\cite[Rem.~3.8]{taylor3}. We present this in all details for
convenience of the reader.

\begin{rem}
  \begin{enumerate}
  \item Let us recall the definition of a \emph{subobject classifier}
    originating in~\cite{Lawvere70} and prominent in topos theory.
    This is an object $\Omega$ with a subobject $t\colon 1 \to \Omega$
    such that for every subobject $b\colon B\monoto A$ there is a
    unique $\hat b\colon C\to \Omega$ such that the square below is a
    pullback:
    \begin{equation}\label{Om}
      \vcenter{
        \xymatrix{
          B \ar[r]^-{!} \ar@{ >->}[d]_-b  \pullbackcorner
          &
          1\ar@{ >->}[d]^{t}
          \\
          A \ar[r]_-{\hat b}
          &
          \Omega		
        }}
    \end{equation}
    By definition, every elementary topos has a subjobject classifier,
    in particular every category $\Set^\C$ with $\C$ small.

  \item $\Set$ has a subobject classifier given by $\Omega =
    \set{t,f}$ with the evident $t\colon 1 \subto \Omega$. Indeed, subsets
    $b\colon B \subto A$ are in one-to-one correspondence with
    characteristic maps $\hat b\colon B \to \Omega$.

  \item Our standing assumption that $\A$ is a complete and
    well-powered category is not needed for the next result: finite
    limits are sufficient.
  \end{enumerate}
\end{rem}
\begin{theorem}[{Taylor~\cite{taylor3}}]\label{T:rec-wf:2}
  Let $F$ be an endofunctor preserving inverse images on a finitely
  complete category with a subobject classifier. Then every recursive
  $F$-coalgebra is well-founded.
\end{theorem}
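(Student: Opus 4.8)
The plan is to use the subobject classifier to convert fixed points of the next time operator into coalgebra-to-algebra morphisms into a single canonical algebra, and then to read off well-foundedness from the uniqueness clause in the definition of recursiveness. First I would construct that algebra. Since $F$ preserves inverse images it preserves monomorphisms (the pullback of a mono along itself consists of identities, and $F$ preserves it), so from the mono $t\colon 1\monoto\Omega$ we obtain a mono $Ft\colon F1\monoto F\Omega$. Let $\theta\colon F\Omega\to\Omega$ be the classifying map of $Ft$, i.e.\ the square with top $F1\to 1$, sides $Ft$ and $t$, and bottom $\theta$ is a pullback. Then $(\Omega,\theta)$ is an $F$-algebra, and this is the only algebra the proof needs.

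The key step is to express the next time operator through $\theta$. For a subobject $s\colon S\monoto A$ with classifying map $\hat s\colon A\to\Omega$, I claim that $\widehat{\nexttime s}=\theta\cdot F\hat s\cdot\alpha$. To see this, recall that $s$ is the inverse image of $t$ under $\hat s$; applying $F$ and using that $F$ preserves inverse images shows that $Fs$ is the inverse image of $Ft$ under $F\hat s$. Pasting this pullback with the one defining $\theta$ exhibits $Fs$ as the inverse image of $t$ under $\theta\cdot F\hat s$, so $\theta\cdot F\hat s$ is the classifying map of $Fs$. Since $\nexttime s=\pback\alpha(Fs)$ is the inverse image of $Fs$ under $\alpha$, its classifying map is obtained by precomposition with $\alpha$, giving $\widehat{\nexttime s}=\theta\cdot F\hat s\cdot\alpha$, as claimed.

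With this identity in hand I would finish as follows. A morphism $h\colon A\to\Omega$ is a coalgebra-to-algebra morphism from $(A,\alpha)$ to $(\Omega,\theta)$ exactly when $h=\theta\cdot Fh\cdot\alpha$. Writing $h=\hat s$ for the unique subobject $s$ that $h$ classifies, the previous paragraph rewrites this condition as $\hat s=\widehat{\nexttime s}$, which by uniqueness of classifying maps is equivalent to $s=\nexttime s$. Hence $h\mapsto s$ is a bijection between coalgebra-to-algebra morphisms $(A,\alpha)\to(\Omega,\theta)$ and fixed points of $\nexttime$ in $\Sub(A)$. Since $(A,\alpha)$ is recursive there is exactly one such morphism, hence exactly one fixed point of $\nexttime$; as $\id_A$ is always a fixed point (indeed $\nexttime\id_A=\pback\alpha(F\id_A)=\id_A$), it must be the only one, and therefore $(A,\alpha)$ is well-founded.

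The main obstacle is the middle paragraph, and it is precisely there that the hypothesis that $F$ preserves inverse images is indispensable: without it, $Fs$ need not be the inverse image of $Ft$ under $F\hat s$, the identity $\widehat{\nexttime s}=\theta\cdot F\hat s\cdot\alpha$ breaks down, and the correspondence between fixed points of $\nexttime$ and coalgebra-to-algebra morphisms into $(\Omega,\theta)$ collapses. Everything else is bookkeeping with pasted pullbacks and the defining universal property of the subobject classifier $\Omega$.
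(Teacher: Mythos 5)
Your proposal is correct and follows essentially the same route as the paper's (Taylor's) proof: both use the algebra $(\Omega,\widehat{Ft})$ and the pasting of the fixed-point pullback with $F$ applied to the classifying square to show that the classifying map of a fixed point of $\nexttime$ is a coalgebra-to-algebra morphism, forced by recursiveness to equal $\widehat{\id_A}$. Your only refinement is to record the correspondence as a full bijection between fixed points and coalgebra-to-algebra morphisms into $(\Omega,\widehat{Ft})$, which the paper does not need but which is a clean way to package the same argument.
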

\begin{proof}
  Let $(A,\alpha)$ be a recursive coalgebra. Clearly, $\id_A$ is a
  fixed point of $\nexttime$, and we prove below that it is the unique
  one. Thus, $(A,\alpha)$ is well-founded.

  Let $b\colon B \to FB$ be any fixed point of $\nexttime$. Consider
  the following diagram:
  \[    
    \begin{tikzcd}
      B
      \pullbackangle{-45}
      \arrow{r}{\alpha(b)}
      \arrow[>->]{d}[swap]{b}
      \arrow[shiftarr = {yshift = 5mm}]{rrr}{!}
      &
      FB
      \pullbackangle{-45}
      \arrow{r}{F!}
      \arrow[>->]{d}[swap]{Fb}
      &
      F1
      \pullbackangle{-45}
      \arrow{r}{!}
      \arrow[>->]{d}[swap]{Ft}
      &
      1
      \arrow[>->]{d}{t}
      \\
      A
      \arrow{r}{\alpha}
      \arrow[shiftarr = {yshift = -5mm}]{rrr}{\widehat b}
      &
      FA
      \arrow{r}{F\widehat b}
      &
      F\Omega
      \arrow{r}{\widehat{Ft}}
      &
      \Omega
    \end{tikzcd}
  \]
  The square on the left is a pullback because $b = \nexttime b$.  The
  central square is $F$ applied to the pullback square (\ref{Om}) for
  $b\colon B\monoto A$.  The square on the right is the pullback square
  (\ref{Om}) for $Ft\colon F1 \to F\Omega$.  The upper morphism is
  $!\colon B\to 1$, and so the lower one is $\widehat{b}$.  Thus the
  outside rectangle is again a pullback.  In particular,
  \[
    \widehat{b} =  \widehat{Ft} \o F\widehat{b} \o \alpha.
  \]
  So we have a coalgebra-to-algebra morphism
  \[
    \xymatrix{
      A \ar[r]^-{\alpha} \ar[d]_-{\widehat{b}}  
      &
      FA\ar[d]^{F\widehat{b}}
      \\
      \Omega	
      & \ar[l]^-{\widehat{Ft}}
      F\Omega		
    }
  \]
  Since $(A,\alpha)$ is recursive, this means that $\widehat{b}$ is
  uniquely determined by $\alpha$, independent of which fixed point $b$
  of $\nexttime$ was used in our argument. Thus  $\widehat{b} =
  \widehat{\id_A}$, as desired.\qed
\end{proof}
\takeout{
\begin{corollary}
  For every set functor preserving inverse images, recursive coalgebras
  are well-founded. 
\end{corollary}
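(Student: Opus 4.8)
The plan is to derive this directly from Taylor's Theorem~\ref{T:rec-wf:2}, which already establishes ``recursive $\Rightarrow$ well-founded'' for any endofunctor preserving inverse images on a finitely complete category equipped with a subobject classifier. Since $\Set$ is complete (hence finitely complete) and carries the subobject classifier $\Omega = \set{t,f}$ with $t\colon 1 \subto \Omega$ (as recalled in the remark preceding that theorem), the corollary is just the instantiation $\A = \Set$. So the only thing I would need to check is that the hypotheses of Theorem~\ref{T:rec-wf:2} hold for an arbitrary set functor $F$ preserving inverse images.

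The single hypothesis on $F$ is preservation of inverse images, which is exactly what we assume. I would first note that this already subsumes the standing Assumption~\ref{A:wfd} that $F$ preserve monomorphisms: a morphism $m\colon B \to A$ is monic iff the square with both upper/left legs $\id_B$ and both lower/right legs $m$ is a pullback (triviality of the kernel pair), and this square is itself an inverse image, namely that of the subobject $m$ along $m$. Applying the inverse-image-preserving $F$ therefore shows that $Fm$ is again monic, so preservation of monomorphisms comes for free; and in any case for $\Set$ this condition may be dropped (see the comment after Assumption~\ref{A:wfd}). Thus no separate verification is required.

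With the hypotheses in place, Theorem~\ref{T:rec-wf:2} applies verbatim to $(\Set, F)$ and yields that every recursive $F$-coalgebra is well-founded. I expect no real obstacle here: all the work is done in Theorem~\ref{T:rec-wf:2}, and the corollary is a pure instantiation. The only point worth a sentence is the role of the subobject classifier, namely that a characteristic map $\hat b\colon A \to \Omega$ exists for every subset $b\colon B \subto A$, which is precisely what makes the three-pullback pasting argument in the proof of Theorem~\ref{T:rec-wf:2} available. Finally, I would record the intended payoff: combining this corollary with the General Recursion Theorem in the form of \autoref{C:wf-prec} (every well-founded coalgebra on $\Set$ is parametrically recursive) gives the full equivalence ``recursive $\Leftrightarrow$ well-founded'' for all set functors preserving inverse images, even though the statement above is only the one implication.
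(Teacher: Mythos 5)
Your proposal is correct and is exactly the paper's intended derivation: the corollary is the instantiation of Taylor's Theorem~\ref{T:rec-wf:2} at $\A=\Set$, which is finitely complete and has the subobject classifier $\Omega=\set{t,f}$, with no further hypotheses to check (the paper notes that preservation of monomorphisms is automatic for set functors, and your kernel-pair argument that inverse-image preservation implies it is also fine). Nothing is missing.
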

\begin{corollary}\label{C:equiv:2}
  Let $\A$ and $F$ satisfy the assumptions of \autoref{T:rec-wf:2}.
  The the following properties of a coalgebra are 
  \[
    \mbox{well-foundedness} \iff
    \mbox{recursiveness} \iff
    \mbox{parametrically recursiveness}.
  \]
\end{corollary}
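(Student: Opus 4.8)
The plan is to prove all three equivalences at once by closing the cycle
\[
\text{parametric recursiveness}
\;\Rightarrow\; \text{recursiveness}
\;\Rightarrow\; \text{well-foundedness}
\;\Rightarrow\; \text{parametric recursiveness}.
\]
Two of the three arrows are already available and cost nothing. The first was observed immediately after the definition of parametric recursiveness: given an algebra $e\colon FX \to X$, one forms $e' = e \o \pi$ with $\pi\colon FX \times A \to FX$ the projection, turning any parametrically recursive coalgebra into a recursive one. The second arrow, recursiveness $\Rightarrow$ well-foundedness, is exactly \autoref{T:rec-wf:2}, which applies verbatim since by hypothesis $F$ preserves inverse images and $\A$ has a subobject classifier.

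It remains to supply the third arrow, well-foundedness $\Rightarrow$ parametric recursiveness, which is the General Recursion Theorem \autoref{T:wf-prec}. First I would dispatch the functor hypothesis: preservation of inverse images forces $F$ to preserve monomorphisms, because a morphism $m$ is monic precisely when its inverse image along itself is the identity subobject, and $F$ preserves that pullback. Hence the mono-preservation hypothesis of \autoref{T:wf-prec} is automatic here. The hard part is that \autoref{T:wf-prec} is stated over a \emph{complete}, well-powered category with \smooth monomorphisms, whereas \autoref{T:rec-wf:2} asks only for finite completeness together with a subobject classifier. A finitely complete category with a subobject classifier is automatically well-powered, since $\Sub(A)\cong \A(A,\Omega)$, but it need have neither all colimits of chains nor \smooth monomorphisms, so the General Recursion Theorem does not apply on the nose. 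Reconciling the two hypothesis sets is the genuine obstacle.

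I would close this gap in one of two ways. Either (i) impose on $\A$ the combined hypotheses, the cleanest sufficient environment being a Grothendieck topos: it is complete and well-powered, carries a subobject classifier, and has universally \smooth monomorphisms by \autoref{E:uconstr}, so both \autoref{T:wf-prec} and \autoref{T:rec-wf:2} hold; or (ii) specialize to the principal intended case $\A = \Set$, where completeness, well-poweredness, \smooth monomorphisms, and the subobject classifier $\set{t,f}$ are all present and $F$ preserving inverse images makes both theorems applicable. In either setting the three arrows chain into the cycle above, yielding the claimed equivalences. As a byproduct this explains the remark preceding \autoref{E:functorR}: the pathology found there, a recursive coalgebra that is not parametrically recursive, cannot arise for inverse-image-preserving functors, because recursiveness now implies well-foundedness and hence parametric recursiveness.
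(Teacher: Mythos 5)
Your three-arrow cycle (parametric recursiveness $\Rightarrow$ recursiveness $\Rightarrow$ well-foundedness $\Rightarrow$ parametric recursiveness) is exactly the intended argument, and the hypothesis mismatch you flag is genuine: under only the assumptions of \autoref{T:rec-wf:2} --- finite completeness plus a subobject classifier --- the paper provides no route from well-foundedness to recursiveness, since \autoref{T:wf-prec} additionally needs completeness, well-poweredness and \smooth monomorphisms. The paper's own resolution is your option (ii): the corollary actually asserted in the final text is restricted to set functors preserving inverse images, and there the chain closes with \autoref{T:rec-wf:2} supplying the second arrow and \autoref{C:wf-prec} the third. The one substantive difference from your write-up lies in how that third arrow is justified over $\Set$: you observe that preservation of inverse images already forces preservation of monomorphisms (correct --- the pullback of a monomorphism along itself is an inverse image, and $F$ preserving it makes $Fm$ monic) and then invoke \autoref{T:wf-prec} directly, whereas the paper's \autoref{C:wf-prec} instead passes to the Trnkov\'a hull $\bar F$ and uses \autoref{L:Trn} to transport well-foundedness and recursiveness, so as to cover \emph{arbitrary} set functors. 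For the present corollary your shortcut is perfectly adequate, since mono-preservation comes for free from the inverse-image hypothesis; the paper's detour buys generality that is not needed here. Your option (i), imposing the combined hypotheses as in a Grothendieck topos, is a legitimate strengthened variant that the paper does not state but which follows by the same chaining of \autoref{T:wf-prec} and \autoref{T:rec-wf:2}.
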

}
\begin{corollary}\label{C:equiv:2}
  For every set functor preserving inverse images, the following
  properties of a coalgebra are equivalent:
  \[
    \mbox{well-foundedness} \iff
    \mbox{parametric recursiveness} \iff
    \mbox{recursiveness}.
  \]
\end{corollary}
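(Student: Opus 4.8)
The plan is to prove the result by closing the cycle of implications
\[
  \text{well-founded} \Rightarrow \text{parametrically recursive}
  \Rightarrow \text{recursive} \Rightarrow \text{well-founded},
\]
each link of which is already available. The first implication needs no hypothesis on $F$ beyond being a set functor: it is exactly \autoref{C:wf-prec}, whose proof passes to the Trnkov\'a hull $\bar F$ and transports well-foundedness and parametric recursiveness between $F$ and $\bar F$ on nonempty coalgebras via \autoref{L:Trn}. The second implication is immediate and was already observed after \eqref{eq:prec}: given an algebra $e\colon FX \to X$, one forms $e' = e\circ\pi$ with $\pi\colon FX\times A\to FX$ the projection, and parametric recursiveness of $(A,\alpha)$ supplies the unique coalgebra-to-algebra morphism required for recursiveness.

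The third implication is the only place the hypothesis that $F$ preserves inverse images enters. Here I would invoke \autoref{T:rec-wf:2}: the category $\Set$ is finitely complete and carries the subobject classifier $t\colon 1\subto\Omega$ with $\Omega=\set{t,f}$, so for an inverse-image preserving endofunctor every recursive coalgebra is well-founded. This route is preferable to \autoref{T:rec-wf:1} precisely because it asks for neither a pre-fixed point nor universally \smooth monomorphisms, both of which may fail or be awkward to verify for a general set functor.

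The one point that needs a moment's care is whether \autoref{T:rec-wf:2} applies to $F$ itself rather than to its hull. This is not a real obstacle: preservation of inverse images subsumes preservation of finite intersections, since a binary intersection $S_1\cap S_2$ is the inverse image of $S_1\monoto A$ along $S_2\monoto A$; hence by \autoref{P:Tr} the functor $F$ is naturally isomorphic to $\bar F$ and is already well-behaved on the empty set, so the diagram chase in the proof of \autoref{T:rec-wf:2} goes through for $F$ unchanged. The empty coalgebra is trivially both well-founded and recursive and may be disregarded throughout. Assembling the three links then yields the two stated equivalences; the genuine content is simply the identification of Taylor's subobject-classifier argument as the correct tool for the converse direction, everything else being bookkeeping.
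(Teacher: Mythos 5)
Your proposal is correct and is essentially the paper's intended argument: the corollary is stated immediately after \autoref{T:rec-wf:2} precisely so that the cycle closes via \autoref{C:wf-prec} (well-founded $\Rightarrow$ parametrically recursive, through the Trnkov\'a hull), the trivial projection argument (parametrically recursive $\Rightarrow$ recursive), and Taylor's subobject-classifier theorem applied to $\Set$ (recursive $\Rightarrow$ well-founded). Your side remark that preservation of inverse images yields preservation of finite intersections, so that $F\cong\bar F$, is correct though not strictly needed, since \autoref{T:rec-wf:2} applies to $F$ directly.
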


%
\begin{expl}\label{exSq0}
  The hypothesis in \autoref{T:rec-wf:1} and \autoref{T:rec-wf:2} that
  the functor preserves inverse images cannot be lifted. In order to
  see this, we consider the functor $R\colon \Set \to \Set$ of
  \autoref{E:setfunctors}\ref{E:setfunctors:2}. It preserves
  monomorphisms but not inverse images. The recursive coalgebra
  $(C, \gamma)$ in \autoref{E:functorR} is not well-founded:
  $\emptyset$~is a cartesian subcoalgebra.
\end{expl}

We have seen that for set functors well-founded coalgebras are
recursive, and the converse holds for functors preserving inverse
images. Moreover, the latter requirement cannot be lifted as we just
saw in \autoref{exSq0}. Recall that an initial algebra $(\mu F, \ini)$
is also considered as a coalgebra $(\mu F, \ini^{-1})$.
Taylor~\cite[Cor.~9.9]{taylor3} showed that, for functors preserving
inverse images, the terminal well-founded coalgebra is the initial
algebra. Surprisingly, this result is true for \emph{all} set
functors.
\begin{notheorembrackets}%
\begin{theorem}[{\cite[Thm.~2.46]{amms}}]\label{T-in-ter-sets} 
  For every set functor, a terminal well-founded coalgebra is
  precisely an initial algebra. 
\end{theorem}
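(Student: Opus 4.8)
The plan is to prove the two inclusions separately: every initial algebra, viewed as a coalgebra, is a terminal well-founded coalgebra, and conversely every terminal well-founded coalgebra, viewed as an algebra, is an initial algebra. Since each of the two notions is determined uniquely up to isomorphism, this suffices. Throughout we are in $\A = \Set$, so the General Recursion Theorem is available in the strong form of \autoref{C:wf-prec}: every well-founded set coalgebra is (parametrically) recursive, with no preservation hypothesis on $F$.

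For the first inclusion, suppose $\mu F$ exists. By \autoref{E-well-founded}\ref{E-well-founded:1} the coalgebra $(\mu F, \ini^{-1})$ is well-founded, and by \autoref{cor:cuv} it is the terminal recursive coalgebra. Now let $(A,\alpha)$ be any well-founded coalgebra. By \autoref{C:wf-prec} it is recursive, so \autoref{cor:cuv} yields a unique coalgebra homomorphism $(A,\alpha) \to (\mu F, \ini^{-1})$. Thus $(\mu F, \ini^{-1})$ is well-founded and receives a unique homomorphism from every well-founded coalgebra, i.e.~it is terminal in the full subcategory of well-founded coalgebras.

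For the converse, let $(T,\tau)$ be a terminal well-founded coalgebra. The crucial ingredient is the closure property that, for every set functor $F$, if $(A,\alpha)$ is well-founded then so is $(FA, F\alpha)$. Granting this, $(FT, F\tau)$ is well-founded, so terminality supplies a unique homomorphism $g\colon (FT,F\tau) \to (T,\tau)$. As $\tau\colon (T,\tau) \to (FT,F\tau)$ is trivially a homomorphism, $g\o\tau$ is an endomorphism of the terminal object $(T,\tau)$ and hence equals $\id_T$; and since $g$ is a homomorphism, $\tau\o g = Fg\o F\tau = F(g\o\tau) = \id_{FT}$. Therefore $\tau$ is invertible, so $(T,\tau)$ is a fixed point. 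Finally, $(T,\tau)$ is recursive by \autoref{C:wf-prec}, and a coalgebra-to-algebra morphism from $(T,\tau)$ into an algebra $(X,e)$ is exactly an algebra homomorphism $(T,\tau^{-1}) \to (X,e)$; uniqueness of the former thus exhibits $(T,\tau^{-1})$ as an initial algebra.

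It remains to establish the closure property, and I expect this to be the main obstacle. The degenerate case $F\emptyset = \emptyset$ should be dispatched first: there the only well-founded coalgebra and the initial algebra are both empty (\autoref{E-well-founded}\ref{E-well-founded:2}), so the theorem is immediate. Otherwise $(T,\tau)$ is nonempty, and I would replace $F$ by its Trnkov\'a hull $\bar F$ (\autoref{P:Tr}), which preserves finite intersections, hence monomorphisms, and agrees with $F$ on nonempty sets; by \autoref{L:Trn} this leaves well-foundedness of the coalgebras involved intact. For $\bar F$ one can factor the homomorphism $\alpha\colon (A,\alpha) \to (FA, F\alpha)$ as a strong epimorphism followed by a monomorphism; its image is then a strong quotient of $(A,\alpha)$ and so is well-founded by \autoref{C:wfcolim}. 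The hard part is that the elements of $FA$ lying \emph{outside} this image must also be shown to belong to the least fixed point of $\nexttime_{F\alpha}$. Here the obstruction is that $\bar F$ need not preserve inverse images, so the next time chain of $(FA,F\alpha)$ does not reduce to that of $(A,\alpha)$ via \autoref{C:pback}; one must analyse $\nexttime_{F\alpha}$ directly, most conveniently through the canonical graph of $(FA,F\alpha)$ (\autoref{C:cangr}) when $F$ preserves all intersections, and by a transfinite argument on the least-fixed-point chain of \autoref{C:wf-part} in the general case.
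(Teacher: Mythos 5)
First, a caveat: the paper itself gives no proof of \autoref{T-in-ter-sets} --- it cites \cite{amms} and explicitly states that the proof is nontrivial, omitted, and based on properties of well-founded coalgebras in locally presentable categories --- so there is nothing in this text to compare your argument against step by step. Your overall architecture is the natural one, and its easy parts are correct: the direction ``initial algebra $\Rightarrow$ terminal well-founded coalgebra'' follows from \autoref{E-well-founded}\ref{E-well-founded:1}, \autoref{C:wf-prec} and \autoref{cor:cuv} exactly as you say (the paper invokes this very fact in the proof of \autoref{C:equiv}); and \emph{granted} the closure lemma that $(FA,F\alpha)$ is well-founded whenever $(A,\alpha)$ is, your Lambek-style argument correctly shows that a terminal well-founded coalgebra is a fixed point and hence, being recursive, an initial algebra.

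The genuine gap is that closure lemma, which carries the entire weight of the hard direction and which you do not prove. Your canonical-graph idea does work for functors preserving \emph{all} intersections: there \autoref{C:cangr} applies, and one checks that any infinite path in the canonical graph of $(FA,F\alpha)$ projects to one in the canonical graph of $(A,\alpha)$. But the reduction of the general case to this one does not go through as sketched. Passing to the Trnkov\'a hull only buys preservation of \emph{finite} intersections, which is not enough for \autoref{C:cangr}; preservation of all intersections is guaranteed only for finitary functors (\autoref{P:Trint}). Worse, \autoref{L:Trn} transfers well-foundedness only from $F$ to $\bar F$, not back: the example following \autoref{C:setsub} (with $F\emptyset=1$ and $FX=1+1$) yields a coalgebra $\inr\colon 1\to F1$ that is well-founded for $\bar F$ but not for $F$, so proving that $(FA,F\alpha)$ is well-founded for $\bar F$ does not by itself prove it for $F$. (This particular point is repairable --- when $F\emptyset\neq\emptyset$, every element $Fr_A(x)$ with $x\in F\emptyset$ witnesses that the empty subcoalgebra of $(FA,F\alpha)$ is not cartesian for $F$ --- but you would have to say so.) Your closing sentence, ``by a transfinite argument on the least-fixed-point chain \ldots{} in the general case,'' is precisely the step the paper describes as nontrivial and declines to present; as written it is a placeholder rather than a proof, so the theorem is not established in the generality claimed.
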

\end{notheorembrackets}
The proof is nontrivial, and we are not going to present it. It is
based on properties of well-founded coalgebras in locally presentable
categories. The fact that no assumptions on $F$ are needed seems very
special to $\Set$. On the one hand, \autoref{T-in-ter-sets} can be
proved for every locally finitely presentable base category $\A$
having a strict initial object and every endofunctor on $\A$
preserving finite intersections~\cite[Theorem~2.36]{amms}. On the
other hand, without this last assumptions, \autoref{T-in-ter-sets}
does not even generalize from $\Set$ to the category of graphs as the
following example shows.
	
\begin{expl}
  Let $\Gra$ be the category of graphs, i.e.~the category of
  presheaves over the category
  $\{ \bullet \rightrightarrows \bullet\}$ given by two parallel
  morphisms.  Here is a simple endofunctor $F$ on $\Gra$ whose initial
  algebra is infinite and whose terminal well-founded coalgebra is a
  singleton graph: On objects $A$ put $FA = 1$ (the terminal graph) if
  $A$ has edges. For a graph $A$ without edges, let $FA$ be the graph
  $A + 1$ without edges. The definition of $F$ on morphisms
  $h\colon A \to B$ is as expected: $Fh$ maps the additional vertex of
  $A$ to that of $B$ in the case where $B$ has no edges. Then $\mu F$
  is the graph of natural numbers without edges. However, the terminal
  well-founded coalgebra is $F1 \xrightarrow{\cong} 1$.
\end{expl}


As the last result of this section we now turn to showing the
implication ``recursive $\Rightarrow$ well-founded'' for functors on the
category $\KVec$ preserving inverse images. This follows neither from
either \autoref{T:rec-wf:1} (since monomorphism are not universally
\smooth in $\KVec$) nor from \autoref{T:rec-wf:2} (since $\KVec$ does
not have a subobject classifier). 

Recall first that the \emph{kernel} of a linear map $f\colon X \to Y$ is the
subspace $\ker f = \set{x \in X : f(x) = 0}$. A functor $F\colon \KVec
\to \KVec$ \emph{preserves kernels} if for every linear map $f\colon X
\to Y$ its kernel $s\colon \ker f \monoto X$ is mapped to the kernel
of $Ff$, shortly $Fs = \ker Ff$. 
\begin{rem}\label{R:vec}
  \begin{enumerate}
  \item\label{R:vec:1}\sloppypar Observe that for every linear map
    $f\colon X \to Y$ its kernel $s\colon \ker f \monoto X$ is the
    inverse image of the least subobject $z\colon 0 \monoto Y$,
    shortly $\ker f = f^{-1}[0]$.

    If $F\colon \KVec \to \KVec$ preserves inverse images and $F0=0$,
    then it preserves kernels. Indeed, $Fs$ is then the inverse image
    of $Fz$ under $Ff$, and $Fz\colon 0 = F0 \monoto FY$ is the zero
    map. Thus $Fs$ is the kernel of $Ff$ as desired.

  \item\label{R:vec:2} Conversely, if $F$ preserves kernels, then $F0 = 0$ (the
    terminal object) and $F$ preserves inverse images. In fact, $F$
    preserves finite limits: by~\cite[Thm.~3.12]{Freyd64}, a functor
    preserving kernels is additive, and for an additive functor
    preservation of kernels is equivalent to preservation of finite
    limits (see~\cite[Prop.~1.11.2]{Borceux94_2}). 

  \item Every subspace $s\colon S \monoto X$ induces a quotient space
    $X/S$, and we denote the corresponding canonical quotient map by
    $\coker s\colon S \to X/S$.
    
  \item\label{R:vec:3} Every linear map $f\colon X \to Y$ induces an isomorphism $X
    \cong \ker f + f[X]$, where $f[X]$ denotes the image of $f$ in $Y$.

  \item\label{R:vec:3} For a linear map $f\colon X \to Y$ and a
    subspace $s\colon S \monoto Y$ let
    $t\colon T = f^{-1}[S] \monoto Y$. Then there exists a unique
    monomorphism $u\colon X/T \monoto Y/S$ such that the following
    diagram commutes:
    \[
      \begin{tikzcd}
        T
        \arrow{r}
        \arrow[>->]{d}[swap]{t}
        \pullbackangle{-45}
        &
        S\arrow[>->]{d}{s}
        \\
        X
        \arrow[->>]{d}[swap]{\coker t}
        \arrow{r}{f}
        &
        Y
        \arrow[->>]{d}{\coker s}
        \\
        X/T
        \arrow[dashed,>->]{r}{u}
        &
        Y/S
      \end{tikzcd}
    \]
    Indeed, $u$ exists by the universal property of $\coker
    t$. Moreover, we see that $u$ is injective: if $x + T$
    satisfies $u(x + T) = 0$, i.e.~$f(x) + S = 0$, then we have $f(x)
    \in S$, thus $x \in T$.
        
\takeout{
  \item Given a linear map $f\colon X \to Y$ and a subspace
    $s\colon S \monoto X$, we have for every vector space $R$ that $s
    = \ker f$ iff $R+s\colon R+S \monoto R+X$ is the kernel of
    $R+f\colon R+X \to R+Y$. Thus follows from the fact that $+$ is
    also the product, and products commute with equalizers.
    \smerror[inline]{Unfortunately, this does not work! $s\colon \ker f \monoto X$
      is the equalizer of $f$ and $0$ which implies that $R\times s$
      is the equalizer of $R \times f$ and $R \times 0$, but the
      latter morphism is not the zero morphism from $R\times X \to R
      \times Y$. In fact, the kernel of $R \times f$ is $\langle 0,
      s\rangle \ker f \monoto R \times X$.}}
  \end{enumerate}
\end{rem}

\begin{theorem}\label{T:recwfvec}
  Let $F$ be an endofunctor on $\KVec$ preserving inverse images. Then every
  recursive $F$-coalgebra is well-founded. 
\end{theorem}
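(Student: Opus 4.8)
The plan is to prove the contrapositive: if $(A,\alpha)$ is not well-founded, then it is not recursive, the witness being an algebra admitting two distinct coalgebra-to-algebra morphisms out of $(A,\alpha)$. First I would reduce to the case $F0 = 0$. Since $0 \monoto X$ is a monomorphism and monomorphisms split in $\KVec$, the maps $\eta_X = F(0\monoto X)\colon F0 \monoto FX$ form a natural transformation exhibiting the constant functor $\underline{F0}$ as a subfunctor of $F$; let $F'$ be its pointwise cokernel, so $FX \cong F0 \oplus F'X$ naturally and $F'0 = 0$. Writing $\alpha = \langle\alpha_0,\alpha'\rangle\colon A \to F0\oplus F'A$, a direct computation of images gives $Fs[FS] = F0 \oplus F's[F'S]$ for every mono $s$, whence $\nexttime_\alpha = \nexttime_{\alpha'}$; thus $(A,\alpha)$ is well-founded for $F$ iff $(A,\alpha')$ is well-founded for $F'$. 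Moreover, for any $F'$-algebra $e'\colon F'X \to X$, the $F$-coalgebra-to-algebra morphisms $(A,\alpha)\to (X,[0,e'])$ are precisely the $F'$-coalgebra-to-algebra morphisms $(A,\alpha')\to (X,e')$ (the constant summand is annihilated by $[0,e']$), so recursiveness of $(A,\alpha)$ for $F$ transfers to recursiveness of $(A,\alpha')$ for $F'$. Hence it suffices to treat $F'$, i.e.\ the case $F0 = 0$.

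So assume $F0 = 0$. By \autoref{R:vec}\ref{R:vec:1} and~\ref{R:vec:2}, $F$ then preserves kernels and is additive, so it carries the split short exact sequence $0\to B\to A\to A/B\to 0$ to a split one, giving a natural isomorphism $FA/FB \cong F(A/B)$. Now suppose $(A,\alpha)$ is not well-founded. By \autoref{L:next}\ref{L:next:2} there is a proper cartesian subcoalgebra $b\colon B\monoto A$, i.e.\ $\alpha^{-1}[Fb] = b$, with $Q := A/B \neq 0$; write $q = \coker b\colon A\epito Q$. Applying the monomorphism construction of \autoref{R:vec}\ref{R:vec:3} to $\alpha$ and the subspace $Fb\colon FB\monoto FA$ (whose inverse image along $\alpha$ is exactly $b$) yields $u\colon Q\monoto FA/FB$ with $u\cdot q = \coker(Fb)\cdot\alpha$; composing with $FA/FB\cong FQ$ gives a monomorphism $\gamma\colon Q\monoto FQ$ satisfying $\gamma\cdot q = Fq\cdot\alpha$, so that $q$ is a coalgebra homomorphism onto $(Q,\gamma)$.

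Since $\gamma$ is a split monomorphism, choose a retraction $\delta\colon FQ\to Q$ with $\delta\cdot\gamma = \id_Q$ and regard $(Q,\delta)$ as an $F$-algebra. Both $\id_Q$ and the zero map $0\colon Q\to Q$ are coalgebra-to-algebra morphisms from $(Q,\gamma)$ to $(Q,\delta)$: for $\id_Q$ this is $\delta\cdot\gamma = \id_Q$, and for $0$ it is $\delta\cdot F(0_Q)\cdot\gamma = 0$, using $F(0_Q) = 0$ by additivity. Precomposing with the coalgebra homomorphism $q$ produces two coalgebra-to-algebra morphisms $q$ and $0$ from $(A,\alpha)$ to $(Q,\delta)$, and they differ because $Q\neq 0$ forces $q\neq 0$. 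This contradicts recursiveness, finishing the additive case and, via the reduction, the theorem; note that this argument uses only the existence of one proper cartesian subcoalgebra and no transfinite chain. The main obstacle is precisely the non-additive case $F0\neq 0$: the whole argument rests on peeling off the constant part $F0$ so that it runs for the additive functor $F'$, and the one genuinely technical point is verifying that $F'$ still preserves inverse images, which follows from the cancellation $(Ff)^{-1}[Fs] = F0 \oplus (F'f)^{-1}[F's]$.
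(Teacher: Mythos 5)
Your proof is correct and follows essentially the same route as the paper's: split off the constant part $F0$ to reduce to an additive functor with $F0=0$ (checking that recursiveness transfers and that the next time operators agree), then use preservation of kernels to induce a split-mono coalgebra structure on the quotient by a cartesian subcoalgebra and play the quotient map off against the zero map into the algebra given by a retraction. The only cosmetic difference is that you argue contrapositively from an arbitrary proper cartesian subcoalgebra where the paper works directly with the well-founded part $a^*$ and concludes $\coker a^*=0$; the substance is identical.
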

\begin{proof}
  Let $\alpha\colon A \to FA$ be a recursive coalgebra and let
  $a^*\colon (A^*, \alpha^*) \monoto (A,\alpha)$ be its well-founded part.
  \begin{enumerate}
  \item\label{T:recwfvec:1} Assume first that $F0 = 0$. Then $F$
    preserves zero maps and kernels by
    \autoref{R:vec}\ref{R:vec:1}. Then $Fa^*$ is the kernel of
    $F(\coker a^*)$ as shown in the following diagram:
    \[
      \begin{tikzcd}
        A^*
        \arrow{r}{\alpha^*}
        \arrow[>->]{d}[swap]{a^*}
        \pullbackangle{-45}
        &
        FA^*\arrow[>->]{d}{Fa^*}
        \\
        A
        \arrow[->>]{d}[swap]{\coker a^*}
        \arrow{r}{\alpha}
        &
        FA
        \arrow[->>]{d}{F(\coker a^*)}
        \\
        A/A^*
        \arrow[dashed,>->,yshift=2pt]{r}{u}
        &
        F(A/A^*)
        \arrow[dashed,->>,yshift=-2pt]{l}{e}
      \end{tikzcd}
    \]
    Since $\coker a^*$ is epimorphic so is $F(\coker a^*)$ since 
    epimorphisms split in $\KVec$. Thus, we have $F(\coker a^*) =
    \coker (Fa^*)$, and by \autoref{R:vec}\ref{R:vec:3} we obtain
    the unique monomorphism $u\colon A/A^* \monoto F(A/A^*)$ such that
    the diagram above commutes. Choose a splitting $e\colon F(A/A^*)
    \epito A/A^*$, i.e.~$e \cdot u = \id$. It follows that $q = \coker
    a^*$ is a coalgebra-to-algebra morphism from $(A,\alpha)$ to
    $(A/A^*, e)$. Indeed, we obtain
    \[
      e \cdot Fq \cdot \alpha = e \cdot u \cdot q = q.
    \]
    Since $F$ preserves zero morphisms, the zero morphism
    $z\colon A \to A/A^*$ is also a coalgebra-to-algebra
    morphism. Consequently, $q = z$, which is equivalent to $a^*$ being
    an isomorphism $A \cong A^*$ as desired.
    
  \item Let $F$ be arbitrary, and put $R = F0$. Then there is an
    endofunctor $G$ on $\KVec$ with $G0 = 0$ and preserving inverse
    images such that $FX = R \times GX$. Indeed, for every vector space
    $X$, let $t_X\colon X \to 0$ denote the zero map, and let
    $k_X\colon GX \monoto FX$ be the kernel of $Ft_X$. For every linear
    map $f\colon X \to Y$ the equality $t_X = t_Y \cdot f$ implies
    that $Ff$ yields a linear map $Gf$ making the following square
    commutative:
    \[
      \begin{tikzcd}
        GX
        \arrow[>->]{r}{k_X}
        \arrow{d}[swap]{Gf}
        &
        FX
        \arrow{d}{Ff}
        \\
        GY
        \arrow[>->]{r}{k_Y}
        &
        FY
      \end{tikzcd}
    \]
    It is easy to verify that $G$ is an endofunctor and
    $k\colon G\monoto F$ a natural transformation. Observe that $t_X$
    is a split epimorphism (whose splitting is the unique
    $s_X\colon 0 \to X$), whence $Ft_X$ is a split epimorphism with
    splitting $Fs_X\colon R \to FX$. Using
    \autoref{R:vec}\ref{R:vec:3}, this implies that $FX \cong R + GX$
    with coproduct injections $Fs_X$ and $k_X$. Since $+$ is also
    product, we obtain $FX \cong R \times GX$ as desired.

  \item We prove that $G$ preserves kernels. By
    \autoref{R:vec}\ref{R:vec:2}, $G$ then preserves finite limits,
    whence inverse images. Suppose that $s = \ker f$ so that we have
    the pullback on the left below
    \[
      \begin{tikzcd}
        S
        \arrow{r}
        \arrow{d}[swap]{s}
        \pullbackangle{-45}
        &
        0
        \arrow{d}
        \\
        X
        \arrow{r}{f}
        &
        Y
      \end{tikzcd}
      \qquad\qquad
      \begin{tikzcd}
        GS
        \arrow{r}
        \arrow{d}[swap]{Gs}
        \pullbackangle{-45}
        &
        0
        \arrow{d}
        \\
        GX
        \arrow{r}{Gf}
        &
        GY
      \end{tikzcd}
    \]
    It is our task to prove that the square on the right above is a
    pullback. Since $F$ preserves inverse images, applying it to
    left-hand square yields the following pullback square:
    \[
      \begin{tikzcd}[column sep = 30pt]
        R \times GS
        \arrow{r}{\pi}
        \arrow{d}[swap]{R \times Gs}
        \pullbackangle{-45}
        &
        R \arrow{d}{i}
        \\
        R \times GX
        \arrow{r}{R \times Gf}
        &
        R \times GY
      \end{tikzcd}
    \]
    Note that since $G0 = 0$ the upper morphism is the left-hand
    product projection and the right-hand one the left-hand coproduct
    injection.

    Now suppose we have $g\colon Z \to GX$ with $Gf \cdot g = z$, where
    $z\colon Z \to 0 \to GY$ is the zero morphism. Then for the zero
    morphism $z'\colon Z \to 0 \to R$ we clearly have
    \[
      (R \times Gf)\cdot \pair{z',g} = \pair{z',z} =  i \cdot z',      
    \]
    since the latter two are both the zero morphism $Z \to R \times
    GY$. Therefore, there is a unique morphism $h\colon Z \to R \times GS$
    with $(R\times Gs) \cdot h = \pair{z',g}$ and $\pi\cdot h =
    z'$. This implies that $h = \pair{z', h'}$ for a unique morphism
    $h'\colon Z \to GS$ such that $Gs \cdot h' = g$, which proves the claim. 

    \takeout{
  \item We prove that $G$ preserves inverse images. Suppose that
    $f\colon X \to Y$ is a linear map and $s\colon S \monoto Y$ a
    subspace, and let $t\colon T \monoto X$ be the inverse image of
    $s$ under $f$. That means we have the pullback on the left below:
    \begin{equation}\label{diag:pbs}
      \begin{tikzcd}
        T
        \arrow{r}{f'}
        \arrow[>->]{d}[swap]{t}
        \pullbackangle{-45}
        &
        S
        \arrow[>->]{d}{s}
        \\
        X \arrow{r}{f}
        &
        Y
      \end{tikzcd}
      \qquad\qquad
      \begin{tikzcd}[column sep = 30pt]
        R \times GT
        \arrow{r}{R \times Gf'}
        \arrow[>->]{d}[swap]{R \times Gt}
        \pullbackangle{-45}
        &
        R\times GS
        \arrow[>->]{d}{R \times Gs}
        \\
        R\times GX \arrow{r}{R\times Gf}
        &
        R\times GY
      \end{tikzcd}
    \end{equation}
    Moreover, since $FX = R + GX$ preserves inverse images and $+$ is
    also product, the square on the right above is a pullback, too. It
    is our task to prove that its right-hand product component
    (obtained by omitting ``$R\, \times$'' everywhere) is a pullback,
    too.
    
    Suppose that we have linear maps $p\colon P \to GX$ and
    $q\colon P \to GS$ such that $Gf \cdot p = Gs \cdot q$. Then we
    clearly have that
    $(R \times Gf) \cdot (R \times p) = (R \times Gs) \cdot (R \times
    q)$, and so there exist a unique
    $h\colon R \times P \to R \times GT$ such that
    $(R\times Gt) \cdot h = (R \times p)$ and
    $(R\times Gf') \cdot h = (R\times q)$.  Then the following
    morphism
    \begin{equation}\label{eq:k}
      k = (P \xrightarrow{\langle z, \id\rangle}
      R \times P \xrightarrow{h}
      R \times GT\xrightarrow{\pi_r}
      GT),
    \end{equation}
    where $z\colon P \to R$ is the zero morphism and $\pi_r$ the
    right-hand product projection, is the desired unique morphism such
    that $Gt \cdot k = p$ and $Gf' \cdot k = q$. Indeed, the following
    diagram clearly commutes, which proves the two desired equations:
    \[
      \begin{tikzcd}[column sep = 30pt]
        &&
        P
        \arrow{d}[swap]{\langle 0, id\rangle}
        \arrow{rd}{q}
        \arrow{ld}[swap]{p}
        \\
        &
        GX
        \arrow{d}[swap]{\langle z, \id\rangle}
        \arrow[equals,bend right = 30]{ldd}
        &
        R \times P
        \arrow{d}[swap]{h}
        \arrow{rd}[near start]{R \times q}
        \arrow{ld}[near start, swap]{R \times p}
        &
        GS
        \arrow{d}{\langle z, \id\rangle}
        \arrow[equals,bend right = -30]{rdd}
        \\
        &
        R \times GX
        \arrow{ld}[swap]{\pi_r}
        &
        R \times GT
        \arrow{d}[swap]{\pi_r}
        \arrow{r}[swap]{R \times Gf'}
        \arrow{l}{R\times Gt}
        &
        R \times GS
        \arrow{rd}{\pi_r}
        \\
        GX
        &&
        GT
        \arrow{rr}{Gf'}
        \arrow{ll}[swap]{Gt}
        &&
        GS
      \end{tikzcd}
    \]
    For the uniqueness suppose we have $k'$ with $Gt \cdot k' = p$ and
    $Gf' \cdot k' = q$. Then $R\times k'$ is clearly a mediating
    morphism w.r.t.~the right-hand pullback in~\eqref{diag:pbs}, thus
    $h = R \times k'$. Using~\eqref{eq:k}, we see that
    \[
      k
      =
      \pi_r \cdot h \cdot \langle z,\id\rangle
      =
      \pi_r \cdot (R \times k') \cdot \langle z, \id\rangle
      =
      k' \cdot \pi_r \cdot \langle z, \id\rangle
      =
      k'.
    \]}
    
  \item Observe that $G0 = 0$, thus we can apply
    part~\ref{T:recwfvec:1}. Our recursive coalgebra
    $\alpha = \langle \alpha_1,\alpha_2\rangle\colon A \to R \times
    GA$ yields a coalgebra $\alpha_2\colon A \to GA$, and we prove
    that it is recursive, too. Indeed, given any algebra $\beta\colon
    GB \to B$, we have an algebra
    \[
      R \times GB \cong R + GB \xrightarrow{[z,\beta]} GB
    \]
    for $F$. Now observe that a morphism $h\colon A \to B$ is a
    coalgebra-to-algebra morphism for~$F$
    \[
      \begin{tikzcd}
        A
        \arrow{r}{\pair{\alpha_1,\alpha_2}}
        \arrow{d}[swap]{h}
        &
        R \times GA
        \arrow{d}{R \times Gh}
        \\
        B
        &
        R \times GB
        \arrow{l}[swap]{[0,\beta]}
      \end{tikzcd}
    \]
    iff it is a coalgebra-to-algebra morphism from $(A,\alpha_2)$ to
    $(B,\beta)$ for $G$. Since the former exists uniquely, so does the
    latter. This proves that $(A,\alpha_2)$ is recursive.

    By part~\ref{T:recwfvec:1} the coalgebra $(A,\alpha_2)$ is
    well-founded for $G$. Its next time operator $\nexttime$ is the
    same as that of the $F$-coalgebra $(A,\alpha)$ because in the
    diagram below the outside is a pullback iff the left-hand square
    is:
    \[
      \begin{tikzcd}
        \nexttime S
        \arrow[>->]{d}[swap]{\nexttime s}
        \arrow{r}{\alpha(s)}
        \pullbackangle{-45}
        &
        R \times GS
        \arrow[>->]{d}{R\times Gs}
        \arrow{r}{\pi_r}
        &
        GS
        \arrow[>->]{d}{Gs}
        \\
        A
        \arrow{r}{\alpha}
        \arrow[shiftarr = {yshift=-20pt}]{rr}{\alpha_2}
        &
        R \times GA
        \arrow{r}{\pi_r}
        &
        GA
      \end{tikzcd}
    \]    
%
%
    Since $\id_A$ is the unique fixed point of $\nexttime$ w.r.t.~$G$,
    it is also the unique fixed point w.r.t.~$F$. Thus $(A,\alpha)$ is
    well-founded for $F$ as desired. \qed
  \end{enumerate}
\end{proof}
\begin{corollary}\label{C:vec}
  For every functor on $\KVec$ preserving inverse images, the
  following properties of a coalgebra are equivalent:
  \[
    \mbox{well-foundedness} \iff
    \mbox{parametric recursiveness} \iff
    \mbox{recursiveness}.
  \]
\end{corollary}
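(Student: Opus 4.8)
The plan is to close a cycle of three implications among the three properties, each of which has already been established elsewhere in the paper; the real work is only in assembling them and checking that the ambient hypotheses of each cited result are met here.

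First I would record the implication (parametrically recursive) $\Rightarrow$ (recursive). This holds for every endofunctor on every category and was observed right after the definition of parametric recursivity: given an algebra $e\colon FX\to X$, one forms $e\o\pi\colon FX\times A\to X$, where $\pi$ is the projection, and the unique parametric solution is the required coalgebra-to-algebra morphism. No assumptions on $F$ or on $\KVec$ are needed for this step. Next I would invoke \autoref{T:recwfvec}, which states that for an endofunctor on $\KVec$ preserving inverse images every recursive coalgebra is well-founded; since our standing hypothesis is exactly that $F$ preserves inverse images, this supplies (recursive) $\Rightarrow$ (well-founded).

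Finally, to complete the cycle I would use the General Recursion Theorem in the form of \autoref{C:wf-prec}: every well-founded coalgebra on $\KVec$ is parametrically recursive. It is worth recalling explicitly why the hypotheses of \autoref{T:wf-prec} are satisfied here: monomorphisms split in $\KVec$ and are therefore preserved by every endofunctor $F$, and $\KVec$ has \smooth monomorphisms (see \autoref{E:uconstr}\ref{E:uconstr:2}). Observe that for this last direction preservation of inverse images is not even used. Chaining the three implications (well-founded) $\Rightarrow$ (parametrically recursive) $\Rightarrow$ (recursive) $\Rightarrow$ (well-founded) shows that all three properties coincide.

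There is no genuine obstacle in this corollary, since the substantive content lives entirely in \autoref{T:wf-prec} and \autoref{T:recwfvec}. The only point requiring a moment's care is confirming that the preservation-of-inverse-images hypothesis of \autoref{T:recwfvec} is precisely the standing hypothesis of the present corollary, so that the direction (recursive) $\Rightarrow$ (well-founded) — the one that genuinely fails without inverse-image preservation, as \autoref{exSq0} shows for the set-functor case — is in fact available.
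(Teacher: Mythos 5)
Your proposal is correct and is exactly the argument the paper intends for this corollary: chain (well-founded) $\Rightarrow$ (parametrically recursive) via \autoref{C:wf-prec}, the trivial implication to (recursive), and \autoref{T:recwfvec} for (recursive) $\Rightarrow$ (well-founded). The hypothesis checks you make (split monomorphisms in $\KVec$, \smooth monomorphisms, and inverse-image preservation only being needed for the last leg) match the paper's setup.
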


\section{Conclusions}

\takeout{%
\lmnote{This needs to change. SM: Ok I changed it in the short version
and copied it here.}
We have presented well-founded coalgebras whose definition 
captures the concept of well-founded induction on an abstract
level. We have also provided a new proof of Taylor's General Recursion Theorem
stating that every well-founded coalgebra is parametrically
recursive. This holds for functors preserving monomorphisms on a
complete and well-founded category with \smooth monomorphisms. In
the category of sets, this even holds for every endofunctor, and the
converse holds for endofunctors preserving inverse images. Coming back
to our theme in this paper, for every set functor the initial
algebra is, equivalently, the terminal well-founded coalgebra.

Finally, we have also provided an iterative construction of the well-founded
part of a given coalgebra. It is carried by the least fixed
point of Jacobs' next time operator. In addition, the well-founded
part yields the coreflection of a coalgebra in the category of
well-founded coalgebras.
}

Well-founded coalgebras introduced by Taylor~\cite{taylor2} have a
compact definition based on an extension of Jacobs' `next time'
operator. Our main contribution is a new proof of Taylor's General
Recursion Theorem that every well-founded coalgebra is recursive,
generalizing this result to all endofunctors preserving monomorphisms
on a complete and well-powered category with \smooth
monomorphisms. For functors preserving inverse images, we also have
seen two variants of the converse implication ``recursive $\Rightarrow$
well-founded'', under additional hypothesis: one due to Taylor for
categories with a subobject classifier, and the second one provided
that the category has universally \smooth monomorphisms and the
functor has a pre-fixed point. Various counterexamples demonstrate
that all our hypotheses are necessary.


%
%

\bibliographystyle{splncs03}
\bibliography{refs}

\end{document}
